\newtheorem{proposition}{Proposition}
\newtheorem{theorem}{Theorem}
\newtheorem*{theorem*}{Theorem}
\newtheorem{lemma}{Lemma}
\newtheorem{corollary}{Corollary}
\newtheorem{definition}{Definition}
\newtheoremstyle{example}{2ex}{\topsep}%
   {\normalfont\footnotesize}
   {}
   {\bfseries}
   {}
   { }
   {\thmname{#1}\thmnumber{ #2} [\thmnote{#3}].}
\theoremstyle{example}
\newtheorem{example}{Example}
\newcommand{\hypo}[1]{\sc{(#1)}}
\newenvironment{hypothesis}[1]{\removelastskip\smallbreak\noindent%
  \def\@currentlabel{\hypo{#1}}%
  {\textbf{Hypothesis}}~{\hypo{#1}}\ \normalfont\normalsize}{\smallbreak\smallskip}
\newenvironment{convention}[1]{\removelastskip\smallbreak\noindent%
  \def\@currentlabel{\hypo{#1}}%
  {\textbf{Convention}}~{\hypo{#1}}\ \normalfont\normalsize}{\smallbreak\smallskip}
\newcounter{DuviiiLastNumberLineofAlgo}
\mathchardef\slGamma="0100
\mathchardef\slDelta="0101
\mathchardef\slTheta="0102
\mathchardef\slLambda="0103
\mathchardef\slXi="0104
\mathchardef\slPi="0105
\mathchardef\slSigma="0106
\mathchardef\slUpsilon="0107
\mathchardef\slPhi="0108
\mathchardef\slPsi="0109
\mathchardef\slOmega="010A
\def\bA{\mathbb{A}}
\def\bC{\mathbb{C}}
\def\bF{\mathbb{F}}
\def\bK{\mathbb{K}}
\def\bN{\mathbb{N}}
\def\bR{\mathbb{R}}
\def\bZ{\mathbb{Z}}
\def\cA{\mathcal{A}}
\def\cB{\mathcal{B}}
\def\cC{\mathcal{C}}
\def\cL{\mathcal{L}}
\def\cS{\mathcal{S}}
\def\cV{\mathcal{V}}
\def\cX{\mathcal{X}}
\def\frakg{\mathfrak{g}}
\def\boldA{\mathbf{A}}
\def\boldB{\mathbf{B}}
\def\boldF{\mathbf{F}}
\def\boldG{\mathbf{G}}
\def\boldS{\mathbf{S}}
\def\boldU{\mathbf{U}}
\def\boldV{\mathbf{V}}
\def\boldGamma{{\mathbf\Gamma}}
\def\boldSigma{{\mathbf\Sigma}}
\def\boldPhi{{\mathbf\Phi}}
\newcommand{\base}{B}
\newcommand{\length}[1]  {\lvert #1\rvert}
\newcommand{\norm}[1]   {\left \|#1\right \|}
\newcommand{\trpse}{{\mathrm{tr}}}
\newcommand{\Bint}[1]{(#1)_{{\base}}}
\newcommand{\Bexp}[1]{(0.#1)_{{\base}}}
\newcommand{\Ostar}{{O\tilde{\phantom{\imath}}}}
\newcommand{\lambdajsr}{\lambda_{*}}
\author{Philippe Dumas}
\date{\today}
\title[Asymptotic behaviour of radix-rational sequences]{Mean
  asymptotic behaviour of radix-rational sequences and dilation
  equations\\ (Extended version)}
\address{Algorithms Project,
INRIA Rocquencourt,
78153 Le Chesnay Cedex, France
}
\email{Philippe.Dumas@inria.fr}
\keywords{ radix-rational sequence, radix-rational series, rational
  formal power series, dilation equation, self-similarity, numeration
  system, asymptotic analysis, divide-and-conquer strategy}
\begin{document}

\begin{abstract}
  The generating series of a radix-rational sequence is a rational
  formal power series from formal language theory viewed through a
  fixed radix numeration system.  For each radix-rational sequence
  with complex values we provide an asymptotic expansion for the
  sequence of its Ces\`aro means. The precision of the asymptotic
  expansion depends on the joint spectral radius of the linear
  representation of the sequence; the coefficients are obtained
  through some dilation equations. The proofs are based on elementary
  linear algebra.
\end{abstract}

\maketitle

\tableofcontents

\section{Introduction}
Radix-rational sequences come to light in various domains of knowledge
and almost every example has been first studied for itself and by
elementary methods. As a result there does not exist a general theorem
about their asymptotic behaviour. Flajolet~\citep{FlGo94,FlGrKiPrTi94}
has developed a method based on the analytic theory of numbers. The
used frame of divide-and-conquer recurrences is wider than ours and
permits to deal with many examples, but even if the underlying idea
has a geometric evidence the application remains delicate. 

Yet these sequences are a mere generalization of classical rational
sequences, that is sequences which satisfy a linear homogeneous
recurrence with constant coefficients. In the same manner they satisfy
a linear homogeneous recurrence with constant coefficients but where
the shift $n\mapsto n+1$ is replaced by the pair of scaling
transformations $n\mapsto 2n$ and $n\mapsto 2n+1$ (or more generally
$n\mapsto\base n+r$, $0\leq r<\base$, if the used radix
is~$\base\geq2$). The first example which comes in
mind~\citep{Trollope68,Delange75} is the binary sum-of-digits
function, that is the number~$s_2(n)$ of~$1$'s in the binary expansion
of an integer~$n$. (\citet{DrGa98} give an extended bibliography.) The
sequence satisfies $s_2(0)=0$, $s_2(2n)=s_2(n)$,
$s_2(2n+1)=s_2(n)+1$. It may seem to the reader that such a sequence
is of limited interest, but it appears in many problems like the study
of the maximum of a determinant of a $n\times n$ matrix with
entries~$\pm1$~\citep{ClLi65} or in a merging process occurring in
graph theory~\citep{McIlroy74}. This example have been greatly
generalized with the number of occurrences of some pattern in the
binary code~\citep{BoCoMo89}, or in the Gray code: \citet{FlRa80}
study the average case of Batcher's odd-even merge by using the
sum-of-digits function of the Gray code.  Among the sequences directly
related to a numeration system the Thue-Morse sequence which writes
$u(n)=(-1)^{s_2(n)}$ is certainly the one which has caused the
greatest number of publications~\citep{AlSh99}. There exist variants
with some subsequences~\citep{Newman69,Coquet83} or with binary
patterns~\citep{BoCoMo89} other than the simple pattern~$1$: the
Rudin-Shapiro sequence associated to the pattern~$11$ was initially
designed to minimize the $L^{\infty}$ norm of a sequence of
trigonometric polynomials with
coefficients~$\pm1$~\citep{Rudin59,Shapiro51}.  The study of the
complexity of algorithms is another source of radix-rational
sequences. The cost~$c_n$ of computing the $n$-th power of a matrix by
binary powering satisfies $c_0=0$, $c_{2n}=c_n+1$,
$c_{2n+1}=c_n+2$. The idea of binary powering has been re-employed
by~\citet{MoOl90} in the context of computations on an elliptic curve
where the subtraction has the same cost than addition.  \citet{SuRe83}
have used the divide-and-conquer strategy to provide heuristics for
the problem of Euclidean matching and this leads them to a
$4$-rational sequence in the worst case. The theory of numbers is
another domain which provides examples like the number of odd binomial
coefficients in row~$n$ of Pascal's triangle~\citep{Stolarsky77} or
the number of integers which are a sum of three squares in the first
$n$ integers~\citep{OsSh89}.

It is the merit of \citet{AlSh92,AlSh03} to have put all these
scattered examples in a common framework. A sequence~$(u_n)$ is
rational with respect to the radix~$\base$ if all the sequences
obtained from it by applying the scaling transforms $n\mapsto\base
n+r$, $0\leq r<\base$, remain in a finite dimensional vector
space~$V$. This leads to the idea of a linear representation with
matrices of size~$d$ if~$V$ is $d$-dimensional. For the examples above
the dimension~$d$ is usually small, say from~$1$ to~$4$, but there
exist examples where $d$ is larger, like in the work
of~\citep{Cassaigne93} which uses $d=30$. In such cases a general
method of study is necessary.

Our aim is to provide asymptotic expansions for radix-rational
sequences. Because the sequences under consideration may have
a chaotic behaviour, we cannot expect an asymptotic expansion in a
usual asymptotic scale for all these sequences. We provide only an
asymptotic expansion in the mean, that is for the Ces\`aro mean
\[
\frac{1}{N}\sum_{n=0}^Nu_n
\]
if we study the sequence $(u_n)$. The way we will follow is based on
elementary linear algebra. This is natural because radix-rational
sequences are defined by linear recurrences and after all the mean
asymptotic behaviour of a radix-rational sequence originates in the
asymptotic expansion of a classical rational sequence, as we shall
prove. Eventually we obtain a general theorem, valid for all
radix-rational sequences. Evidently the asymptotic expansions which
result form this theorem are not significantly different from those of
our predecessors. Roughly speaking, they have the form
\[
\sum_{\alpha>\alpha_*,\,\ell\geq0}N^\alpha\log_\base^\ell(N)\sum_\omega
\omega^{\lfloor\log_\base N\rfloor}\Psi_{\alpha,\ell,\omega}(\log_\base N)+
O(N^{\alpha_*}),
\]
where the $\omega$'s are some complex numbers of modulus~$1$ and the
$\Psi$'s are $1$-periodic functions. The true difference lies in the
fact that the theorem has a precise framework.

The organization of the paper is as follows. In
Section~\ref{Du08vers09:sect:rrs+de} we present the definition of a
radix-rational series and a linear representation $L$, $(A_r)_{0\leq
r<\base}$, $C$ of such a series. We expose the link with the idea of a
rational formal power series in non-commutative variables from the
formal language theory.  Such a series is the main object of study and
a radix-rational series is only a metamorphosis of a formal power
series through the interpretation of integers as their radix
expansions in a given base.  We address the problem of the asymptotic
behaviour for~$K$ large of a vector-valued running sum
\[
\boldS_K(x)=\sum_{\begin{subarray}{c}\length w=K\\ (0.w)_{\base}\leq x\end{subarray}}A_wC,
\]
which depends on the length~$K$ of words and on a real number
$x\in[0,1]$ through the interpretation of words as $\base$-ary
expansions of real numbers from~$\left[0,1\right)$. We give technical
notations and hypotheses useful for this study. Particularly we recall
the idea of joint spectral radius for a family of matrices, because
the joint spectral radius~$\lambdajsr$ of a linear representation
governs the precision of the asymptotic expansion we will obtain. Next
we divide the running sum~$\boldS_K(x)$ by its dominant
behaviour. This gives a vector-valued function~$\boldF_K(x)$ and in
the limit a functional equation appears. Surprisingly we have a
dilation equation and such equations have been deeply studied because
they are useful in the theory of wavelets and in the theory of
refinement schemes.

Section~\ref{Du08vers09:sec:basiclimitthm} is the basic part of the
paper. Theorem~\ref{Du08vers09:thm:existenceF} shows that the
sequence~$(\boldF_K)$ converges uniformly towards the unique
solution~$\boldF$ of the dilation equation. Moreover in the special
case where~$C$ is an eigenvector of a certain matrix~$Q$ relative to
an eigenvalue with modulus~$\rho$ the speed of convergence is
essentially $O((\lambdajsr/\rho)^K)$.

In Section~\ref{Du08vers09:sec:fullexpansion} we elaborate on this
point by considering generalized eigenvectors.  The main result is
Theorem~\ref{Du08vers09:thm:formalasymptoticexpansion} which gives an
asymptotic expansion for~$\boldS_K(x)$ from the linear representation.
The technical hypotheses used as we work out the theorem reflect in
the error term of the expansion which is essentially
$O(\lambdajsr^K)$. This asymptotic expansion has coefficients which
are solutions of some dilation equations.

Section~\ref{Du08vers09:sec:main:periodicfunction} translates
Theorem~\ref{Du08vers09:thm:formalasymptoticexpansion} for rational
formal power series into
Theorem~\ref{Du08vers09:thm:sequenceasymptoticexpansion} for
radix-rational sequences. So we obtain an asymptotic expansion for the
running sums of a given radix-rational sequence (or for its Ces\`aro
mean, it is the same).  The solutions of the dilation equations become
functions which are practically periodic with respect to the logarithm
of the index.  The theorem even though it is general may give an
obvious formula. A typical example is the Thue-Morse sequence for
which we conclude that it is~$O(1)$. This flaw is quite normal because
the running sum of the Thue-Morse sequence is almost the sequence
itself. We illustrate the result with the study of the periodic
function which appears in the asymptotic expansion for the
Rudin-Shapiro sequence and specifically of its symmetries. We compare
our work and the work of Dumont where a similar result appears. At the
end we emphasize the fact that in full generality the coefficients of
the asymptotic expansion are not periodic functions but only
pseudo-periodic functions.

To summarize, this paper gives a theorem, which is new and general,
about the asymptotic behaviour of radix-rational sequences. Moreover
it shows a link with the well developed domain of dilation equations.

\section{Radix-rational series and dilation equations}\label{Du08vers09:sect:rrs+de}
\subsection{Radix-rational series}
For an integer~$n$ whose binary expansion is the word
$n_{K-1}\ldots n_0$, with figures $n_{K-1}$, $\ldots$, $n_0$ in
the set $\cB=\{0,1\}$, the number of ones is
\[
s_2(n)=\sum_{k\geq 0}n_k
\]
and the generating function is
\[
S_2(z)=\sum_{n\geq 0}s_2(n)z^n.
\]
It is easy to verify the following relationships
\[
\forall n\in\bN_{\geq0},\quad s_2(2n)=s_2(n),\quad s_2(2n+1)=s_2(n)+1.
\]
The constant sequence $e=1$ satisfies obviously
\[
\forall n\in\bN_{\geq0},\quad e(2n)=e(n),\quad e(2n+1)=e(n).
\]
It appears that the sequences~$s_2$ and~$e$ generate a
$\bZ_{\geq0}$-module which is left stable by the following linear
right action of the monoid of words~$\cB^*$: the image of a
sequence~$u$ under the action of a figure~$r\in\{0,1\}$ is the
sequence $u.r=(u(2n+r))$. The example leads to the following
definition. We consider a radix $\base\geq 2$ and the associated
alphabet $\cB=\{0,\ldots,\base-1\}$. For a semi-ring~$\bK$, the
monoid~$\cB^*$, equipped with the concatenation, operates on the
$\bK$-module of all sequences with values in~$\bK$ by the map
$(u(n))_{n\geq 0}\mapsto (u(\base n+r))_{n\geq 0}$ for each figure~$r$
in~$\cB$.
\begin{definition}
  A generating function $U(z)=\sum_{n\geq 0}u_nz^n$ whose coefficients
  take their values in a semi-ring~$\bK$ is $\base$-rational (or
  $\base$-recognizable, or $\base$-regular) if the $\bK$-module
  generated by the sequence under the right action of the monoid of
  words~$\cB^*$ is of finite type. It is radix-rational (or
  radix-recognizable, or radix-regular) if it is $\base$-rational for
  some radix~$\base$. The sequence of coefficients~$(u_n)$ is termed
  in the same manner.
\end{definition}
\citet{AlSh92} named such sequences of coefficients "$k$-regular
sequences" ($k$ is the radix). The adjective "regular" is reminiscent
of "regular language" for a computer scientist but it is meaningless
for a mathematician.  The expression "recognizable sequence" has the
same flaw, even if the preceding definition would be more satisfying
with recognizable in place of rational for a computer scientist. To
the contrary "radix-rational sequence" make sense for both computer
scientist and mathematician and we adopt this terminology. The name
{\em radix-rational series\/} has the merit to remind that these
series are a generalization of classical rational functions. These one
admit a representation with only one square matrix, while
$\base$-rational series admits a representation with $\base$ square
matrices, as we shall see just below.

Radix-recognizable series are a metamorphosis of recognizable formal
series. Indeed, let $e^1$, $\ldots$, $e^d$ be a family of sequences
which generates a module left stable under the action of~$\cB^*$ and
containing the sequence of coefficients~$u$ of a radix-rational
series~$U(z)$. Such a family will be called a generating family for
the sequence~$u$ in the sequel. Each of the sequences $e^j.r$, $0\leq
r<\base$, $1\leq j\leq d$, writes as a linear combination
$e^j.r=\sum_i a_{r,i,j}e^i$, not necessarily in a unique manner. The
sequence~$u$ writes $u=\sum_i c_ie^i$, again not necessarily in a
unique manner.
\begin{definition}
  The row vector $L=(e^j(0))$, the matrices $A_r=(a_{r,i,j})$, $0\leq
  r<\base$, and the column vector $C=(c_i)$, all together, is a linear
  representation of the series~$U(z)$ or of the sequence~$u$.
\end{definition}
\noindent
Besides the linear representation defines a recognizable series
over~$\cB^*$ with coefficients in~$\bK$ in the sense of formal
language theory. A formal series, usually written
\[
S=\sum_{w\in\cB^*}(S,w)\, w,
\]
is an application from~$\cB^*$ into~$\bK$, which associates to a
word~$w$ the coefficient $(S,w)$ \citep{BeRe88,Sakarovitch03}. Here we
consider the formal series~$S$ defined by
\[
(S,w)=LA_{w_1}\dotsb A_{w_{K}}C
\]
for a word $w=w_1\ldots w_{K}$. The sequence~$u$ is obtained by
restricting the formal series to the radix~$\base$ expansions of
integers.

\subsection{Notations and hypotheses}
The data is a linear representation $L$, $A=(A_x)_{x\in{\cX}}$,
$C$ of dimension~$d$ with a finite alphabet~$\cX$ of cardinality
greater or equal to~$2$. The coefficients of the matrices are taken
from the field of complex numbers in the most general case. Up to a
permutation, $\cX$ may be taken equal to $\{0,\,1,\,\ldots\,,\base -1\}$
for some integer $\base \geq 2$. The linear representation defines first a
rational formal series~$S$ and second a radix-rational series $U$, by
restriction to the radix~$\base $ expansions of the integers. To abbreviate,
we write $A_w=A_{w_1}\dotsb A_{w_{K}}$ for $w=w_1\ldots w_K$, hence
the writing $(S,w)=LA_wC$. We denote by~$u$ the sequence of
coefficients of~$U$, which means that we have $u(n)=(S,w)$ for the
integer $n=\Bint{n_{K-1}\ldots n_0}$ whose radix~$\base$ expansion
is the word $w=n_{K-1}\ldots n_0$.

For such a given $\bC$-rational formal series~$S$ we consider the
running sum over words~$w$ of length~$K$ submitted to the condition
that the real number with radix~$\base $ expansion $\Bexp w$ is not
greater than~$x$ in~$\left[0,1\right]$
\[
\boldS_K(x)=\sum_{\begin{subarray}{c}\length w = K \\ \Bexp w \leq x \end{subarray}}A_wC
\]
for all nonnegative integer~$K$ and we want to estimate its asymptotic
behaviour when~$K$ tends towards~$+\infty$. With
$Q=A_0+A_1+\dotsb+A_{\base-1}$, it writes
\begin{multline}\label{Du08vers09:eq:SKexpression}
\boldS_K(x)=
\sum_{r_1<x_1}A_{r_1}Q^{K-1}C+
\sum_{r_2<x_2}A_{x_1}A_{r_2}Q^{K-2}C+
\sum_{r_3<x_3}A_{x_1}A_{x_2}A_{r_3}Q^{K-3}C\\
\mbox{}+
\dotsb+
\sum_{r_K\leq x_K}A_{x_1}A_{x_2}\dotsb A_{r_K}C
\end{multline}
if~$x$ admits the radix~$\base$ expansion $x=(0.x_1x_2\dotsb)_\base $. The
formula renders evident the following lemma.
\begin{lemma}\label{Du08vers09:lemma:genrecrelS}
With $Q=A_0+A_1+\dotsb+A_{\base-1}$, the sequence of running
sums~$(\boldS_K)$ satisfies the recursion
\[
\boldS_{K+1}(x)=\sum_{r_1<x_1}A_{r_1}Q^{K}C+A_{x_1}\boldS_K(\base x-x_1),
\]
where~$x_1$ is the first figure in the radix~$\base $ expansion of~$x$
in~$\left[0,1\right)$, with $\boldS_0(x)=C$.
\end{lemma}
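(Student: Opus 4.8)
The plan is to prove the recursion by a first-figure decomposition of the words indexing the defining sum, and to settle the base case directly. Every word $w$ of length $K+1$ factors uniquely as $w = r_1 w'$, where $r_1 \in \cB$ is its leading figure and $w'$ has length $K$; accordingly $A_w = A_{r_1}A_{w'}$, while the encoded real number splits as $\Bexp{w} = (r_1 + \Bexp{w'})/\base$. The constraint $\Bexp{w} \leq x$ is therefore equivalent to $\Bexp{w'} \leq \base x - r_1$. Grouping the definition of $\boldS_{K+1}(x)$ according to the value of $r_1$ then yields
\[
\boldS_{K+1}(x) = \sum_{r_1 \in \cB} A_{r_1} \sum_{\substack{\length{w'} = K \\ \Bexp{w'} \leq \base x - r_1}} A_{w'} C,
\]
so that everything reduces to evaluating the inner sum for each $r_1$.

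First I would fix the leading figure $x_1 = \lfloor \base x\rfloor$ of $x$, so that $\base x - x_1 \in [0,1)$, and split on the position of $r_1$ relative to $x_1$. When $r_1 < x_1$ one has $\base x - r_1 \geq 1$, whereas every length-$K$ word satisfies $\Bexp{w'} \leq 1 - \base^{-K} < 1$; the inner constraint is thus vacuous and the inner sum collapses to $\sum_{\length{w'} = K} A_{w'} C = Q^K C$, using the identity that summing the products over all length-$K$ words gives the $K$-th power of $Q = A_0 + \dots + A_{\base-1}$. When $r_1 = x_1$ the inner constraint $\Bexp{w'} \leq \base x - x_1$ is precisely the one defining $\boldS_K(\base x - x_1)$, contributing $A_{x_1}\boldS_K(\base x - x_1)$. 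When $r_1 > x_1$ one has $\base x - r_1 < 0$, so no word qualifies and the term vanishes. Summing the three cases produces the stated recursion, and the base case $\boldS_0(x) = C$ follows because the only length-$0$ word is empty, encodes $0 \leq x$, and carries the empty product equal to the identity.

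The only step requiring genuine care will be the boundary arithmetic in the case split: the inequalities $\base x - r_1 \geq 1$ for $r_1 < x_1$ and $\base x - r_1 < 0$ for $r_1 > x_1$ must be checked from $\base x - x_1 \in [0,1)$ together with $|x_1 - r_1| \geq 1$, and the observation that the largest value $1 - \base^{-K}$ of $\Bexp{w'}$ stays strictly below $1$ is what makes the first constraint genuinely vacuous. An essentially equivalent route, which is what the author signals by saying that the formula renders the lemma evident, is to substitute the closed expansion \eqref{Du08vers09:eq:SKexpression} for $\boldS_{K+1}(x)$, isolate the $r_1$-term $\sum_{r_1 < x_1} A_{r_1} Q^K C$, factor $A_{x_1}$ out of the remaining $K$ sums, and recognize what is left as the same expansion applied to the argument $\base x - x_1$ whose $\base$-ary figures are $x_2, x_3, \dots$, namely $\boldS_K(\base x - x_1)$.
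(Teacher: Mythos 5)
Your proof is correct and rests on the same idea the paper treats as evident, namely the decomposition of each word of length $K+1$ by its leading figure and the three-way comparison of that figure with $x_1$, which is exactly the structure encoded in the paper's closed expansion~\eqref{Du08vers09:eq:SKexpression}. You simply supply the boundary arithmetic ($\base x - r_1 \geq 1$ versus $\base x - r_1 < 0$, and $\Bexp{w'} \leq 1 - \base^{-K} < 1$) that the paper leaves implicit when it says the formula renders the lemma evident.
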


The matrix~$Q$ is the essential component which governs the mean
asymptotic behaviour of~$S$ first, and of the sequence~$u$
next. Lemma~\ref{Du08vers09:lemma:genrecrelS} leads us to consider the
powers~$Q^K$ of the matrix~$Q$. More precisely, we look at the
dominant term (from the asymptotic point of view) in the vector~$Q^KC$
and we will use the following hypothesis (named
\ref{Du08vers09:hypo:asymptbehav} for asymptotic dominant term).
\begin{hypothesis}{adt}\label{Du08vers09:hypo:asymptbehav}
  The sequence~$(Q^KC)$ admits the asymptotic expansion
\begin{equation}\label{Du08vers09:eq:asymptbehavQKC}
  Q^KC\mathop{=}_{K\to+\infty}R(K)V+O(R'(K))
\end{equation}
  with
\begin{multline}\label{Du08vers09:eq:asymptbehavR}
\frac{R(K+1)}{R(K)}\mathop{=}_{K\to+\infty}\rho\omega\left(1+O\left(\frac{1}{K}\right)\right),\qquad \rho>0,\qquad |\omega|=1,
\\\text{and}\qquad
R'(K)\mathop{=}_{K\to+\infty}R(K)O\left(\frac{1}{K}\right).
\end{multline}
\end{hypothesis}

\noindent
Necessarily the vector~$V$ is an eigenvector of~$Q$ for the
eigenvalue~$\rho\omega$.

In Section~\ref{Du08vers09:sec:fullexpansion}, we will expand the
vector~$C$ over a Jordan basis, and this leads us to consider the
following hypothesis (named
\ref{Du08vers09:hypo:generalizedeigenvector} for generalized
eigenvector). This hypothesis leads to a more particular but more
precise expression of~$Q^KC$ than
Hypothesis~\ref{Du08vers09:hypo:asymptbehav} does.
\begin{hypothesis}{gev}\label{Du08vers09:hypo:generalizedeigenvector}
The family~$(V^{(j)})_{0\leq j<\nu}$ is full rank and satisfies 
$QV^{(0)}=\rho\omega V^{(0)}$ and
$QV^{(j)}=\rho\omega V^{(j)}+V^{(j-1)}$ for $j>0$, with $\rho\geq 0$
and $|\omega|=1$.
\end{hypothesis}
\noindent
At occasion, we will say that~$\nu$ is the height of the generalized
eigenvector~$V^{(\nu-1)}$.

Formula~\eqref{Du08vers09:eq:SKexpression} use products of square
matrices~$A_r$, $0\leq r<\base$. A possible property to bound these
products is the following (named \ref{Du08vers09:hypo:roughspectralradius} for rough spectral radius).
\begin{hypothesis}{rsr}\label{Du08vers09:hypo:roughspectralradius}
  There exists an induced norm~$\norm{\ }$, and a constant~$\lambda$
  with $0<\lambda<\rho$ such that all matrices~$A_r$,
  $0\leq r<\base $, satisfy $\norm{A_r}\leq \lambda$.
\end{hypothesis}
\noindent
We will tacitly use a norm on~$\bC^d$ and the induced norms on square
matrices in order to guarantee the previous hypothesis. As a
consequence we have $\norm{A_{r_1}\dotsb A_{r_K}}\leq \lambda^K$ for
every integer~$K$.

Hypothesis~\ref{Du08vers09:hypo:roughspectralradius} will prove to be
useful, but it is not sufficiently well designed. So we refine it as
follows. We consider all the products~$A_w$ for words~$w$ of a given
length~$T$ and their norms. With the notation
\[
\lambda_T=\max_{\length w = T}\norm{A_w}^{1/T},
\]
the joint spectral radius of the set $A_r$, $0\leq r<\base$, is the
number~\citep{RoSt60,BlKaPrWi08}
\[
\lambdajsr=\lim_{T\to+\infty}\lambda_T.
\]
It is known that the joint spectral radius is not greater than any of
the numbers~$\lambda_T$. Moreover~$\lambdajsr$ is independent of the
used induced norm. The following hypothesis (named \ref{Du08vers09:hypo:jointspectralradius} for joint
spectral radius) is made to replace Hypothesis~\ref{Du08vers09:hypo:roughspectralradius}.
\begin{hypothesis}{jsr}\label{Du08vers09:hypo:jointspectralradius}
  The joint spectral radius~$\lambdajsr$ of the family of matrices
  $(A_r)_{0\leq r<\base}$ is smaller than the number~$\rho$, that is
  $\lambdajsr<\rho$.
\end{hypothesis}
\noindent
It is known that~$\lambdajsr$ is difficult to compute~\citep{TsBl97}.
For our purpose it is sufficient to find an induced norm and a~$T$ such
that $\lambda_T<\rho$. To the sake of clarity we will use a superior
index to show the used norm if necessary, like $\lambda_T^{(1)}$ if we
use the absolute maximum column norm induced by the norm of index~$1$
of~$\bC^d$. In the sequel, we will say that~$\lambdajsr$ is attained
if there exists an induced norm and an integer~$T$ such that
$\lambdajsr=\lambda_T$.

Besides, the joint spectral radius depends on the linear
representation. To each representation is associated a finite
dimensional vector subspace of the space of formal series left stable
by the operators $S\mapsto S.r^{-1}=\sum_w(S,wr)$. The representation
is reduced if the subspace is as small as
possible~\citep{BeRe88}. Evidently the smaller is the subspace the
smaller is the joint spectral radius and all reduced representations
provide the same~$\lambdajsr$ because they are isomorphic. Hence it is
better to always use a reduced representation and the joint spectral
radius associated to a reduced representation depends only on the
formal series; it is intrinsic. But it may be easier to compute the
joint spectral radius for a non reduced representation, at the risk of
obtaining a too large value.

\begin{example}[Rudin-Shapiro sequence]\label{Du08vers09:ex:RudinShapiro}
  The Rudin-Shapiro sequence may be defined as $u(n)=(-1)^{e_{2\,;
      11}(n)}$ where $e_{2\,; 11}(n)$ is the number of (possibly
  overlapping) occurrences of the pattern~$11$ in the binary expansion
  of the integer~$n$ \citep{BrCa70}. This sequence was defined
  independently by \citet{Shapiro51} and~\citet{Rudin59} to solve a
  problem of optimality about the $L^{\infty}$ norm of trigonometric
  polynomials with coefficients~$\pm1$. It is $2$-rational: it admits
  the generating family ~$(u(n),u(2n+1))$ and the reduced linear
  representation
\[
L=\left(\begin{array}{cc}
1 & 1
	\end{array}\right),\qquad
A_0=\left(\begin{array}{cc}
1 & 1 \\ 0 & 0
	\end{array}\right),\qquad
A_1=\left(\begin{array}{cc}
0 & 0 \\ 1 & -1
	\end{array}\right),\qquad
C=\left(\begin{array}{c}
1 \\ 0
	\end{array}\right).
\]
The matrix
\[
Q=A_0+A_1=\left(\begin{array}{cc}
1 & 1 \\ 1 & -1
	\end{array}\right)
\]
has two eigenvalues~$\pm\sqrt 2$ which have the same absolute
value. The vector $Q^KC$ has expression
\[
Q^KC=
\frac{2^{K/2}}{4}
\left(\begin{array}{cc}
(2+2^{1/2})+(-1)^K(2-2^{1/2})\\
2^{1/2}(1+(-1)^K)
	\end{array}\right)
\]
and Hypothesis~\ref{Du08vers09:hypo:asymptbehav} is not satisfied. The
sequence is $4$-rational too with representation~\citep{AlSh03},
relative to the generating family $(u(n),u(4n+2))$,
\begin{multline*}
L=\left(\begin{array}{cc}
1 & 1
	\end{array}\right),\qquad
A^{(2)}_0=A_0^2=\left(\begin{array}{cc}
1 & 1 \\ 0 & 0
	\end{array}\right),\qquad
A^{(2)}_1=A_0A_1=\left(\begin{array}{cc}
1 & -1 \\ 0 & 0 
	\end{array}\right),\\
A^{(2)}_2=A_1A_0=\left(\begin{array}{cc}
0 & 0 \\ 1 & 1
	\end{array}\right),\qquad
A^{(2)}_3=A_1^2=\left(\begin{array}{cc}
0 & 0 \\ -1 & 1
	\end{array}\right),\qquad
C=\left(\begin{array}{c}
1 \\ 0
	\end{array}\right).
\end{multline*}
The matrix $Q^{(2)}=A_0^2+A_0A_1+A_1A_0+A_1^2=(A_0+A_1)^2=Q^2$ has a
dominant eigenvalue~$\rho=2$ and it is evident that
Hypothesis~\ref{Du08vers09:hypo:asymptbehav} is satisfied for this
linear representation. Because~$A_0$ and~$A_1$ maps each vector of the
canonical basis onto a vector of the canonical basis or its negative,
we have $\lambda_T=1$ for every $T\geq 1$ if we use a norm which gives
the same value for both vectors of the canonical basis. Hence
Hypothesis~\ref{Du08vers09:hypo:roughspectralradius} is satisfied for the
radix~$4$ representation above and
Hypothesis~\ref{Du08vers09:hypo:jointspectralradius} too because
$\lambdajsr=1$.
\end{example}

\subsection{Self-similarity}
For any positive integer~$K$, let us introduce the
function~$\boldF_K$ from~$[0,1]$ into~$\bC^d$ defined by
\[
\boldF_K(x)=\frac{1}{R(K)}\boldS_K(x).
\]
Lemma~\ref{Du08vers09:lemma:genrecrelS} translates into the equation
\begin{equation}\label{Du08vers09:eq:genrecrelF}
\boldF_{K+1}(x)=\frac{1}{R(K+1)}\sum_{r_1<x_1}A_{r_1}Q^{K}C+\frac{R(K)}{R(K+1)}A_{x_1}\boldF_K(\base x-x_1).
\end{equation}
We may consider the operator $\cL_K$ of the space of continuous
functions from~$[0,1]$ into $\bC^d$ defined by
\[
\cL_K \Phi(x)=\frac{1}{R(K+1)}\sum_{r_1<x_1}A_{r_1}Q^{K}C+\frac{R(K)}{R(K+1)}A_{x_1}\Phi(\base x-x_1).
\]
(In all the paper $x_1$ is the first digit of~$x\in\left[0,1\right)$.) 
Equation~\eqref{Du08vers09:eq:genrecrelF} rewrites
\[
\boldF_{K+1}(x)=\cL_K \boldF_K(x).
\]
According to Hypothesis~\ref{Du08vers09:hypo:asymptbehav}, the
sequence of operators~$\cL_K$ converges weakly towards the
operator~$\cL$ defined by
\begin{equation}\label{Du08vers09:def:operatorL}
\cL \Phi(x)=\frac{1}{\rho\omega}\sum_{r_1<x_1}A_{r_1}V+\frac{1}{\rho\omega}A_{x_1}\Phi(\base x-x_1)
\end{equation}
and we will first study the equation $\cL \Phi=\Phi$. 

In the sequel, the following system of equations, whose unknown is a
function from the segment~$[0,1]$ into the space~$\bC^d$,
\begin{itemize}
\item[--] $\Phi(0)=0$, $\Phi(1)=V$, 
\item[--] for every figure~$r$ of the radix~$\base$ system 
  and for~$x$ in $\left[r/\base ,(r+1)/\base \right)$,
\begin{equation}\label{Du08vers09:eq:dilationequation}
  \Phi(x)=\frac{1}{\rho\omega}\sum_{r_1<r}A_{r_1}V+\frac{1}{\rho\omega}A_{r}\Phi(\base x-r)
\end{equation}
\end{itemize}
will be named {\em the basic dilation equation}.
\begin{proposition}\label{Du08vers09:prop:basicprop}
  Let $L$, $(A_r)_{0\le r<\base}$, $C$ be a linear representation of
  dimension~$d$ for the radix~$\base$. Under
  Hypotheses~\ref{Du08vers09:hypo:asymptbehav}
  and~\ref{Du08vers09:hypo:roughspectralradius} the problem
  \begin{itemize} 
\item[--] $\Phi$ is a continuous function from the
  segment~$[0,1]$ into the space~$\bC^d$, 
\item[--] $\Phi$ is a solution of the basic dilation equation,
  \end{itemize}
  has a unique solution~$\boldF$.
\end{proposition}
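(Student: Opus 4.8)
The natural strategy is a fixed-point argument on the operator $\cL$ defined in \eqref{Du08vers09:def:operatorL}, working in the Banach space $E$ of continuous functions from $[0,1]$ into $\bC^d$ equipped with the sup-norm $\norm{\Phi}_\infty=\sup_{x\in[0,1]}\norm{\Phi(x)}$. The basic dilation equation \eqref{Du08vers09:eq:dilationequation} is precisely the equation $\cL\Phi=\Phi$ read piece-by-piece on each interval $[r/\base,(r+1)/\base)$, together with the boundary conditions $\Phi(0)=0$, $\Phi(1)=V$. So I would first check that $\cL$ maps $E$ into $E$, i.e.\ that the piecewise definition glues continuously at the breakpoints $r/\base$, and that the boundary values come out right; the only delicate junctions are the interior points $r/\base$ for $1\le r<\base$ and the endpoints $0$ and $1$. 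At $x=0$ the formula forces $\Phi(0)=\frac{1}{\rho\omega}A_0\Phi(0)$, and Hypothesis~\ref{Du08vers09:hypo:roughspectralradius} ($\norm{A_0}\le\lambda<\rho$) guarantees $I-\frac{1}{\rho\omega}A_0$ is invertible, so $\Phi(0)=0$ is the only consistent value; at $x=1$ the telescoping sum over all $r_1<\base-1$ plus the last term produces $\frac{1}{\rho\omega}Q\Phi(1)$ up to the eigenvector relation, and since $V$ is an eigenvector of $Q$ for $\rho\omega$ (stated after Hypothesis~\ref{Du08vers09:hypo:asymptbehav}) the value $\Phi(1)=V$ is forced and consistent.

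The heart of the argument is a contraction estimate. Given two continuous functions $\Phi,\Psi$, their difference after applying $\cL$ is, on each interval $[r/\base,(r+1)/\base)$,
\[
\cL\Phi(x)-\cL\Psi(x)=\frac{1}{\rho\omega}A_{r}\bigl(\Phi(\base x-r)-\Psi(\base x-r)\bigr),
\]
so that $\norm{\cL\Phi-\cL\Psi}_\infty\le\frac{1}{\rho}\max_r\norm{A_r}\,\norm{\Phi-\Psi}_\infty\le\frac{\lambda}{\rho}\norm{\Phi-\Psi}_\infty$. Since $0<\lambda<\rho$ by Hypothesis~\ref{Du08vers09:hypo:roughspectralradius}, the ratio $\lambda/\rho<1$ and $\cL$ is a contraction on $E$. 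By the Banach fixed-point theorem $\cL$ has a unique fixed point $\boldF$ in $E$, which is exactly the unique continuous solution of the basic dilation equation. I would then record that the iterates $\cL^K\Phi_0$ of any starting function converge uniformly to $\boldF$, which is the form of the statement needed later when comparing with the sequence $(\boldF_K)$ and passing from $\cL_K$ to $\cL$.

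The main obstacle is not the contraction estimate itself, which is routine once Hypothesis~\ref{Du08vers09:hypo:roughspectralradius} is in force, but verifying that $\cL$ genuinely lands in the space of \emph{continuous} functions, so that the fixed-point theorem applies in $E$ rather than in the larger space of merely bounded functions. Continuity inside each half-open interval is immediate from the continuity of $\Phi$ and the affine change of variable $x\mapsto\base x-r$; the work is entirely at the junction points $r/\base$. There one must match the left limit as $x\uparrow r/\base$ (governed by the formula on the $(r-1)$-st interval evaluated at the right endpoint, i.e.\ $\base x-(r-1)\to1$ so $\Phi\to\Phi(1)=V$) against the right limit as $x\downarrow r/\base$ (governed by the $r$-th interval at its left endpoint, where $\base x-r\to0$ so $\Phi\to\Phi(0)=0$); the two one-sided formulas must agree, and I expect this reduces exactly to the telescoping identity $\frac{1}{\rho\omega}\sum_{r_1<r-1}A_{r_1}V+\frac{1}{\rho\omega}A_{r-1}V=\frac{1}{\rho\omega}\sum_{r_1<r}A_{r_1}V$, which is a triviality. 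Thus $\cL$ preserves continuity, and the argument closes; I would devote most of the written proof to this gluing verification and to the two boundary computations, treating the contraction inequality as the quick final step.
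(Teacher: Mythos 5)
Your overall strategy --- realize the basic dilation equation as the fixed-point equation $\cL\Phi=\Phi$, verify that $\cL$ preserves continuity by matching one-sided limits at the points $r/\base$, and conclude by the Banach fixed-point theorem from the contraction estimate $\norm{\cL\Phi-\cL\Psi}_\infty\leq(\lambda/\rho)\norm{\Phi-\Psi}_\infty$ furnished by Hypothesis~\ref{Du08vers09:hypo:roughspectralradius} --- is exactly the paper's. But your choice of ambient space contains a genuine flaw: $\cL$ does \emph{not} map the space $E$ of all continuous functions from $[0,1]$ into $\bC^d$ into itself, so the fixed-point theorem cannot be invoked on $E$. Concretely, at an interior junction $r/\base$ ($1\leq r<\base$) the left limit of $\cL\Phi$ is $\frac{1}{\rho\omega}\sum_{s<r-1}A_sV+\frac{1}{\rho\omega}A_{r-1}\Phi(1)$, while its value on the right is $\frac{1}{\rho\omega}\sum_{s<r}A_sV+\frac{1}{\rho\omega}A_r\Phi(0)$; these agree if and only if $A_{r-1}\bigl(\Phi(1)-V\bigr)=A_r\Phi(0)$, which fails for a generic continuous $\Phi$ (take $\Phi$ constant equal to $V/2$: one would need $A_{r-1}V=-A_rV$). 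Your own gluing verification silently assumes $\Phi(0)=0$ and $\Phi(1)=V$ for the \emph{input} function --- that is, it proves that $\cL$ preserves not $E$ but the affine subspace $\cC=\{\Phi\in E:\Phi(0)=0,\ \Phi(1)=V\}$ --- and you then revert to $E$ when applying Banach, which is where the argument breaks.

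The repair is precisely what the paper does: take as ambient space the subspace $\cC$, which is closed in $E$ and hence complete; your gluing computation (the telescoping identity at interior junctions, together with the eigenvector relation $QV=\rho\omega V$ at the right endpoint) shows $\cL(\cC)\subset\cC$, and your contraction estimate then yields a unique fixed point in $\cC$. Since every continuous solution of the basic dilation equation lies in $\cC$ by definition (the conditions $\Phi(0)=0$, $\Phi(1)=V$ are part of the equation), uniqueness among all continuous functions follows. Your supplementary observation that the boundary values are forced at any fixed point --- by invertibility of $\operatorname{I}_d-\frac{1}{\rho\omega}A_0$ and of $\rho\omega\operatorname{I}_d-A_{\base-1}$, both consequences of $\lambda<\rho$ --- is correct and is a nice remark absent from the paper, but it does not rescue the argument as written: it constrains what a fixed point must look like, whereas the obstruction is that $\cL$ fails to be an operator from $E$ into $E$ in the first place.
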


Before we go further we have to do a simple remark. If we multiply the
square matrices~$A_r$, $0\leq r<\base$, by a nonzero scalar~$\alpha$,
the eigenvalue~$\rho\omega$ is multiplied by~$\alpha$, the running
sum~$\boldS_K(x)$ and the dominant asymptotic behaviour~$R(K)$ are
multiplied by~$\alpha^K$. It follows that~$\boldF_K$ and the
operator~$\cL_K$ are unchanged. In the same manner the operator~$\cL$
is not modified and the solution of the fixed point problem described
in the previous proposition remains the same. This permits us to consider
that the modulus~$1$ number~$\omega$ is equal to~$1$ in order to prove
the assertions. Even we may assume that $\rho\omega$ is equal to~$1$
but we prefer to keep in mind the modulus~$\rho$ of the
eigenvalue~$\rho\omega$. (Nevertheless see the next subsection.)

\begin{proof}
  According to the previous remark, we may assume $\omega=1$ by
  multiplying all the square matrices ~$A_r$, $0\leq r<\base$, by the
  conjugate number~$\overline \omega$.

  The space of continuous functions from $[0,1]$ into the
  space~$\bC^d$, equipped with the norm of the maximum $\norm \Phi
  _{\infty}=\max_x \norm{\Phi(x)}$, is a complete normed
  space. (Recall that Hypothesis~\ref{Du08vers09:hypo:roughspectralradius}
  assumes that we have chosen a norm on~$\bC^d$.) The continuous
  functions~$\Phi$ which satisfy $\Phi(0)=0$ and $\Phi(1)=V$ are the
  elements of a closed, hence complete, subspace~$\cC$ of this
  complete space. The equation of the problem appears as a fixed point
  equation $\Phi= {\mathcal L}\Phi$. It is sufficient to see that the
  subspace~$\cC$ is left invariant by~$\mathcal L$ and that~$\mathcal
  L$ is a contraction to prove the assertion.

  We must verify that for a continuous~$\Phi$ in~$\cC$ the function
  $\Psi=\cL \Phi$ is a member of~$\cC$. According to the piecewise
  definition of~$\Psi$, we have to consider the left and right limits
  of~$\Psi$ at the points~$r/\base$ for $0\leq r\leq \base$. The
  definition of~$\Psi$ and the continuity of~$\Phi$ give immediately
\begin{multline*}
\Psi(0)=\frac{1}{\rho}A_0\Phi(0)=0,\\
\Psi(1-0)=\frac{1}{\rho}\sum_{0\leq
  s<{\base'}}A_sV+\frac{1}{\rho}A_{\base'}V=\frac{1}{\rho}QV=V,\quad
\Psi(1)=V,
\end{multline*}
and for $0<r<\base$
\begin{multline*}
\Psi(\frac{r}{\base}+0)=\Psi(\frac{r}{\base})=\frac{1}{\rho}\sum_{0\leq s<r}A_sV,\\
\Psi(\frac{r}{\base}-0)=\frac{1}{\rho}\sum_{0\leq
  s<r-1}A_sV+\frac{1}{\rho}A_s\Phi(1)=\frac{1}{\rho}\sum_{0\leq
  s<r}A_sV.
\end{multline*}
  The constraints are satisfied and the subspace~$\cC$ is stable. 

  If we have two functions~$\Phi_1$ and~$\Phi_2$ in~$\cC$,
  let~$\Psi_1$ and~$\Psi_2$ be their images by~$\cL$. From
  $\Psi_1(x)-\Psi_2(x)=(1/\rho)A_r(\Phi_1(\base x-r)-\Phi_2(\base
  x-r))$ follows the inequality $\norm{\Psi_1-\Psi_2}\leq
  (\lambda/\rho)\norm{\Phi_1-\Phi_2}$ because the norm of each
  matrix~$A_r$ is bounded by~$\lambda$. As a consequence of
  Hypothesis~\ref{Du08vers09:hypo:roughspectralradius} the
  operator~$\mathcal L$ is a contraction.
\end{proof}

In fact Hypotheses~\ref{Du08vers09:hypo:asymptbehav}
and~\ref{Du08vers09:hypo:roughspectralradius} are not necessary to
conclude that the solution of the basic dilation equation is
unique. A drop of regularity is sufficient.
\begin{lemma}\label{Du08vers09:lemma:functequnicity}
The basic dilation equation may have only one solution under the
sole hypothesis $\rho>0$, $|\omega|=1$ if we ask for a function
continuous on the right.
\end{lemma}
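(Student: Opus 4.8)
The plan is to work with the difference $\Delta=\Phi_1-\Phi_2$ of two right\nobreakdash-continuous solutions and to show that $\Delta\equiv 0$. Since the inhomogeneous term $\frac{1}{\rho\omega}\sum_{r_1<r}A_{r_1}V$ in \eqref{Du08vers09:eq:dilationequation} is the same for both solutions, it cancels, and $\Delta$ satisfies the purely homogeneous dilation equation $\Delta(x)=\frac{1}{\rho\omega}A_{x_1}\Delta(\base x-x_1)$ for every $x\in[0,1)$, together with the boundary values $\Delta(0)=0$ and $\Delta(1)=0$. The only place where the hypotheses $\rho>0$, $|\omega|=1$ enter is to guarantee that the scalar $1/(\rho\omega)$ is well defined; notably neither Hypothesis~\ref{Du08vers09:hypo:roughspectralradius} nor a contraction estimate is available here, so the fixed-point argument of Proposition~\ref{Du08vers09:prop:basicprop} cannot be reused.

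First I would read off the values of $\Delta$ at the base\nobreakdash-$\base$ rationals, purely algebraically and with no analytic input. For a finite word $w=w_1\dotsb w_K$ the map $x\mapsto\base x-x_1$ strips the leading digit, so iterating the homogeneous equation $K$ times along the expansion of $\Bexp{w}$ gives $\Delta(\Bexp{w})=(\rho\omega)^{-K}A_w\Delta(0)=0$. Since every base\nobreakdash-$\base$ rational of $[0,1)$ is of the form $\Bexp{w}$ for some finite word $w$, this already shows that $\Delta$ vanishes on a dense subset of $[0,1)$. The important feature of this step is that it requires no bound whatsoever on $\norm{A_w}$: the value at a terminating expansion is determined outright, because the iteration terminates after finitely many steps at the known value $\Delta(0)=0$.

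The final step is to pass from the dense set of base\nobreakdash-$\base$ rationals to the whole interval. Given any $y\in[0,1)$ I would pick a sequence of base\nobreakdash-$\base$ rationals $z_n\downarrow y$ with $z_n>y$, which is possible by density; right\nobreakdash-continuity at $y$ then yields $\Delta(y)=\lim_n\Delta(z_n)=0$. Together with the boundary value $\Delta(1)=0$ this gives $\Delta\equiv 0$ on $[0,1]$, that is $\Phi_1=\Phi_2$, which is the asserted uniqueness.

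The one point that has to be handled with care is the direction in which right\nobreakdash-continuity is invoked: each point must be approached from above by terminating expansions, and this is exactly why one\nobreakdash-sided regularity on the right is enough and full continuity is unnecessary. As in the remark preceding Proposition~\ref{Du08vers09:prop:basicprop}, one could normalise $\omega=1$ from the start, but this is not even needed, since the invertible factor $(\rho\omega)^{-K}$ is annihilated by $\Delta(0)=0$ and plays no role in the conclusion.
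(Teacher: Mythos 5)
Your proof is correct and follows essentially the same route as the paper: the paper also observes that the recursion terminates at $x=0$ for every $\base$-adic point, so the equation pins down the value there, and then invokes density of the $\base$-adic numbers together with right continuity. Your reformulation via the difference $\Delta=\Phi_1-\Phi_2$ satisfying the homogeneous equation is only a cosmetic variant of this argument.
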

\begin{proof}
  For a $\base$-adic number, that is a number~$x$ whose radix~$\base$
  expansion is finite, the dilation equation gives the value~$F(x)$
  because the recursion formula provided by the dilation equation
  terminates on the basic case $x=0$. But the set of $\base$-adic
  numbers is dense in~$[0,1]$ and the function is assumed to be right
  continuous at every point. Hence it is completely determined.
\end{proof}

\begin{example}[Regular self-similar functions]\label{Du08vers09:ex:regularF}
  Often the functional equations
  like~\eqref{Du08vers09:eq:dilationequation} are presented as leading
  necessarily to chaotic functions, but some very regular functions
  may satisfy such a self-similarity property. As an example consider
  the sequence with takes value~$1$ on all the multiple of~$3$ and~$0$
  for the others integers. It is rational in the classical sense and
  its generating function is rational with poles which are all roots
  of the unity. As a consequence~\citep[Th.~16.4.3, p.~446]{AlSh03}
  the sequence is rational with respect to every radix. Here is a
  binary linear representation~\citep[p.~6]{Sakarovitch03},
\[
L=\left(\begin{array}{ccc}1&0&0\end{array}\right),\quad
A_0=\left(\begin{array}{ccc}1&0&0\\0&0&1\\0&1&0\end{array}\right),\quad
A_1=\left(\begin{array}{ccc}0&1&0\\1&0&0\\0&0&1\end{array}\right),\quad
C=\left(\begin{array}{c}1\\0\\0\end{array}\right).
  \]
  The system satisfied by~$\boldF$ is
\[
\left\{
\begin{array}{l}
  F_1(x)=\displaystyle\frac{1}{2}F_1(2x),\\[1.5ex]
  F_2(x)=\displaystyle\frac{1}{2}F_3(2x),\\[1.5ex]
  F_3(x)=\displaystyle\frac{1}{2}F_2(x),
\end{array}\right.
\;\text{for $0\leq x<\frac{1}{2}$;}\quad
\left\{
\begin{array}{l}
  F_1(x)=\displaystyle\frac{1}{2}+\frac{1}{2}F_2(2x-1),\\[1.5ex]
  F_2(x)=\displaystyle\frac{1}{2}+\frac{1}{2}F_1(2x-1),\\[1.5ex]
  F_3(x)=\displaystyle\frac{1}{2}+\frac{1}{2}F_3(2x-1).
\end{array}\right.
\; \text{for $0\leq x<1$.}
\]
  We find immediately the solution
  $\boldF(x)=\left(\begin{array}{ccc}x&x&x\end{array}\right)^{\trpse}$.
\end{example}
\citet{DiHa99} have studied the distributional solutions with a 
bounded support of a dilation equation. The idea is to use an
antiderivative of sufficiently high order~$n$. This makes contracting
the operator behind the equation. 

\subsection{Wavelets and refinement schemes}
The basic dilation equation enters into the domain of what is called
a {\em two-scale difference equation}, namely
\[
\varphi(x)=\sum_{n\in{\bZ}}c_n\varphi(2x-n).
\]
(For the sake of simplicity, we limit ourselves to radix $\base=2$ in
this subsection.)  These equations have been heavily studied because
they appear in the theory of wavelets to define a {\em scale function\/}
and in the theory of refinement schemes of computer graphics to define
a {\em refinement function\/}. \citet{DaLa91} provide an expository of
their occurrences and a bibliography. See also~\citep{HeCo96}.

For compactly supported wavelets the previous sum is finite and the
equation may be rewritten in a way which looks like our dilation
equation. We follow~\citep[p.1036]{DaLa92}
or~\citep[p.~235]{Daubechies92}. The equation 
\[
\varphi(x)=\sum_{n=0}^N c_n\varphi(2x-n)
\]
is translated into an equation
\[
\Phi(x)=\left\{
\begin{array}{ll}
T_0\Phi(2x) & \text{if $0\leq x\leq 1/2$,}\\
T_1\Phi(2x-1)& \text{if $1/2\leq x\leq 1$,}
\end{array}
\right.
\] 
with a vector-valued unknown function
\[
\Phi(x)=\left(\begin{array}{ccccc}\varphi(x)&\varphi(x+1)&
    \varphi(x+2) & \ldots & \varphi(x+N-1)
  \end{array}\right)^{\trpse}
\]
and square matrices $T_0=(c_{2i-j-1})_{1\leq i,j\leq N}$,
$T_1=(c_{2i-j})_{1\leq i,j\leq N}$ (it is assumed $c_n=0$ for~$n$
outside $[0,N]$).  Scalars~$c_n$ are constrained by
$\sum_mc_{2n}=\sum_nc_{2n+1}=1$ and the number~$1$ is an eigenvalue
for both~$T_0$ and~$T_1$, and for $M=(c_{2i-j})_{1\leq i,j\leq N-1}$.
We consider a right eigenvector~$W=(w_i)_{1\leq i\leq N-1}$ for~$M$
(an additional condition imposes that~$1$ is a simple eigenvalue, so
there is essentially only one possibility for~$W$). We extend~$W$ by
$w_0=0$, $w_N=0$ and define~$V=(v_i)_{1\leq i\leq N}$ by $v_i=w_{i-1}$
for $1\leq i\leq N$, so that~$W$ is an eigenvector for~$T_0$ and~$V$
is an eigenvector for~$T_1$ relative to the eigenvalue~$1$. Boundary
conditions are added to the equation, namely $\Phi(0)=W$, $\Phi(1)=V$,
which guarantee the well definition of the equation and the continuity
of the solution~$\varphi$ (under some conditions on the~$c_n$'s). 

Besides our dilation equation writes
\[
\boldF(x)=\left\{
\begin{array}{ll}
T_0\boldF(2x) & \text{if $0\leq x\leq 1/2$,}\\
T_0V+T_1\boldF(2x-1)& \text{if $1/2\leq x\leq 1$,}
\end{array}
\right.
\]
where matrices~$T_0$ and~$T_1$ are defined by
\[
T_0=\frac{1}{\rho\omega}A_0,\qquad T_1=\frac{1}{\rho\omega}A_1,
\]
and~$V$ is an eigenvector for $T_0+T_1$  relative to the
eigenvalue~$1$. Here the boundary conditions are $\boldF(0)=0$ and
$\boldF(1)=V$. 

Evidently both equations are very akin. The two-scale difference
equation for the scale function of wavelets is homogeneous while our
dilation equation is not, but their linear parts are the
same. Inhomogeneous dilation equations have been
studied~\citep{StZh98}, because they are useful in the construction of
wavelets on a finite interval and of multiwavelets.  Nevertheless this
contrast is an illusion: if we extend the function~$\boldF$ as a
continuous function over the whole real line by making it constant on
$\left(-\infty,0\right]$ with value~$0$ and constant
on~$\left[1,+\infty\right)$ with value~$V$, the dilation equation
rewrites
\[
\boldF(x)=T_0\boldF(2x)+T_1\boldF(2x-1)
\]
for~$x$ real. The writing of the dilation equation by cases is more
concrete, but the homogeneous version above is more compact and more
practical for proofs. So we will made use of the following convention
(named \ref{Du08vers09:hypo:homogequconv} for homogeneous equation
convention) at occasion.
\begin{convention}{hec}\label{Du08vers09:hypo:homogequconv}
  The solution of a basic dilation equation is extended to the whole
  real line as a continuous function constant on the left of~$0$ and
  on the right of~$1$.
\end{convention}

The eigenvalue~$1$ appears in both cases. The boundary conditions are
not of the same form and the matrices~$T_0$ and~$T_1$ for the wavelets
have a very special structure, while the matrices~$A_0$ and~$A_1$ of a
linear representation are not constrained in our study. As a result,
even if the computations are not exactly the same, the ideas which
work for wavelets work too for rational series. For example, the basic
idea of the {\em cascade algorithm}~\citep[\S~6.5,
p.~206]{Daubechies92} or of the refinement schemes~\citep{DyLe02},
which computes the value of a scale function for dyadic numbers,
applies here. We have yet used it in
Lemma~\ref{Du08vers09:lemma:functequnicity}, and evidently to draw the
pictures in the paper.  In the same manner, the key point for the
existence and uniqueness of the solution of these two-scale difference
equations is the occurrence of a contracting
operator~\cite[Sec.~4]{DaLa91}. The same idea have appeared before in
\citep{hutchinson81:_fract_self_simil} which describes a construction
of self-similar parameterized curves (and the construction of the
sequence~$(\boldG_K)$ in the proof of
Lemma~\ref{Du08vers09:thm:Holderbasic} below is of the type described
in its~\S~3.5).

\section{Basic limit theorem}\label{Du08vers09:sec:basiclimitthm}
\subsection{Uniform convergence}\label{Du08vers09:subsec:unifconvergence}
We are now in position to prove a first result about the asymptotic
behaviour of the running sums $\boldS_K$.
\begin{proposition}\label{Du08vers09:thm:basicproposition}
Under Hypotheses~\ref{Du08vers09:hypo:asymptbehav}
and~\ref{Du08vers09:hypo:roughspectralradius}, the sequence~$(\boldF_K)$
converges uniformly towards the function~$\boldF$ defined in
Proposition~\ref{Du08vers09:prop:basicprop}.
\end{proposition}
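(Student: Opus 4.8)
The plan is to show that $\boldF_K \to \boldF$ uniformly by combining two facts: first, that $\boldF$ is the exact fixed point of the limit operator $\cL$ (from Proposition~\ref{Du08vers09:prop:basicprop}), and second, that the $\boldF_K$ satisfy the recursion $\boldF_{K+1} = \cL_K \boldF_K$ where $\cL_K \to \cL$ in a suitable sense. The natural strategy is to estimate $\norm{\boldF_{K+1} - \boldF}_\infty$ by inserting $\cL \boldF_K$ as an intermediate term and using the triangle inequality:
\[
\norm{\boldF_{K+1} - \boldF}_\infty = \norm{\cL_K \boldF_K - \cL \boldF}_\infty
\leq \norm{\cL_K \boldF_K - \cL \boldF_K}_\infty + \norm{\cL \boldF_K - \cL \boldF}_\infty.
\]
The second summand is controlled by the contraction property established in the proof of Proposition~\ref{Du08vers09:prop:basicprop}: it is bounded by $(\lambda/\rho)\norm{\boldF_K - \boldF}_\infty$. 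The first summand measures how far the operator $\cL_K$ is from its limit $\cL$ when applied to $\boldF_K$, and this is where Hypothesis~\ref{Du08vers09:hypo:asymptbehav} enters.

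First I would reduce to the normalized case $\omega = 1$ exactly as in the proof of Proposition~\ref{Du08vers09:prop:basicprop}, so that the contraction constant is genuinely $\lambda/\rho < 1$. Then I would write out $\cL_K \Phi(x) - \cL \Phi(x)$ explicitly from the two definitions and bound it uniformly over $x$ and over $\Phi$ ranging in a bounded set. Comparing term by term, the difference involves two contributions: the discrepancy between $\frac{1}{R(K+1)}\sum_{r_1<x_1} A_{r_1} Q^K C$ and $\frac{1}{\rho}\sum_{r_1<x_1} A_{r_1} V$, and the discrepancy between the coefficients $\frac{R(K)}{R(K+1)}$ and $\frac{1}{\rho}$ multiplying the $A_{x_1}\Phi(\base x - x_1)$ term. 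Using \eqref{Du08vers09:eq:asymptbehavQKC}, we have $Q^K C = R(K)V + O(R'(K))$ with $R'(K) = R(K)\,O(1/K)$, and from \eqref{Du08vers09:eq:asymptbehavR} we have $R(K)/R(K+1) = \rho^{-1}(1 + O(1/K))$. Substituting these, every term in $\cL_K \Phi - \cL \Phi$ is seen to be $O(1/K)$ uniformly, since $\norm{A_{r_1}}\leq\lambda$ bounds the finitely many matrix factors and $\norm{\boldF_K}_\infty$ stays bounded (the sequence lives in the complete subspace $\cC$, and I would note that the $\boldF_K$ are uniformly bounded either directly from \eqref{Du08vers09:eq:SKexpression} or as a consequence of the iteration).

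The upshot is an inequality of the form
\[
\norm{\boldF_{K+1} - \boldF}_\infty \leq \frac{\lambda}{\rho}\,\norm{\boldF_K - \boldF}_\infty + \varepsilon_K,
\qquad \varepsilon_K = O(1/K).
\]
The final step is a routine unwinding of this linear recursion: since the homogeneous factor $\lambda/\rho$ is strictly less than $1$ and the forcing term $\varepsilon_K$ tends to $0$, the sequence $\norm{\boldF_K - \boldF}_\infty$ must also tend to $0$. Concretely one can iterate the inequality and split the resulting sum $\sum_j (\lambda/\rho)^{K-1-j}\varepsilon_j$ at an index growing slowly with $K$, or simply invoke the standard lemma that $a_{K+1}\leq \theta a_K + \varepsilon_K$ with $0<\theta<1$ and $\varepsilon_K\to 0$ forces $a_K\to 0$.

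The main obstacle I anticipate is not any single estimate but making the uniformity fully rigorous: the bound on $\norm{\cL_K \boldF_K - \cL \boldF_K}_\infty$ must hold uniformly in $x\in[0,1]$ (so one must check the behaviour at the breakpoints $r/\base$ and handle the sum $\sum_{r_1<x_1}$ cleanly) and the implied constants in the $O(1/K)$ terms must be controlled independently of the particular function $\boldF_K$, which is why the uniform boundedness of the family $(\boldF_K)$ needs to be pinned down before the argument closes. Once those uniformities are secured, the contraction supplied by Hypothesis~\ref{Du08vers09:hypo:roughspectralradius} does all the remaining work.
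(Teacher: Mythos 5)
Your proposal is correct and follows essentially the same route as the paper's own proof: both reduce to $\omega=1$, establish the uniform boundedness of $(\boldF_K)$ from the iteration, use Hypothesis~\ref{Du08vers09:hypo:asymptbehav} together with the contraction bound $\norm{A_r}\leq\lambda$ to obtain $\norm{\boldF_{K+1}-\boldF}_{\infty}\leq (\lambda/\rho)\norm{\boldF_K-\boldF}_{\infty}+O(1/K)$, and then unwind this recursion (your operator-language splitting $\cL_K\boldF_K-\cL\boldF_K$ plus $\cL\boldF_K-\cL\boldF$ is the same algebraic decomposition the paper writes out termwise). The only cosmetic difference is that the paper tracks constants to conclude the explicit rate $\norm{\boldF_K-\boldF}_{\infty}=O(1/K)$, whereas you invoke the standard lemma that $a_{K+1}\leq\theta a_K+\varepsilon_K$ with $0<\theta<1$ and $\varepsilon_K\to0$ forces $a_K\to0$, which suffices for the stated conclusion.
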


\begin{proof}
To obtain a uniform convergence, we take care that all the big oh with
respect to~$K$ are uniform with respect to~$x$. Also we may assume
$\omega=1$ as in the proof of
Proposition~\ref{Du08vers09:prop:basicprop}. 

We first note that the sequence~$(\boldF_K(x))$ is uniformly
bounded. Actually let~$\varepsilon>0$ chosen such that 
\[
k=\frac{\lambda}{\rho}\frac{1}{1-\varepsilon}<1.
\]
There exists a~$K_0$ such that for $K\geq K_0$ we have the inequality
\[
\left|\frac{1}{\rho}\frac{R(K+1)}{R(K)}-1\right|<\varepsilon.
\]
The triangular inequality applied to the right member
of Formula~\eqref{Du08vers09:eq:genrecrelF} provides an inequality
\[
\norm{\boldF_{K+1}}_{\infty}\leq 
k\left(\gamma+\norm{\boldF_K}_{\infty}\right),
\]
where~$\gamma$ is a constant. By induction we obtain
\[
\norm{\boldF_K}_{\infty}\leq \frac{k\gamma}{1-k}+\norm{\boldF_{K_0}}_{\infty}
\]
for~$K\geq K_0$. Hence the sequence~$(\norm{\boldF_K}_{\infty})$ is
bounded.

Formula~\eqref{Du08vers09:eq:genrecrelF} and
Formula~\eqref{Du08vers09:eq:dilationequation} (applied
to~$\boldF(x)$) provide 
\begin{multline}\label{Du08vers09:eq:ineq0}
\boldF_{K+1}(x)-\boldF(x)=
\sum_{r_1<x_1}A_{r_1}\left(
\frac{1}{R(K+1)}Q^KC-\frac{1}{\rho}V
\right)\\
+
A_{x_1}\left(\frac{R(K)}{R(K+1)}\boldF_K(\base x-x_1)-\frac{1}{\rho}\boldF(\base x-x_1)\right).
\end{multline}
The writing $Q^KC=R(K)V+R(K)O(1/K)$ gives
\begin{multline*}
\boldF_{K+1}(x)-\boldF(x)=
\sum_{r_1<x_1}A_{r_1} O\left(\frac{1}{K}\right)+
A_{x_1}\boldF_K(\base x-x_1)O\left(\frac{1}{K}\right)\\+
\frac{1}{\rho}A_{x_1}\left(\boldF_K(\base x-x_1)-\boldF(\base x-x_1)\right)
\end{multline*}
and next, with the uniformly bounded character of~$(\boldF_K(x))$,
\[ 
\norm{\boldF_{K+1}-\boldF}_{\infty}\leq 
O\left(\frac{1}{K}\right)+\frac{\lambda}{\rho}\norm{\boldF_K-\boldF}_{\infty}.
\] 
By induction, we find 
$\norm{\boldF_K-\boldF}_{\infty}=O\left({1}/{K}\right)$.
If the constant implied in the big oh of the last but one formula
is~$\gamma$, we may take $2\gamma/(1-\lambda/\rho)$ for the constant
in the big oh of the last formula.
\end{proof}

We may be more precise about the speed of convergence, but we content
ourselves with a particular case (which will prove to be basic).
\begin{corollary}\label{Du08vers09:cor:eigenvector}
  Under Hypotheses~\ref{Du08vers09:hypo:asymptbehav},
  \ref{Du08vers09:hypo:roughspectralradius}, if~$C$ is an eigenvector
  associated to the eigenvalue~$\rho\omega$, the speed of convergence
  is $\norm{\boldF_K-\boldF}_{\infty}=O((\lambda/\rho)^K)$.
\end{corollary}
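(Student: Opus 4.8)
The plan is to reuse the difference identity~\eqref{Du08vers09:eq:ineq0} from the proof of Proposition~\ref{Du08vers09:thm:basicproposition}, but to observe that the eigenvector hypothesis removes every error term and thereby upgrades the $O(1/K)$ rate to a genuine geometric one. As before I would first reduce to the case $\omega=1$ by multiplying the matrices $A_r$ by $\overline\omega$; this scaling leaves $C$ an eigenvector, now for the eigenvalue $\rho$. Since $C$ is then an eigenvector of $Q$ we have $Q^KC=\rho^KC$ \emph{exactly}, so Hypothesis~\ref{Du08vers09:hypo:asymptbehav} holds with the explicit choice $R(K)=\rho^K$ and $V=C$: the ratio $R(K+1)/R(K)$ equals $\rho$ with no $O(1/K)$ correction, and the remainder $R'(K)$ is identically zero.

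With these exact values the first sum in~\eqref{Du08vers09:eq:ineq0} vanishes term by term, because $\frac{1}{R(K+1)}Q^KC=\frac{1}{\rho^{K+1}}\rho^KC=\frac1\rho V$ and hence each factor $\frac{1}{R(K+1)}Q^KC-\frac1\rho V$ is zero. Likewise the factor $R(K)/R(K+1)$ in the second line is exactly $1/\rho$. What remains is the clean self-similar recursion
\[
\boldF_{K+1}(x)-\boldF(x)=\frac1\rho A_{x_1}\bigl(\boldF_K(\base x-x_1)-\boldF(\base x-x_1)\bigr),
\]
valid for every $x$, where $x_1$ is the first digit of $x$.

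Taking the maximum norm and bounding $\norm{A_{x_1}}\le\lambda$ via Hypothesis~\ref{Du08vers09:hypo:roughspectralradius} yields the exact contraction $\norm{\boldF_{K+1}-\boldF}_\infty\le(\lambda/\rho)\norm{\boldF_K-\boldF}_\infty$. Iterating from $K=0$, where $\boldF_0(x)=\boldS_0(x)/R(0)=C$ provides a finite starting constant $\norm{\boldF_0-\boldF}_\infty$, I obtain $\norm{\boldF_K-\boldF}_\infty\le(\lambda/\rho)^K\norm{\boldF_0-\boldF}_\infty=O((\lambda/\rho)^K)$, as claimed. There is essentially no obstacle beyond the one key observation that the eigenvector assumption turns $R(K)=\rho^K$ into an exact identity, so that the $O(1/K)$ defects responsible for the slower rate in Proposition~\ref{Du08vers09:thm:basicproposition} disappear and the recursion collapses to a pure geometric contraction; this is precisely why the statement is a corollary rather than a new theorem.
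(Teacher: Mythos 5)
Your proof is correct and follows essentially the same route as the paper: both arguments substitute the exact eigenvector relations ($Q^KC=(\rho\omega)^KC$, hence $R(K+1)/R(K)=\rho\omega$ and $R'(K)=0$) into the difference identity~\eqref{Du08vers09:eq:ineq0}, which then collapses to the pure contraction $\norm{\boldF_{K+1}-\boldF}_\infty\le(\lambda/\rho)\norm{\boldF_K-\boldF}_\infty$ and yields the geometric rate by iteration. The only cosmetic difference is that you first normalize $\omega=1$ via the paper's rescaling remark, whereas the paper keeps $\omega$ and simply replaces $\rho$ by $\rho\omega$ where needed.
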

\begin{proof}
  We do not assume here $|\omega|=1$ hence $\rho$ must be replaced by
  $\rho\omega$. 
  By substituting $Q^KC=R(K)V+O(R'(K))$ into
  Formula~\eqref{Du08vers09:eq:ineq0}, we obtain 
\begin{equation}\label{Du08vers09:eq:speed}
\norm{\boldF_{K+1}-\boldF}_{\infty}\leq 
O\left(\frac{R(K)}{R(K+1)}-\frac{1}{\rho\omega}\right)
+
O\left(\frac{R'(K)}{R(K+1)}\right)
+
\frac{\lambda}{\rho}\norm{\boldF_K-\boldF}_{\infty}.
\end{equation}
  But if~$C$ is an eigenvector,  we have
  $R(K+1)/R(K)=\rho\omega$ and $R'(K)=0$ and 
  Formula~\eqref{Du08vers09:eq:speed} becomes
  $\norm{\boldF_{K+1}-\boldF}_{\infty}\leq
  (\lambda/\rho)\norm{\boldF_K-\boldF}_{\infty}$, hence the conclusion.
\end{proof}

\subsection{Basic theorem}
Let us assume that we group the letters into pairs, which means that
we consider only words of even length. We obtain a new formal
series~$S_2$. It is rational and admits a linear representation whose
square matrices are the~$\base^2$ products $A_rA_s$ with $0\leq
r,s<\base$. The associated matrix~$Q$ becomes $Q_2=Q^2$. (See
Ex.~\ref{Du08vers09:ex:RudinShapiro}.) The sequence of running
sums~$(\boldS_K)$ is changed into its subsequence~$(\boldS_{2K})$ and
it is the same thing for the sequence~$(\boldF_K)$. In the same manner
we may group the letters by~$T$ for a given~$T$. This leads us to
consider the subsequence~$(\boldF_{KT})$ and the power~$Q^T$.

In order that Proposition~\ref{Du08vers09:thm:basicproposition}
applies to a power of matrix~$Q$, we have to consider all
products~$A_w$ for words~$w$ of a given length and their
norms. Hypothesis~\ref{Du08vers09:hypo:roughspectralradius} will be
satisfied for some power~$Q^T$ if we impose
Hypothesis~\ref{Du08vers09:hypo:jointspectralradius}. If~$\varepsilon>0$
is chosen such that $\lambdajsr(1+\varepsilon)<\rho$ and~$T$ is
sufficiently large, we thus have $\norm{A_w} \leq (\lambda_T)^T\leq
\lambdajsr^T(1+\varepsilon)^T<\rho^T$ for all words~$w$ of
length~$T$. Hence the subsequence~$(\boldF_{KT})$ is convergent. We
will show that not only this subsequence is convergent but the
sequence~$(\boldF_K)$ is convergent.
\begin{theorem}\label{Du08vers09:thm:existenceF}
Under Hypotheses~\ref{Du08vers09:hypo:asymptbehav}
and~\ref{Du08vers09:hypo:jointspectralradius}, the
sequence~$(\boldF_K)$ converges uniformly towards the unique
continuous solution~$\boldF$ of the basic dilation
equation~\eqref{Du08vers09:eq:dilationequation}.
\end{theorem}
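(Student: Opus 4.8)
The goal is to upgrade Proposition~\ref{Du08vers09:thm:basicproposition}, which gives uniform convergence under the strong Hypothesis~\ref{Du08vers09:hypo:roughspectralradius} (a single-step bound $\norm{A_r}\leq\lambda<\rho$), to the weaker Hypothesis~\ref{Du08vers09:hypo:jointspectralradius} (only the joint spectral radius $\lambdajsr<\rho$). The paragraph just before the theorem already does the essential work of setting up the reduction: by grouping letters into blocks of length~$T$, we pass to the formal series $S_T$ whose matrices are the products $A_w$ for $\length w=T$, whose $Q$-matrix is $Q^T$, and whose running-sum functions are exactly the subsequence $(\boldF_{KT})$.

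The plan is as follows. First I would fix $\varepsilon>0$ with $\lambdajsr(1+\varepsilon)<\rho$ and, invoking the definition of the joint spectral radius as $\lim_{T}\lambda_T$, choose $T$ large enough that $\lambda_T<\rho^{1/T}(1+\varepsilon)^{-1}$, so that every length-$T$ product satisfies $\norm{A_w}\leq\lambda_T^T<\rho^T/(1+\varepsilon)^T<\rho^T$. Thus for the radix-$\base^T$ representation the rough-spectral-radius Hypothesis~\ref{Du08vers09:hypo:roughspectralradius} holds with $\lambda=\lambda_T^T<\rho^T$ in place of $\rho$. Next I must check that Hypothesis~\ref{Du08vers09:hypo:asymptbehav} is inherited by the grouped representation: since $Q_T=Q^T$, the sequence $(Q^{TK}C)=((Q^T)^KC)$ is just the subsequence of $(Q^KC)$ at multiples of $T$, and feeding $K\mapsto TK$ into \eqref{Du08vers09:eq:asymptbehavQKC}--\eqref{Du08vers09:eq:asymptbehavR} shows it admits the asymptotic expansion $R(TK)V+O(R'(TK))$ with dominant ratio $\rho^T\omega^T$, which again satisfies the required form. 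Also one must observe that the basic dilation equation attached to the radix-$\base^T$ data has the \emph{same} solution $\boldF$ as the original: a right-continuous solution of the radix-$\base$ equation is automatically a solution of the $T$-fold iterated equation (this is Lemma~\ref{Du08vers09:lemma:functequnicity} applied at both scales, the $\base$-adic numbers being dense). Applying Proposition~\ref{Du08vers09:thm:basicproposition} to this grouped representation then yields that the subsequence $(\boldF_{KT})_K$ converges uniformly to $\boldF$.

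The remaining and main difficulty is to pass from convergence of one subsequence $(\boldF_{KT})$ to convergence of the full sequence $(\boldF_K)$. The clean way is to prove convergence of each of the $T$ shifted subsequences $(\boldF_{KT+j})_K$ for $0\leq j<T$ to the same limit $\boldF$. For this I would combine two ingredients. First, the uniform boundedness of $(\norm{\boldF_K}_\infty)$, which is established exactly as in the proof of Proposition~\ref{Du08vers09:thm:basicproposition} (the induction there only uses $\lambda/\rho<1$ for some grouping, so it transfers once we know $\lambda_T^T<\rho^T$). Second, I would control the finite transition from step $KT$ to step $KT+j$ using the recursion \eqref{Du08vers09:eq:genrecrelF}: iterating that formula $j$ times expresses $\boldF_{KT+j}(x)$ in terms of $\boldF_{KT}(\base^j x-\lfloor\base^j x\rfloor)$ together with a finite sum of terms each carrying a factor $\frac{1}{R(KT+i)}Q^{KT+\cdots}C=\frac{1}{\rho\omega}V+O(1/K)$ by Hypothesis~\ref{Du08vers09:hypo:asymptbehav}. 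Since $j<T$ is bounded while $K\to\infty$, the error incurred over these $j$ steps is $O(1/K)$ uniformly in $x$ and in $j$, and the leading part reproduces exactly $j$ applications of the limit operator $\cL$ from \eqref{Du08vers09:def:operatorL}, which fixes $\boldF$. Hence $\norm{\boldF_{KT+j}-\boldF}_\infty\to 0$ for each fixed $j$, uniformly, and taking the maximum over the finitely many residues $0\leq j<T$ gives $\norm{\boldF_K-\boldF}_\infty\to 0$ as $K\to\infty$.

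I expect the genuine obstacle to be bookkeeping in this last step: writing the $j$-fold iterate of $\cL_K$ and showing its difference from the $j$-fold iterate of $\cL$ is $O(1/K)$ uniformly, while simultaneously keeping the argument $\base^j x-\lfloor\base^j x\rfloor$ under control so that the uniform bound on $\boldF_{KT}$ can be applied at the shifted point. Everything else (inheritance of the two hypotheses under grouping, contraction on the closed subspace $\cC$, density of $\base$-adics for uniqueness) is either routine or already available from Proposition~\ref{Du08vers09:prop:basicprop}, Lemma~\ref{Du08vers09:lemma:functequnicity}, and Proposition~\ref{Du08vers09:thm:basicproposition}.
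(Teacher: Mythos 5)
Your overall architecture is the same as the paper's: group letters into blocks of length~$T$ so that Hypothesis~\ref{Du08vers09:hypo:roughspectralradius} holds for the radix-$\base^T$ representation, get uniform convergence of $(\boldF_{KT})$ from Proposition~\ref{Du08vers09:thm:basicproposition}, then treat the $T$ shifted subsequences through the recursion. The verification that Hypothesis~\ref{Du08vers09:hypo:asymptbehav} is inherited by the grouped data is correct (and is a point the paper glosses over). But there is a genuine gap at the identification step. Write $\boldF^0=\lim_K\boldF_{KT}$. All that the grouped representation gives you is that $\boldF^0$ is the unique continuous fixed point of $\cL^T$, where $\cL$ is the operator~\eqref{Du08vers09:def:operatorL}. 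Your claim that the radix-$\base^T$ dilation equation ``has the same solution $\boldF$ as the original'' proves only one implication: a right-continuous solution of the radix-$\base$ equation solves the $T$-fold iterated equation. It does not show that a solution of the radix-$\base$ equation \emph{exists}, and under Hypothesis~\ref{Du08vers09:hypo:jointspectralradius} alone Proposition~\ref{Du08vers09:prop:basicprop} does not provide one: for the chosen norm the one-step operator $\cL$ need not be a contraction, only $\cL^T$ is. Existence is part of what Theorem~\ref{Du08vers09:thm:existenceF} asserts. Consequently, when you later say that the leading part of the iterated recursion ``reproduces $j$ applications of $\cL$, which fixes $\boldF$'', you are invoking $\cL\boldF^0=\boldF^0$, which has not been established; the argument is circular. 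Without it, your computation shows only that $\boldF_{KT+j}\to\cL^j\boldF^0$ uniformly for each $0\leq j<T$, and these $T$ limits could a priori be distinct functions, in which case $(\boldF_K)$ would not converge at all.

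The missing step -- and it is the crux of the paper's proof -- is short. Each $\boldF^j:=\cL^j\boldF^0$ is continuous, satisfies the boundary conditions $\boldF^j(0)=0$, $\boldF^j(1)=V$ (this uses only $QV=\rho\omega V$, not any norm bound), and is a fixed point of $\cL^T$, because $\cL^j$ and $\cL^T$ commute: $\cL^T\boldF^j=\cL^j(\cL^T\boldF^0)=\cL^j\boldF^0=\boldF^j$. By the uniqueness of the continuous fixed point of $\cL^T$ (Proposition~\ref{Du08vers09:prop:basicprop} applied to the grouped representation, or Lemma~\ref{Du08vers09:lemma:functequnicity}), all the $\boldF^j$ coincide with $\boldF^0$. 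In particular $\cL\boldF^0=\boldF^0$, which at one stroke (i) establishes the existence (and, with Lemma~\ref{Du08vers09:lemma:functequnicity}, the uniqueness) of the continuous solution of the basic dilation equation~\eqref{Du08vers09:eq:dilationequation} under the stated hypotheses, and (ii) identifies all $T$ subsequence limits, so that the full sequence $(\boldF_K)$ converges uniformly to $\boldF^0$. With this commutation-and-uniqueness argument inserted in place of your circular appeal, your proof closes and coincides with the paper's.
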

\begin{proof}
  As we have explained just before the wording of the theorem, the
  sequence $(\boldF_{KT})$ converges uniformly to a
  function~$\boldF^0$ for some~$T$. Let us consider the
  sequence~$(\boldF_{KT+1})$. According to
  Lemma~\ref{Du08vers09:lemma:genrecrelS}, we have
\[
\frac{1}{R(KT+1)}\boldS_{KT+1}(x)=
\frac{1}{R(KT+1)}\sum_{r_1<x_1}A_{r_1}Q^{KT}C
+\frac{1}{R(KT+1)}\boldS_{KT}(\base x-x_1),
\]
hence
\begin{multline*}
\boldF_{KT+1}(x)=
\frac{1}{R(KT+1)}\sum_{r_1<x_1}A_{r_1}\left(R(KT)V+O(R'(KT))\right)\\+
\frac{R(KT)}{R(KT+1)}\boldF_{KT}(\base x-x_1).
\end{multline*}
Because~$(\boldF_{KT})$ converges uniformly towards ~$\boldF^0$, we
see that~$(\boldF_{KT+1})$ converges uniformly towards $\boldF^1=\cL
\boldF^0$. Repeating the argument, we conclude that each
subsequence~$(\boldF_{KT+s})$ converges uniformly towards
$\boldF^s=\cL^s \boldF^0$ for $s\geq 1$. But each of these functions
satisfies $\boldF^s=\cL^T \boldF^s$, because~$\boldF^0$ is a fixed
point of~$\cL^T$. Since the unique solution of this equation is
$\boldF^0$, all these functions are equal. Henceforth the sequence
$(\boldF_K)$ is uniformly convergent towards $\boldF^0$. Because of
the equality $\boldF^0=\boldF^1$ the function $\boldF^0$ is the unique
solution of the equation $\Phi =\cL\Phi$. (The uniqueness of the solution
for the last equation is guaranteed by
Lemma~\ref{Du08vers09:lemma:functequnicity}.)
\end{proof}

The following case will prove to be useful in
Section~\ref{Du08vers09:sec:fullexpansion}, where it will be extended.
\begin{corollary}\label{Du08vers09:cor:eigenvectorwitherror}
Under Hypotheses~\ref{Du08vers09:hypo:asymptbehav}
and~\ref{Du08vers09:hypo:jointspectralradius}, if~$C$ is an
eigenvector associated to the eigenvalue~$\rho\omega$, the speed of
convergence is
$\norm{\boldF_K-\boldF}_{\infty}=O((\lambda/\rho)^K)$
for every $\lambda>\lambdajsr$. If~$\lambdajsr$ is attained we may
replace~$\lambda$ by~$\lambdajsr$.
\end{corollary}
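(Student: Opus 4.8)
The plan is to bootstrap from the earlier Corollary (eigenvector case with error $O((\lambda/\rho)^K)$ under the rough-spectral-radius hypothesis), exactly as Theorem~\ref{Du08vers09:thm:existenceF} bootstrapped from Proposition~\ref{Du08vers09:thm:basicproposition}. Namely, I would group the letters by blocks of length~$T$, passing to the power~$Q^T$ and the family of products $(A_w)_{\length w=T}$. Under Hypothesis~\ref{Du08vers09:hypo:jointspectralradius}, for~$T$ large enough one has $\norm{A_w}\le\lambda^T<\rho^T$ for every word~$w$ of length~$T$, so Hypothesis~\ref{Du08vers09:hypo:roughspectralradius} holds for the blocked representation with bound~$\lambda^T$ in place of~$\lambda$. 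Since $C$ is an eigenvector of $Q$ for $\rho\omega$, it is an eigenvector of $Q^T=(Q)^T$ for $(\rho\omega)^T$, so Corollary~\ref{Du08vers09:cor:eigenvector} applies verbatim to the blocked data and yields
\[
\norm{\boldF_{KT}-\boldF}_{\infty}=O\bigl((\lambda^T/\rho^T)^K\bigr)=O\bigl((\lambda/\rho)^{KT}\bigr).
\]
This already gives the desired rate along the arithmetic progression of indices that are multiples of~$T$.

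Next I would upgrade from the subsequence $(\boldF_{KT})$ to the full sequence, controlling each of the finitely many residue classes $s=0,1,\dots,T-1$ separately. For fixed $s$, the recursion of Lemma~\ref{Du08vers09:lemma:genrecrelS} expresses $\boldF_{KT+s}$ in terms of $\boldF_{KT+s-1}$, and iterating $s$ times relates $\boldF_{KT+s}$ to $\boldF_{KT}$ through the operators $\cL_{KT},\dots,\cL_{KT+s-1}$. Because $C$ is an eigenvector we have $R(K+1)/R(K)=\rho\omega$ exactly and $R'(K)=0$, so there is no $O(1/K)$ drift in these operators: each $\cL_{KT+i}$ equals the limit operator $\cL$ up to terms that vanish identically in the eigenvector case. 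Hence $\boldF_{KT+s}-\boldF=\cL^{s}(\boldF_{KT}-\boldF)$ up to contributions that are themselves $O((\lambda/\rho)^{KT})$, and since $\cL$ contracts by the factor $\lambda^T/\rho^T<1$ on the blocked scale (equivalently $\cL$ is Lipschitz with a fixed constant depending only on the representation), applying it a bounded number $s<T$ of times only multiplies the estimate by a constant. Therefore $\norm{\boldF_{KT+s}-\boldF}_{\infty}=O((\lambda/\rho)^{KT})$ uniformly in $0\le s<T$, and writing an arbitrary index as $n=KT+s$ converts this into $\norm{\boldF_n-\boldF}_{\infty}=O((\lambda/\rho)^{n})$, as claimed.

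The main obstacle is purely bookkeeping: making sure the passage from the blocked rate $(\lambda/\rho)^{KT}$ to the unblocked rate $(\lambda/\rho)^{n}$ loses only a constant and not a factor that grows with~$T$. The point is that $T$ is fixed once and for all (chosen from Hypothesis~\ref{Du08vers09:hypo:jointspectralradius}), so the finitely many residue classes and the bounded number of applications of the Lipschitz operator $\cL$ contribute only $T$-dependent constants, which are harmless for the big-oh statement. I would also note that for any $\lambda>\lambdajsr$ one can pick $T$ with $\lambda_T<\lambda$, giving the stated freedom in the exponent; and if $\lambdajsr$ is attained, there is an induced norm and a $T$ with $\lambda_T=\lambdajsr$, so the same argument with $\lambda=\lambdajsr$ goes through, yielding the final refinement.
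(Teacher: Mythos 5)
Your proposal is correct and follows essentially the same route as the paper: block the letters into words of length~$T$ so that Hypothesis~\ref{Du08vers09:hypo:roughspectralradius} holds for the blocked representation, apply Corollary~\ref{Du08vers09:cor:eigenvector} to get the rate $O((\lambda_T/\rho)^{KT})$ along the subsequence $(\boldF_{KT})$, then propagate to the finitely many residue classes via the (Lipschitz) operator~$\cL$ --- which, as you note, coincides exactly with each $\cL_{KT+i}$ in the eigenvector case --- and absorb the $T$-dependent factors into the big-oh constant. Your treatment of the choice of~$T$ for a given $\lambda>\lambdajsr$ and of the attained case also matches the paper's.
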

\begin{proof}
  Corollary~\ref{Du08vers09:cor:eigenvector} gives the speed of
  convergence for the subsequence~$(\boldF_{KT})$ and we obtain
  $\norm{\boldF_{KT}-\boldF}_{\infty}=O(r_T^K)$ with
  $r_T=\lambda_T^T/\rho^T$ because we use the power~$Q^T$ in place
  of~$Q$. This gives $\norm{\boldF_{KT+s}-\boldF}_{\infty}=O(r_T^K)$
  for $0\leq s<T$ because the operator~$\cL$ is continuous and because
  there is a finite number of subsequences to consider. We obtain
  $\norm{\boldF_K-\boldF}_{\infty}=O(r_T^{K/T})$ and the formula above
  because~$T$ was chosen such that $\lambda_T\leq
  \lambdajsr(1+\varepsilon)$ for an arbitrary $\varepsilon >0$. In
  case $\lambdajsr$ is attained the positive number~$\varepsilon$ is
  useless.
\end{proof}

\begin{example}[Worst mergesort sequence]
  Mergesort is a comparison based sorting algorithm which uses the
  recursive divide-and-conquer strategy. The list to be sorted is
  split into two lists of almost equal size; both are sorted by
  mergesort (there is nothing to do for a list with only one item);
  both sorted lists are merged. Taking into account the number of
  comparisons, the cost of mergesort for a list of $n$ items is
\[
u_n=u_{\lfloor n/2\rfloor}+u_{\lceil n/2\rceil}+m_n,\qquad u_1=0,
\]
where $m_n$ is the cost of merging two sorted lists with $\lfloor
n/2\rfloor$ and $\lceil n/2\rceil$ elements~\citep{FlGo94}.  The cost
of mergesort in the worst case (that is $m_n=n-1$) writes
$u_n=\sum_{k=1}^n\lceil \log_2k\rceil$ and the sequence $(v_n)$
defined by $v_0=0$ and $v_n=\lceil \log_2n\rceil$ for $n\geq1$ is a
$2$-rational sequence which admits the representation, relative to the
basis $(v(n)$, $v(2n)$, $v(2n+1)$, $v(4n+2))$,
\[
L=\left(\begin{array}{c}
0\\\noalign{\medskip}0\\\noalign{\medskip}0\\\noalign{\medskip}1
	\end{array}\right)^\trpse,\;\;
A_0=\left(\begin{array}{cccc}
0&-1&-1&-1\\\noalign{\medskip}1&2&1&1
\\\noalign{\medskip}0&0&1&0\\\noalign{\medskip}0&0&0&1
	\end{array}\right),\;\;
A_1=\left(\begin{array}{cccc}
0&0&1&1\\\noalign{\medskip}0&0&-1&-1
\\\noalign{\medskip}1&0&-1&-2\\\noalign{\medskip}0&1&2&3
	\end{array}\right),\;\;
C=\left(\begin{array}{c}
1\\\noalign{\medskip}0\\\noalign{\medskip}0\\\noalign{\medskip}0
	\end{array}\right)
\]
The matrix $Q=A_0+A_1$ has two eigenvalues, namely~$2$ and~$1$,
which are double. A computation, based on the
Jordan reduced form of~$Q$, gives
\[
Q^KC=2^KK
\left(\begin{array}{c}
0\\
0\\
-1\\
1
	   \end{array}\right)
+
2^K
\left(\begin{array}{c}
0\\
0\\
2\\
-1
	   \end{array}\right)
+
K
\left(\begin{array}{c}
-1\\
1\\
1\\
-1
	   \end{array}\right)
+
\left(\begin{array}{c}
1\\
0\\
-2\\
1
	   \end{array}\right)
\]
and Hypothesis~\ref{Du08vers09:hypo:asymptbehav} works with
\[
\rho=2,\qquad\omega=1,\qquad
R(K)=2^KK,\qquad
V=\left(\begin{array}{ccccc}0&0&-1&1\end{array}\right)^{\trpse},\qquad
R'(K)=2^K.
\]
With the absolute maximum column norm, 
we find that the maximal norm of the matrices~$A_w$ with~$w$ of
length~$4$ is $9$. This gives $\lambda^{(1)}_4\simeq 1.73$ and
Hypothesis~\ref{Du08vers09:hypo:jointspectralradius} is satisfied. As
a consequence $(\boldF_{K})$ converges uniformly towards~$\boldF$
which is nothing but
\[
\boldF(x)=\left(\begin{array}{cccc}0&0&-x&x\end{array}\right)^{\trpse}.
\]
\end{example}

\subsection{H\"older property}\label{Du08vers09:subsec:Holder}
Let us recall that a function~$f$ from the real line into a normed
space is H\"older with exponent~$\alpha$ if it satisfies
$\norm{f(y)-f(x)}\leq c |y-x|^{\alpha}$ for some constant~$c$.  Under
the hypotheses of Theorem~\ref{Du08vers09:thm:existenceF} the
function~$\boldF$ is not only continuous but H\"older. Such a result
is classical in the study of dilation equations and the proof below is
a replay of point~$(7)$ in the proof of Th.~2.2
from~\citep{DaLa92}. See also~\citep{Rioul92}.
\begin{lemma}\label{Du08vers09:thm:Holderbasic}
  Under Hypotheses~\ref{Du08vers09:hypo:asymptbehav} and~\ref{Du08vers09:hypo:roughspectralradius} the function~$\boldF$ is
  H\"older with exponent~$\log_{\base}(\rho/\lambda)$.
\end{lemma}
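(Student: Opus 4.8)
The plan is to exploit the self-similar structure directly: iterate the basic dilation equation~\eqref{Du08vers09:eq:dilationequation} to express $\boldF$ on a small $\base$-adic interval in terms of a single value of $\boldF$ on all of $[0,1)$, preceded by a product of $K$ matrices $A_r/(\rho\omega)$. For $x=(0.x_1x_2\cdots)_{\base}$ in $[0,1)$, unfolding~\eqref{Du08vers09:eq:dilationequation} $K$ times gives
\[
\boldF(x)=\sum_{j=1}^{K}\frac{1}{(\rho\omega)^{j}}A_{x_1}\cdots A_{x_{j-1}}\Bigl(\sum_{r<x_j}A_rV\Bigr)+\frac{1}{(\rho\omega)^{K}}A_{x_1}\cdots A_{x_K}\boldF\bigl(\base^{K}x-\lfloor\base^{K}x\rfloor\bigr).
\]
The crucial observation is that the first (affine) sum depends only on the length-$K$ prefix $x_1\ldots x_K$. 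Hence if $x$ and $y$ lie in the same $\base$-adic interval $I_{K,m}=[m\base^{-K},(m+1)\base^{-K})$, so that their expansions share this prefix with $m=\Bint{x_1\ldots x_K}$, the affine parts cancel in the difference and
\[
\boldF(y)-\boldF(x)=\frac{1}{(\rho\omega)^{K}}A_{x_1}\cdots A_{x_K}\bigl(\boldF(\base^{K}y-m)-\boldF(\base^{K}x-m)\bigr).
\]

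Taking norms and using Hypothesis~\ref{Du08vers09:hypo:roughspectralradius}, which gives $\norm{A_{x_1}\cdots A_{x_K}}\leq\lambda^{K}$ and $|\rho\omega|=\rho$, together with the boundedness of the continuous function $\boldF$ on $[0,1]$, yields the core estimate
\[
\norm{\boldF(y)-\boldF(x)}\leq 2\norm{\boldF}_{\infty}\Bigl(\frac{\lambda}{\rho}\Bigr)^{K}\qquad\text{whenever }x,y\in I_{K,m}.
\]
Writing $\alpha=\log_{\base}(\rho/\lambda)$, which is positive since $0<\lambda<\rho$, one has $(\lambda/\rho)^{K}=(\base^{-K})^{\alpha}$, so the right-hand side is $2\norm{\boldF}_{\infty}(\base^{-K})^{\alpha}$. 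Given $x<y$ in $[0,1]$ with $y-x<1$, I would choose $K\geq 0$ with $\base^{-(K+1)}\leq y-x<\base^{-K}$; then $\base^{-K}\leq\base\,(y-x)$, which converts the geometric decay into the Hölder bound $2\base^{\alpha}\norm{\boldF}_{\infty}(y-x)^{\alpha}$.

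The one point that needs care is that $x$ and $y$ with $y-x<\base^{-K}$ need not lie in the same interval $I_{K,m}$: they may straddle a single grid point $g=p\base^{-K}$. In that case I split $\norm{\boldF(y)-\boldF(x)}\leq\norm{\boldF(y)-\boldF(g)}+\norm{\boldF(g)-\boldF(x)}$; the point $g$ is the right endpoint of $I_{K,p-1}$ and the left endpoint of $I_{K,p}$, so by the continuity of $\boldF$ (Theorem~\ref{Du08vers09:thm:existenceF}, established via the contraction in Proposition~\ref{Du08vers09:prop:basicprop}) each term is a limit of differences of values taken inside a single interval and therefore still obeys the core estimate. This at most doubles the constant, giving $\norm{\boldF(y)-\boldF(x)}\leq 4\base^{\alpha}\norm{\boldF}_{\infty}(y-x)^{\alpha}$ in all cases, the boundary case $x=0$, $y=1$ being trivial since $\norm{\boldF(1)-\boldF(0)}=\norm{V}$. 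This is exactly Hölder continuity with exponent $\alpha=\log_{\base}(\rho/\lambda)$.

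The argument is essentially the Daubechies--Lagarias estimate: the entire strength of the conclusion is carried by the uniform matrix bound $\norm{A_{x_1}\cdots A_{x_K}}\leq\lambda^{K}$ of Hypothesis~\ref{Du08vers09:hypo:roughspectralradius}, which is why the exponent is governed by $\lambda/\rho$. The expected obstacle is not any single computation but the bookkeeping of the straddling case and the verification that the affine prefix genuinely cancels; once the iterated form of~\eqref{Du08vers09:eq:dilationequation} is written down, the rest is routine. One may equally phrase this by propagating a uniform Hölder bound through the cascade sequence $\boldG_K=\cL^{K}\boldG_0$ converging to $\boldF$, but the direct prefix computation above seems the most economical.
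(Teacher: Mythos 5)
Your proof is correct, but it takes a genuinely different route from the paper. The paper's proof runs the cascade algorithm: it introduces the piecewise-linear approximants $\boldG_0(x)=xV$, $\boldG_{K+1}=\cL\boldG_K$, bounds the slope of $\boldG_K$ by $\base\norm{V}(\base\lambda/\rho)^K$ on each grid interval (extending this to all pairs by a mean-speed argument for piecewise-linear curves, which silently disposes of the grid-straddling issue), and then combines this Lipschitz-type bound with the geometric rate $\norm{\boldG_\nu-\boldF}_\infty=O((\lambda/\rho)^\nu)$ coming from the contraction, via a three-term triangle inequality. You instead exploit the exact self-similarity of $\boldF$ itself: unfolding the fixed-point equation $K$ times, observing that the affine prefix sum cancels for two points in the same $\base$-adic interval, and bounding the remainder by $\norm{A_{x_1}\cdots A_{x_K}}\,\rho^{-K}\cdot 2\norm{\boldF}_\infty$, with the straddling case handled by continuity at the single grid point. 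Your argument is more economical --- it needs only the existence, continuity and boundedness of $\boldF$ from Proposition~\ref{Du08vers09:prop:basicprop} plus the bound $\norm{A_w}\leq\lambda^{\length w}$, and no information about the approximating sequence --- whereas the paper's cascade argument has the advantage of being the one that is recycled essentially verbatim in Lemma~\ref{Du08vers09:lemma:HolderJordan} for the matrix-valued Jordan system, where the contraction only holds after a change of norm and radix, and it yields constants expressed in terms of $\norm{V}$ rather than $\norm{\boldF}_\infty$. One small citation fix: under the hypotheses of this lemma (\ref{Du08vers09:hypo:asymptbehav} and \ref{Du08vers09:hypo:roughspectralradius}), the continuity of $\boldF$ should be credited to Proposition~\ref{Du08vers09:prop:basicprop} alone, not to Theorem~\ref{Du08vers09:thm:existenceF}, which is stated under the joint-spectral-radius hypothesis~\ref{Du08vers09:hypo:jointspectralradius}.
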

The inequality $\lambda<\rho$ is assumed, but we have also $\rho\leq
\norm{Q}\leq \base\lambda$ by triangular inequality. This gives bounds
$0<\log_{\base}(\rho/\lambda)\leq 1$ for the exponent, as expected.
\begin{proof}
  Let us introduce the sequence~$\boldG_K$ defined by $\boldG_0(x)=xV$
  and $\boldG_{K+1}=\cL\boldG_K$, where~$\cL$ is the operator defined
  by Eq.~\eqref{Du08vers09:def:operatorL} and used in the basic
  dilation equation~\eqref{Du08vers09:eq:dilationequation}. The
  function~$\boldG_0$ is linear and thanks to the properties of the
  operator~$\cL$, all the functions~$\boldG_K$ are piecewise linear
  and continuous. More precisely~$\boldG_K$ is linear in each of the
  interval $[k/\base^K,(k+1)/\base^K]$, $0\leq k<\base^K$. 

  Let~$s$ and~$t$ be two real numbers which are situated in one of
  these intervals $\left[k/\base^K,(k+1)/\base^K\right)$, $0\leq
  k<\base^K$, and satisfy $1/\base^{K+1}\leq |t-s|<1/\base^K$. Their
  radix~$\base $ expansions have the same first $K$ figures, but the
  $(K+1)$th figures are different. We may write $s=\Bexp {u_1\ldots
    u_Ks_1\ldots}$, $t=\Bexp {u_1\ldots u_Kt_1\ldots}$ and $s'=\Bexp
  {s_1\ldots}$, $t'=\Bexp {t_1\ldots}$. According to the definition
  of~$\cL$, we have
\[
\boldG_K(t)-\boldG_K(s)=\frac{1}{(\rho\omega)^K}A_{u_1}\dotsb A_{u_K}(t'-s')V.
\]
  Consequently we obtain
\[
\norm{\boldG_K(t)-\boldG_K(s)}\leq
\left(\frac{\lambda}{\rho}\right)^K\norm{V}\leq
\base \norm{V} \left(\frac{\base\lambda}{\rho}\right)^K|t-s|,
\]
using the hypothesis $1/\base^{K+1}\leq |t-s|$.  Because $\boldG_K$ is
piecewise linear this inequality is valid for all pairs $(s,t)$ taken
from~$[0,1]$. Let us insist on that point. The function
$\Delta:\,(s,t)\mapsto \norm{\boldG_K(t)-\boldG_K(s)}/|t-s|$ is
defined on the square $[0,1]^2$ minus its diagonal. It gives the mean
speed of the parameterized curve $\boldG_K$ on the interval whose ends
are~$s$ and~$t$. According to the previous computation the quantity
$\base \norm{V}(\base\lambda/\rho)^K$ is an upper bound for the mean
speed~$\Delta$ on each square $\left[k/\base^K,(k+1)/\base^K\right)^2$
(minus their diagonals), $0\leq k<\base^K$, because $\boldG_K$ is
piecewise linear and~$\Delta$ is constant on such a square. Again
because $\boldG_K$ is piecewise linear these particular squares give
the larger value of~$\Delta$. Henceforth the previous upper bound is
valid on the entire square~$[0,1]^2$ (minus its diagonal).

Let~$x$ and~$y$ be two real numbers from~$[0,1]$ which satisfy
$1/\base^{\nu+1}< |y-x|\leq 1/\base^{\nu}$ for some integer~$\nu$. The
sequence~$(\boldG_K)$ converges uniformly towards~$\boldF$ and we may
guarantee $\norm{\boldG_{\nu}-\boldF}_{\infty}\leq
\gamma(\lambda/\rho)^{\nu}$ for some constant~$\gamma$ because the
operator~$\cL$ is a contraction with ratio $\lambda/\rho$. Using the
triangular inequality we obtain
\[
\norm{\boldF(y)-\boldF(x)}\leq
2\gamma\left(\frac{\lambda}{\rho}\right)^{\nu}+
\base \norm{V} \left(\frac{\base\lambda}{\rho}\right)^{\nu}|y-x|\leq
\left(2\gamma+\base \norm{V}\right)
\left(\frac{\lambda}{\rho}\right)^{\nu}
,
\]
the last inequality coming from the hypothesis $ |y-x|\leq
1/\base^{\nu}$. But the inequality $1/\base^{\nu}<\base |y-x|$
provides an inequality of the desired form with the help of the
formula $\lambda/\rho=(1/\base)^{\log_{\base}(\rho/\lambda)}$. 
\end{proof}

\begin{proposition}\label{Du08vers09:thm:Holderwithjsr}
  Under Hypotheses~\ref{Du08vers09:hypo:asymptbehav} and~\ref{Du08vers09:hypo:jointspectralradius} the function~$\boldF$ is
  H\"older with exponent~$\log_{\base}(\rho/\lambda)$ for every
  $\lambda>\lambdajsr$. Moreover $\lambda=\lambdajsr$ may be used if\/
  $\lambdajsr$ is attained.
\end{proposition}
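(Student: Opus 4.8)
The plan is to reduce the statement to Lemma~\ref{Du08vers09:thm:Holderbasic} by the same grouping-of-letters device used in the proof of Theorem~\ref{Du08vers09:thm:existenceF}. First I would fix $\lambda>\lambdajsr$ and choose $\varepsilon>0$ and an integer $T$ large enough that $\lambda_T\leq\lambdajsr(1+\varepsilon)<\lambda<\rho$; this is possible precisely because $\lambdajsr=\lim_T\lambda_T$ and Hypothesis~\ref{Du08vers09:hypo:jointspectralradius} gives $\lambdajsr<\rho$. Grouping the alphabet by blocks of length $T$ produces a new linear representation over the alphabet of $T$-letter words, whose square matrices are the products $A_w$ with $\length w=T$, and whose associated matrix is $Q^T$. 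For this grouped representation the relevant eigenvalue of $Q^T$ is $(\rho\omega)^T=\rho^T\omega^T$ of modulus $\rho^T$, and the norm bound is $\norm{A_w}\leq\lambda_T^{\,T}\leq\lambda^T$, so Hypothesis~\ref{Du08vers09:hypo:roughspectralradius} holds for the grouped data with the rough bound $\lambda^T$ in place of $\lambda$ and $\rho^T$ in place of $\rho$. The key observation is that the solution $\boldF$ of the basic dilation equation is intrinsic: it depends only on the formal series, not on how we group the letters, so the grouped representation has the \emph{same} solution $\boldF$.

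Next I would apply Lemma~\ref{Du08vers09:thm:Holderbasic} directly to the grouped representation, whose radix is now $\base^T$. The lemma yields that $\boldF$ is H\"older with exponent
\[
\log_{\base^T}\!\bigl(\rho^T/\lambda^T\bigr)
=\frac{\log(\rho^T/\lambda^T)}{\log(\base^T)}
=\frac{T\log(\rho/\lambda)}{T\log\base}
=\log_{\base}(\rho/\lambda).
\]
Thus the exponent is unchanged by the grouping, and we obtain exactly the claimed H\"older exponent $\log_{\base}(\rho/\lambda)$ for every $\lambda>\lambdajsr$. The point worth stressing is that the self-similar interpolating sequence $(\boldG_K)$ built in the proof of Lemma~\ref{Du08vers09:thm:Holderbasic} is now indexed by the grouped operator $\cL^T$, so its piecewise-linear pieces live on the intervals $[k/\base^{TK},(k+1)/\base^{TK}]$; the piecewise-linear mean-speed argument goes through verbatim because it only used that the operator is a contraction and that the generating function $\boldG_0(x)=xV$ is linear.

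The main obstacle, and the one requiring care, is justifying that the grouping does not alter the limit function $\boldF$ and that Lemma~\ref{Du08vers09:thm:Holderbasic} is genuinely applicable to the grouped data. One must check that Hypothesis~\ref{Du08vers09:hypo:asymptbehav} survives the passage to $Q^T$: if $Q^KC=R(K)V+O(R'(K))$ holds with ratio $R(K+1)/R(K)\to\rho\omega$, then along the subsequence $Q^{TK}C$ one gets the asymptotic expansion with dominant eigenvalue $(\rho\omega)^T$ and the same vector $V$, so the grouped version of Hypothesis~\ref{Du08vers09:hypo:asymptbehav} holds with $\rho$ replaced by $\rho^T$ and $\omega$ by $\omega^T$. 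Once this bookkeeping is in place the argument is purely formal. Finally, if $\lambdajsr$ is attained, there is an induced norm and a $T$ with $\lambda_T=\lambdajsr$ exactly, so we may take $\lambda=\lambdajsr$ from the outset without the slack $\varepsilon$, giving the sharp H\"older exponent $\log_\base(\rho/\lambdajsr)$.
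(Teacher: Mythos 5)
Your proposal is correct and follows essentially the same route as the paper: the paper's proof also passes to the radix~$\base^T$ (equivalently, groups letters into blocks of length~$T$ with $\lambda_T\leq\lambda$), observes that $\rho$ and $\lambda$ become $\rho^T$ and $\lambda^T$ while the exponent $\log_{\base^T}(\rho^T/\lambda^T)=\log_{\base}(\rho/\lambda)$ is unchanged, and then invokes Lemma~\ref{Du08vers09:thm:Holderbasic}. Your additional bookkeeping (that Hypothesis~\ref{Du08vers09:hypo:asymptbehav} survives the passage to $Q^T$ and that the limit function $\boldF$ is the same for the grouped representation, by uniqueness of the solution of the dilation equation) is exactly what the paper leaves tacit, so it strengthens rather than deviates from the argument.
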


\begin{proof}
  If $\lambda>\lambdajsr$ there is a~$T$ such that $\lambda\geq
  \lambda_T$ and we may assume $\lambda=\lambda_T$ for the proof
  because the smaller is~$\lambda$, the stronger is the constraint
  imposed by the H\"older property (the
  exponent~$\log_{\base}(\rho/\lambda)$ is a decreasing function
  of~$\lambda$). We replace the radix~$\base$ by~$\base^T$. This
  changes $\rho$ into $\rho^T$ and $\lambda$ into $\lambda^T$, but
  $\log_{\base^T}(\rho^T/\lambda^T)=\log_{\base}(\rho/\lambda)$
  remains the same and the previous lemma gives the conclusion.
\end{proof}

\begin{example}[Billingsley's distribution function]\label{Du08vers09:ex:Billingsley}
  \citet[Ex.~31.1, p.~407]{Billingsley95} studied the random variable
  $X=\sum_{n\geq0}X_n/2^n$ where~$X_n$ is the result of a coin tossing
  with probabilities~$p_0$ and~$p_1$ for $X_n=0$ and $X_n=1$
  respectively. This defines a rational series with dimension~$1$,
  radix~$2$ and a linear representation \[
  L=\left(\begin{array}{c}1\end{array}\right),\quad
  A_0=\left(\begin{array}{c}p_0\end{array}\right),\quad
  A_1=\left(\begin{array}{c}p_1\end{array}\right),\quad
  C=\left(\begin{array}{c}1\end{array}\right) \] with $0<p_0,p_1<1$,
  $p_0+p_1=1$. We have $Q=\left(\begin{array}{c}1\end{array}\right)$,
  $\rho=1$, $V=\left(\begin{array}{c}1\end{array}\right)$, $R(K)=1$,
  $R'(K)=0$, $\lambdajsr=\max(p_0,p_1)$. The distribution function~$F$
  is the limit function of Theorem~\ref{Du08vers09:thm:existenceF} and
  it is H\"older with exponent $\log_2(1/\lambdajsr)$. We illustrate
  the example (Fig.~\ref{Du08vers09:fig:Billingsley}, left-hand side)
  with $p_0=1/4$, $p_1=3/4$ as in~\cite[p.~408]{Billingsley95} and the
  exponent is~$\log_2(4/3)\simeq 0.415$, and
  (Fig.~\ref{Du08vers09:fig:Billingsley}, right-hand side) with
  $p_0=1/5$, $p_1=4/5$ and the exponent is $\log_2(4/5)\simeq
  0.322$. Following the same way as in \citep[\S~4.3]{DuLiWa07}, it is
  possible to show that, assuming $p_0\leq p_1$, at every dyadic point
  the best H\"older exponent is $\log_2(1/p_0)$ on the right-hand side
  and $\log_2(1/\lambdajsr)$ on the left-hand side. Except in the case
  $p_0=p_1=1/2$, this gives $\log_2(1/p_0)>1$ and this explains the
  horizontal tangents on the right-hand side that we see on the
  pictures. But we will not elaborate upon this point because the
  argument is very simple and assumes a linear representation with
  nonnegative coefficients while \citep{DaLa92} has provided a general
  approach to this subject.

\begin{figure}
\begin{center}
  \includegraphics[width=0.45\linewidth]{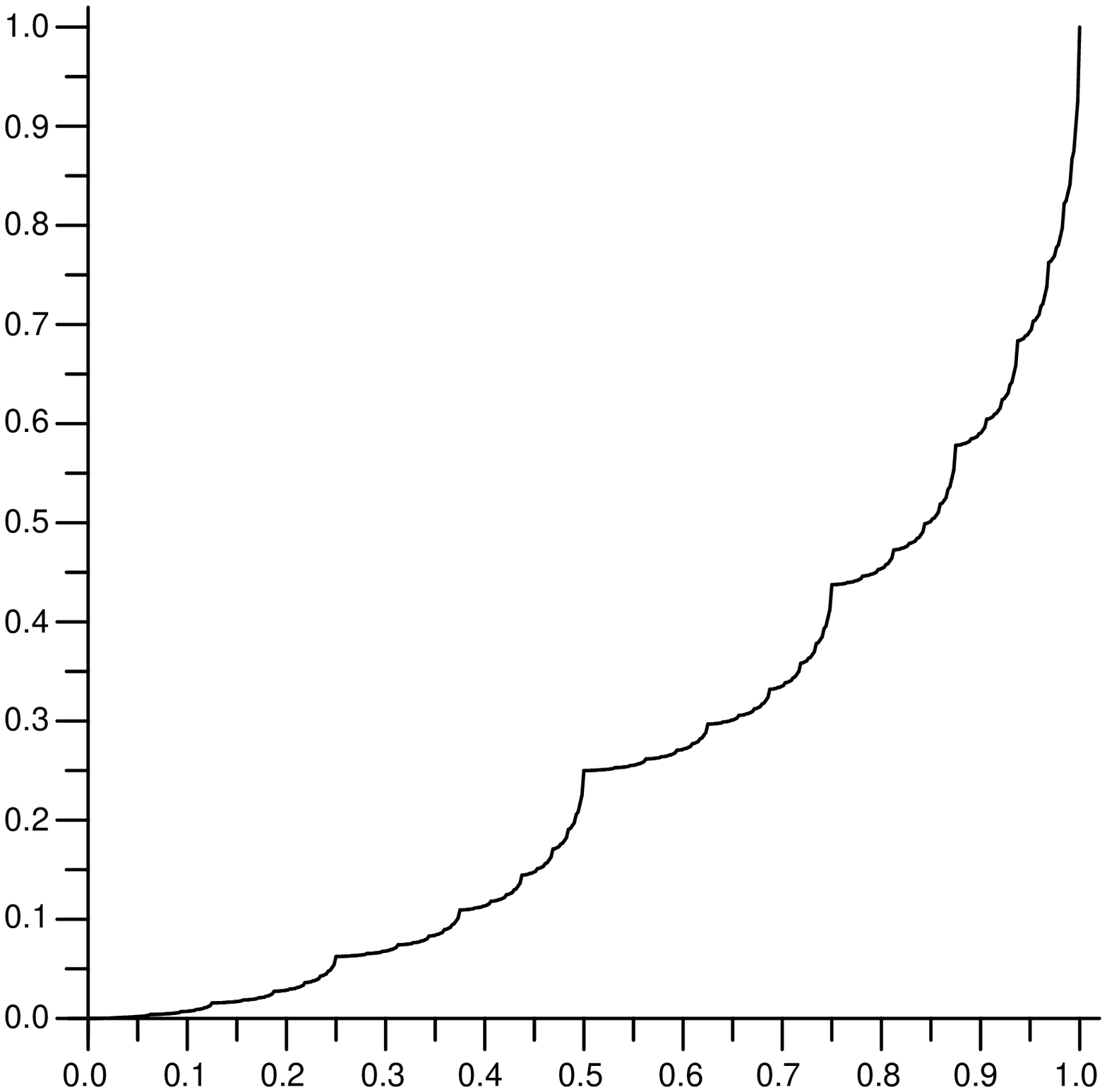}
\hfil
  \includegraphics[width=0.45\linewidth]{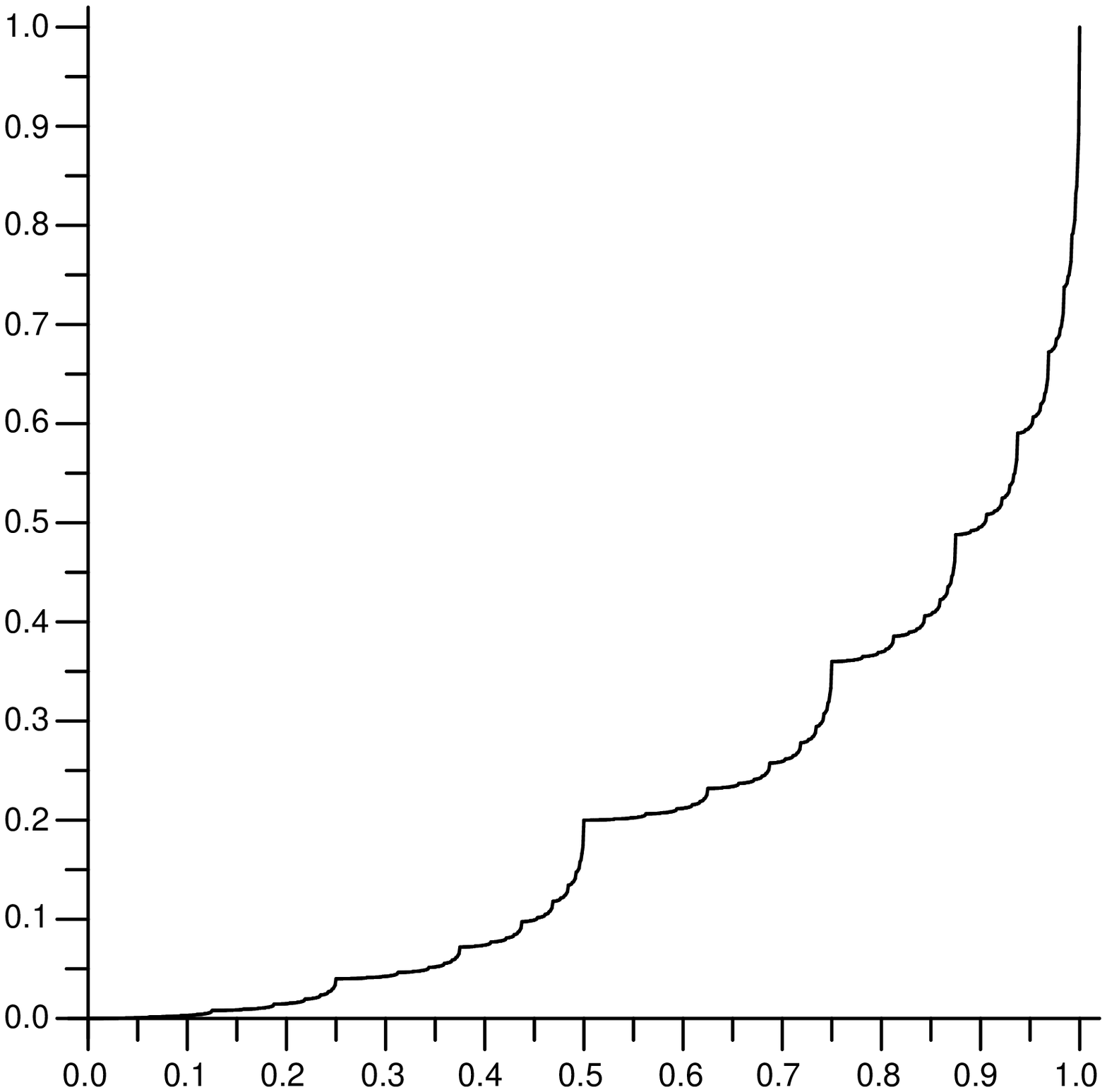}
\end{center}
\caption{\label{Du08vers09:fig:Billingsley}Limit probability
  distributions which come from a Bernoulli process described in
  Ex.~\ref{Du08vers09:ex:Billingsley}.}
\end{figure}
\end{example}

\section{Asymptotic expansion}\label{Du08vers09:sec:fullexpansion}
To a linear representation $L$, $(A_r)_{0\leq r<\base}$, $C$, we
associate a vector-valued sequence of running sums
\[
\boldS_K(x)=\sum_{\begin{subarray}{c}
  \length w =K \\ \Bexp w \leq x		  
		  \end{subarray}}A_wC.
\]
We want to derive an asymptotic expansion for the sequence
$(\boldS_K)$ and as a by-product an asymptotic expansion for the
sequence $(S_K)=(L\boldS_K)$. Theorem~\ref{Du08vers09:thm:existenceF}
and its corollaries provide a one term expansion
\[
\boldS_K(x)
\mathop{=}_{K\to+\infty}
R(K)\boldF(x)+o(R(K)).
\]
In order to obtain a more precise expansion, we generalize the result
of Corollary~\ref{Du08vers09:cor:eigenvectorwitherror}, which deals
with eigenvectors of~$Q$. We find a basis which reduces the
matrix~$Q=\sum_{0\leq r<\base}A_r$ to its Jordan normal form and we
decompose the column vector~$C$ on this basis. The sequence
$(\boldS_K)$ appears as a sum of sequences associated to the vectors
of the basis. We have to discuss according to the eigenvalue of~$Q$
relative to each vector. If the modulus of an eigenvalue is larger
than~$\lambdajsr$ we have to consider a generalized eigenvector or a
Jordan vector and this will be made in the next section. After that it
remains only to consider the case where the modulus is less or equal
to the joint spectral radius~$\lambdajsr$. This case produces a noise
which will enter in the error term of the asymptotic expansion.

\subsection{Jordan vector}\label{Du08vers09:subsec:Jordan}
We use a family of linearly independent vectors $(V^{(j)})_{0\leq
j<\nu}$ which satisfies $QV^{(j)}=\rho\omega V^{(j)}+V^{(j-1)}$ for
$j>0$ and $QV^{(0)}=\rho\omega V^{(0)}$, with $\rho>\lambdajsr$ and
$|\omega|=1$, that is we assume
Hypothesis~\ref{Du08vers09:hypo:generalizedeigenvector} with
Hypothesis~\ref{Du08vers09:hypo:jointspectralradius}. As a
consequence~$Q$ induces on the vector space generated by~$V^{(0)}$,
$V^{(1)}$, $\ldots$, $V^{(\nu-1)}$ the usual Jordan block of
size~$\nu$,
\begin{equation}\label{Du08vers09:eq:Jordanblock}
J_{\rho\omega}=\left(\begin{array}{ccccc}
	 \rho\omega &1 & & & \\
              &\rho\omega&1 && \\
              &    & \ddots&\ddots & \\
              &    &       & & 1\\
              &    & & & \rho\omega
	 \end{array}\right).
\end{equation}
This gives immediately
\begin{multline}\label{Du08vers09:eq:Q^KV}
Q^KV^{(\nu-1)}=\binom K {\nu-1}(\rho\omega)^{K-\nu+1}V^{(0)}+
\binom K {\nu-2}(\rho\omega)^{K-\nu+2}V^{(1)}+
\dotsb\\+
\binom K 1 (\rho\omega)^{K-1}V^{(\nu-2)}+
(\rho\omega)^KV^{(\nu-1)}.
\end{multline}
We claim that the running sum associated to the vector~$V^{(\nu-1)}$
\[
\boldS_K(x)=\sum_{\begin{subarray}{c}\length w = K\\ \Bexp w \leq x\end{subarray}}A_wV^{(\nu-1)}
\]
admits an asymptotic expansion of the form
\begin{multline*}
\boldS_K(x)\mathop{=}_{K\to+\infty}
\binom K {\nu-1}(\rho\omega)^{K-\nu+1}\boldF^{(0)}(x)+
\binom K {\nu-2}(\rho\omega)^{K-\nu+2}\boldF^{(1)}(x)+
\dotsb\\+
\binom K 1 (\rho\omega)^{K-1}\boldF^{(\nu-2)}(x)+
(\rho\omega)^K\boldF^{(\nu-1)}(x)+
\operatorname{error}_K(x),
\end{multline*}
where the~$\boldF^{(j)}$'s are continuous functions from~$[0,1]$
into~$\bC^d$ and the error term is $O(\lambda^K)$ for every
$\lambda>\lambdajsr$. 

The polynomial function $K\mapsto (\rho\omega)^{-K}Q^KV^{(\nu-1)}$ has a
unique writing on the basis $\binom K j$, $0\leq j<\nu$, and we have
necessarily $\boldF^{(j)}(1)=V^{(j)}$ for $0\leq j<\nu$. In the same manner
the uniqueness of asymptotic expansions shows that the family
$(\boldF^{(j)})_{0\leq j<\nu}$ must satisfy the system of dilation
equations,
\begin{equation}\label{systemdilationequation}
\left\{\begin{array}{rcl}
\rho\omega \boldF^{(0)}(x) &=& 
\displaystyle
\sum_{r_1<x_1}A_{r_1}V^{(0)}+A_{x_1}\boldF^{(0)}(\base x-x_1),\\[3.0ex]
\boldF^{(0)}(x)+\rho\omega\boldF^{(1)}(x) &=& 
\displaystyle
\sum_{r_1<x_1}A_{r_1}V^{(1)}+A_{x_1}\boldF^{(1)}(\base x-x_1),\\
&\vdots&\\
\boldF^{(\nu-2)}(x)+\rho\omega\boldF^{(\nu-1)}(x) &=& 
\displaystyle
\sum_{r_1<x_1}A_{r_1}V^{(\nu-1)}+A_{x_1}\boldF^{(\nu-1)}(\base x-x_1).
\end{array}
\right.
\end{equation}
We obtain these formul\ae\ by substituting the asymptotic expansion
into the functional equation of
Lemma~\ref{Du08vers09:lemma:genrecrelS}. Evidently the first equation
of the system is nothing but the basic dilation
equation~\eqref{Du08vers09:eq:dilationequation}. To obtain the
proposition we have in mind, we will follow the same path as in
Sections~\ref{Du08vers09:subsec:unifconvergence}--\ref{Du08vers09:subsec:Holder}. For
the sake of clarity we cut the proof into lemmas.
\begin{lemma}\label{Du08vers09:lemma:systemdilationequation}
  Under Hypotheses~\ref{Du08vers09:hypo:jointspectralradius}
  and~\ref{Du08vers09:hypo:generalizedeigenvector},
  System~\eqref{systemdilationequation} has a unique
  solution~$(\boldF^{(j)})_{0\leq j<\nu}$ in the space $\Pi_{0\leq
    j<\nu}\cC^{(j)}$ where~$\cC^{(j)}$ is the space of continuous
  functions from~$[0,1]$ into~$\bC^d$ which satisfy
  $\boldF^{(j)}(0)=0$, $\boldF^{(j)}(1)=V^{(j)}$ for $0\leq j<\nu$.
\end{lemma}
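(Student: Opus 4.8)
The plan is to exploit the triangular shape of System~\eqref{systemdilationequation} and solve it line by line, by induction on $j$, each line being an inhomogeneous dilation equation whose homogeneous part is exactly the contraction met in Proposition~\ref{Du08vers09:prop:basicprop}. Setting $\boldF^{(-1)}=0$ for uniformity, suppose $\boldF^{(0)},\dots,\boldF^{(j-1)}$ have already been produced as continuous functions with $\boldF^{(i)}(0)=0$ and $\boldF^{(i)}(1)=V^{(i)}$. I would rewrite the $j$-th line as a fixed-point equation $\Phi=\cM_j\Phi$ for the affine operator on $\cC^{(j)}$ defined by
\[
\cM_j\Phi(x)=\frac{1}{\rho\omega}\Bigl(\sum_{r_1<x_1}A_{r_1}V^{(j)}-\boldF^{(j-1)}(x)+A_{x_1}\Phi(\base x-x_1)\Bigr).
\]
For $j=0$ the term $\boldF^{(-1)}$ vanishes and this is precisely the basic dilation equation~\eqref{Du08vers09:eq:dilationequation} with $V=V^{(0)}$, so the base case is the $j=0$ instance of the very same construction. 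Note that the lemma does not assume Hypothesis~\ref{Du08vers09:hypo:asymptbehav}; nothing below uses it, since only the eigenvector relations of Hypothesis~\ref{Du08vers09:hypo:generalizedeigenvector} and the bound of Hypothesis~\ref{Du08vers09:hypo:jointspectralradius} are needed.

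The first task is to check that $\cM_j$ leaves the complete space $\cC^{(j)}$ invariant, replaying the stability verification in the proof of Proposition~\ref{Du08vers09:prop:basicprop}. At $x=0$ one gets $\cM_j\Phi(0)=\frac{1}{\rho\omega}A_0\Phi(0)=0$ because $\boldF^{(j-1)}(0)=0$; at each interior breakpoint $r/\base$ the left and right limits coincide exactly as in the basic case, the extra summand $\boldF^{(j-1)}$ contributing no jump precisely because it is continuous. The decisive computation — and the only step I expect to carry any real content — is the value at the right endpoint: letting $x\to1^-$ and using $\boldF^{(j-1)}(1)=V^{(j-1)}$ together with $\Phi(1)=V^{(j)}$,
\[
\cM_j\Phi(1-0)=\frac{1}{\rho\omega}\bigl(QV^{(j)}-V^{(j-1)}\bigr)=\frac{1}{\rho\omega}\,\rho\omega\,V^{(j)}=V^{(j)},
\]
the middle equality being the Jordan relation $QV^{(j)}=\rho\omega V^{(j)}+V^{(j-1)}$ of Hypothesis~\ref{Du08vers09:hypo:generalizedeigenvector}. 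In other words, the inhomogeneity $-\boldF^{(j-1)}$ is exactly what the generalized-eigenvector structure requires in order to respect the boundary condition $\Phi(1)=V^{(j)}$; recognising this matching is the heart of the argument, everything else being routine.

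It remains to produce the contraction. Because the two inhomogeneous terms cancel in a difference, $\cM_j\Phi_1-\cM_j\Phi_2=\frac{1}{\rho\omega}A_{x_1}\bigl(\Phi_1(\base x-x_1)-\Phi_2(\base x-x_1)\bigr)$, which is the single linear operator $\cL$ independent of $j$; iterating $T$ times yields $\norm{\cM_j^{T}\Phi_1-\cM_j^{T}\Phi_2}_{\infty}\le(\lambda_T/\rho)^{T}\norm{\Phi_1-\Phi_2}_{\infty}$, and Hypothesis~\ref{Du08vers09:hypo:jointspectralradius} lets me choose $T$ with $\lambda_T<\rho$, exactly as in the paragraph preceding Theorem~\ref{Du08vers09:thm:existenceF}. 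Thus some iterate $\cM_j^{T}$ is a contraction of the complete space $\cC^{(j)}$, so $\cM_j$ has a unique fixed point $\boldF^{(j)}\in\cC^{(j)}$ (an operator that maps $\cC^{(j)}$ into itself and commutes with a contracting iterate shares that iterate's unique fixed point). Carrying the induction up to $j=\nu-1$ constructs the whole family, while uniqueness in $\prod_{0\le j<\nu}\cC^{(j)}$ follows at once from the triangular shape — the $j$-th line determines $\boldF^{(j)}$ uniquely once $\boldF^{(j-1)}$ is fixed — and may alternatively be read off line by line from Lemma~\ref{Du08vers09:lemma:functequnicity}.
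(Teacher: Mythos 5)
Your proof is correct, and it is precisely the route the paper acknowledges and then declines: the paper's own proof opens with the remark that ``it is possible to consider one equation at a time, but it is more enlightening to deal globally with the system.'' The paper collects the unknowns into a single matrix-valued function $\bF(x)$ whose columns are the $\boldF^{(j)}(x)$, rewrites the system as $\bF(x)J_{\rho\omega}=\sum_{r<x_1}A_rV+A_{x_1}\bF(\base x-x_1)$, and inverts the Jordan block to obtain one fixed-point equation; its contraction argument then needs an ingredient you avoid entirely, namely an induced norm with $1/\rho\leq\norm{J_{\rho\omega}^{-1}}<1/\mu$ for some $\lambdajsr<\mu<\rho$, combined with digit grouping giving $\lambda_T<\mu$, so that the product of the two norms is below~$1$. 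Your triangular induction instead reduces each line to an inhomogeneous copy of the basic dilation equation whose linear part is independent of $j$ and of the inhomogeneity, so the contraction constant $(\lambda_T/\rho)^T<1$ comes straight from Hypothesis~\ref{Du08vers09:hypo:jointspectralradius} with no specially adapted norm; the Jordan relation $QV^{(j)}=\rho\omega V^{(j)}+V^{(j-1)}$ surfaces only in the boundary computation $\cM_j\Phi(1-0)=V^{(j)}$, which you rightly single out as the step carrying the real content. What you gain is a more elementary argument in which the role of each hypothesis is transparent; what the paper's global formulation buys is reusability, since the matrix identity and the quantities $\bF(\base^{t-1})J_{\rho\omega}^{K+1}$ built on it are exploited verbatim later, in the proof of Lemma~\ref{Du08vers09:lemma:HolderJordan} and in the continuity-at-integers argument of the corollary in Section~\ref{Du08vers09:sec:main:periodicfunction}. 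Your closing steps --- a contracting iterate $\cM_j^T$ forces existence and uniqueness of a fixed point of $\cM_j$ in the complete affine subspace $\cC^{(j)}$, and uniqueness in the product space is read off line by line --- mirror how the paper itself finishes, by appeal to the end of the proof of Theorem~\ref{Du08vers09:thm:existenceF}.
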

\begin{proof}
It is possible to consider one equation at a time, but it is more
enlightening to deal globally with the system. Let us introduce the
matrix $\bF(x)$ of type $d\times \left[0,\nu\right)$ whose columns
are the column vectors $\boldF^{(0)}(x)$, $\boldF^{(1)}(x)$, $\ldots$,
$\boldF^{(\nu-1)}(x)$. The system writes
\begin{equation}\label{Du08vers09:eq:matrixsystemdilationequation}
\bF(x)J_{\rho\omega}=\sum_{r<x_1}A_rV+A_{x_1}\bF(\base x-x_1),
\end{equation}
where~$V$ is the matrix whose columns are the column
vectors~$V^{(0)}$, $V^{(1)}$, $\ldots$, $V^{(\nu-1)}$, and $J_{\rho\omega}$
is the Jordan block defined by
Eq.~\eqref{Du08vers09:eq:Jordanblock}. Because of the positivity
of~$\rho$, the matrix~$J_{\rho\omega}$ is invertible and the system turns
out to be a fixed point equation $\bF=\cL\bF$, namely 
\[
\bF(x)=\sum_{r<x_1}A_rVJ_{\rho\omega}^{-1}+A_{x_1}\bF(\base x-x_1)J_{\rho\omega}^{-1}
\]
for some operator~$\cL$.  Let~$\mu$ lie between~$\lambdajsr$
and~$\rho$, that is $\lambdajsr<\mu<\rho$. Because the spectral radius
of~$ J_{\rho\omega}^{-1}$ is $1/\rho$, there exists an induced norm
which provides $1/\rho\leq \| J_{\rho\omega}^{-1}\| <1/\mu$. Next
there exists an integer~$T$ which gives $\lambdajsr \leq
\lambda_T<\mu$. We consider for a while that the radix
is~$\base^T$. From $
\norm{A_{x_1}} 
\| J_{\rho\omega}^{-1} \|
< \mu\times 1/\mu=1$, we see that the right-hand side of the
previous equation turns out to be the image of~$\bF$ by a
contracting operator. We conclude as in the proof of
Proposition~\ref{Du08vers09:prop:basicprop} that the equation
$\bF=\cL^T\bF$ has a unique continuous solution~$\bF^{(0)}$
from~$[0,1]$ into the space of matrices $\operatorname{M}_{d\times
  \left[0,\nu\right)}({\mathbb C})$ which satisfies the constraints
$\bF^{(0)}(0)=0$, $\bF^{(0)}(1)=V$. Next we prove
that~$\bF^{(0)}$ is the unique solution of the problem, in the same
way we have finished the proof of
Theorem~\ref{Du08vers09:thm:existenceF}.
\end{proof}

\begin{lemma}
Let $\boldA_K$ be the function from~$[0,1]$ into~$\bC^d$ defined by
\begin{multline*}
\boldA_K(x)=
\binom K {\nu-1}(\rho\omega)^{K-\nu+1}\boldF^{(0)}(x)+
\binom K {\nu-2}(\rho\omega)^{K-\nu+2}\boldF^{(1)}(x)+
\dotsb\\+
\binom K 1 (\rho\omega)^{K-1}\boldF^{(\nu-2)}(x)+
(\rho\omega)^K\boldF^{(\nu-1)}(x)
\end{multline*}
for each nonnegative integer~$K$, where $\boldF^{(0)}$, $\ldots$, $
\boldF^{(\nu-1)}$ are the functions whose existence and uniqueness are
asserted by Lemma~\ref{Du08vers09:lemma:systemdilationequation}. Under
Hypotheses~\ref{Du08vers09:hypo:jointspectralradius}
and~\ref{Du08vers09:hypo:generalizedeigenvector}, the running sum
$\boldS_K$ associated to~$V^{(\nu-1)}$ writes
$\boldS_K(x)=\boldA_K(x)+O(\lambda^K)$ for every $\lambda>\lambdajsr$
and the big oh is uniform with respect to~$x$ in~$[0,1]$. Moreover if
$\lambdajsr$ is attained we may replace~$\lambda$ by~$\lambdajsr$.
\end{lemma}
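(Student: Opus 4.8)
The plan is to prove that the candidate $\boldA_K$ is not merely an asymptotic approximation but satisfies \emph{exactly} the same recursion as the running sum $\boldS_K$, so that the difference is driven only by products $A_w$ and is then controlled by the joint spectral radius. To set this up I would collect the functions $\boldF^{(0)},\dots,\boldF^{(\nu-1)}$ into the matrix $\bF(x)$ having these as columns, as in the proof of Lemma~\ref{Du08vers09:lemma:systemdilationequation}, and write $e$ for the last vector of the canonical basis of $\bC^{\nu}$. Since the columns of the matrix $V$ form a Jordan basis for $Q$ on their span, one has $Q^KV=VJ_{\rho\omega}^{K}$, so Formula~\eqref{Du08vers09:eq:Q^KV} reads $Q^KV^{(\nu-1)}=VJ_{\rho\omega}^{K}e$, while the very definition of $\boldA_K$ reads $\boldA_K(x)=\bF(x)J_{\rho\omega}^{K}e$.

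The key step is then a one-line computation: right-multiplying the matrix dilation equation~\eqref{Du08vers09:eq:matrixsystemdilationequation}, namely $\bF(x)J_{\rho\omega}=\sum_{r<x_1}A_rV+A_{x_1}\bF(\base x-x_1)$, by $J_{\rho\omega}^{K}e$ gives
\[
\boldA_{K+1}(x)=\bF(x)J_{\rho\omega}^{K+1}e
=\sum_{r<x_1}A_rQ^KV^{(\nu-1)}+A_{x_1}\boldA_K(\base x-x_1).
\]
This is precisely the recursion of Lemma~\ref{Du08vers09:lemma:genrecrelS} with $C$ replaced by $V^{(\nu-1)}$, which $\boldS_K$ also obeys. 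Hence the difference $E_K=\boldS_K-\boldA_K$ satisfies the \emph{homogeneous} recursion $E_{K+1}(x)=A_{x_1}E_K(\base x-x_1)$ for every $K\geq0$.

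Next I would iterate this over the first $K$ digits of $x$, obtaining $E_K(x)=A_{x_1}\dotsb A_{x_K}\,E_0(\base^Kx-\lfloor\base^Kx\rfloor)$, where $w=x_1\dots x_K$ and the argument of $E_0$ lies in $[0,1]$. The initial error is $E_0(y)=\boldS_0(y)-\boldA_0(y)=V^{(\nu-1)}-\boldF^{(\nu-1)}(y)$, using $\boldS_0=V^{(\nu-1)}$ and $\boldA_0=\bF(\cdot)e=\boldF^{(\nu-1)}$; being continuous on $[0,1]$ it is bounded, so $\norm{E_0}_{\infty}$ is a fixed finite constant. Therefore $\norm{E_K}_{\infty}\leq\norm{E_0}_{\infty}\max_{\length w=K}\norm{A_w}$, uniformly in $x$.

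It remains only to bound $\max_{\length w=K}\norm{A_w}$, which is the familiar joint-spectral-radius estimate. For any $\lambda>\lambdajsr$ choose, exactly as in the discussion preceding Theorem~\ref{Du08vers09:thm:existenceF}, an integer $T$ with $\lambda_T<\lambda$; writing $K=qT+s$ with $0\leq s<T$ and splitting $w$ into blocks of length $T$ gives $\norm{A_w}\leq\lambda_T^{qT}\max_{\length{w'}<T}\norm{A_{w'}}=O(\lambda^K)$, with a constant uniform in $w$. This yields $\norm{\boldS_K-\boldA_K}_{\infty}=O(\lambda^K)$ for every $\lambda>\lambdajsr$, and when $\lambdajsr$ is attained there is a norm and a $T$ with $\lambda_T=\lambdajsr$, so the same estimate holds with $\lambda=\lambdajsr$. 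The only real subtlety is the algebraic identity of the second paragraph: the Jordan-block reformulation of both $\boldA_K$ and $Q^KV^{(\nu-1)}$ collapses the verification to a single application of~\eqref{Du08vers09:eq:matrixsystemdilationequation}, after which the homogeneity of the error recursion makes the joint-spectral-radius bound do all the remaining work.
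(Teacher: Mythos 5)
Your proof is correct and follows essentially the same route as the paper: you show that $\boldA_K$ satisfies exactly the same inhomogeneous recursion as $\boldS_K$, so that the error obeys the homogeneous recursion $E_{K+1}(x)=A_{x_1}E_K(\base x-x_1)$, and you then bound it by the standard joint-spectral-radius block argument, including the refinement when $\lambdajsr$ is attained. Your only departure is presentational: packaging the dilation system as $\bF(x)J_{\rho\omega}$ and right-multiplying by $J_{\rho\omega}^{K}e$ absorbs the paper's explicit appeal to the fundamental (Pascal) recurrence for binomial coefficients into the identity $J_{\rho\omega}^{K+1}=J_{\rho\omega}\,J_{\rho\omega}^{K}$.
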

\begin{proof}
With $\operatorname{error}_K(x)=\boldS_K(x)-\boldA_K(x)$ it is not
difficult to obtain, by substitution, the recursion 
\[
\operatorname{error}_{K+1}(x)=A_{x_1}\operatorname{error}_K(\base x-x_1).
\]
More precisely, we use the recursion of
Lemma~\ref{Du08vers09:lemma:genrecrelS},
System~\eqref{systemdilationequation}, and the fundamental formula of
binomial coefficients.  For $\lambda>\lambdajsr$ we choose an
integer~$T$ such that $\lambdajsr\leq \lambda_T<\lambda$. We readily
obtain
$\norm{\operatorname{error}_{KT}}_{\infty}=O(\lambda_T^{KT})$. Next
the recursion shows that we have
$\norm{\operatorname{error}_{KT+s}}_{\infty}=O(\lambda_T^{KT+s})$ for
$0\leq s<T$, hence the result. If $\lambdajsr$ is some~$\lambda_T$, it
is useless to consider a $\lambda>\lambdajsr$.
\end{proof}

As in Lemma~\ref{Du08vers09:thm:Holderbasic}, we show that
the functions~$\boldF^{(j)}$'s are somewhat regular.
\begin{lemma}\label{Du08vers09:lemma:HolderJordan}
Under Hypotheses~\ref{Du08vers09:hypo:jointspectralradius} and~\ref{Du08vers09:hypo:generalizedeigenvector}, each of the functions
$\boldF^{(j)}$, $0\leq j<\nu$, defined in
Lemma~\ref{Du08vers09:lemma:systemdilationequation} is H\"older with
exponent $\log_{\base}(\rho/\lambda)$ for every
$\lambda>\lambdajsr$. If $\lambdajsr$ is attained, they are H\"older
with exponent $\log_{\base}(\rho/\lambdajsr)$.
\end{lemma}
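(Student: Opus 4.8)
The plan is to mimic the proof of Lemma~\ref{Du08vers09:thm:Holderbasic} as closely as possible, but carried out for the whole system~\eqref{systemdilationequation} at once via the matrix formulation~\eqref{Du08vers09:eq:matrixsystemdilationequation}. Recall that the columns of $\bF(x)$ are the functions $\boldF^{(0)},\dots,\boldF^{(\nu-1)}$, and that $\bF$ is the unique fixed point of the contracting operator $\cL$ whose iterates converge from $\boldG_0(x)=xV$. The key point is that $\cL$ acts through the matrix $J_{\rho\omega}^{-1}$ on the right, so establishing a H\"older bound for the whole matrix $\bF$ simultaneously will, column by column, yield the H\"older property for each $\boldF^{(j)}$ with the same exponent.

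First I would introduce the sequence $(\boldG_K)$ of piecewise-linear approximants defined by $\boldG_0(x)=xV$ and $\boldG_{K+1}=\cL\boldG_K$, exactly as in the proof of Lemma~\ref{Du08vers09:thm:Holderbasic} but now matrix-valued. Each $\boldG_K$ is linear on every interval $[k/\base^K,(k+1)/\base^K]$, $0\le k<\base^K$. Next I would estimate the increment $\boldG_K(t)-\boldG_K(s)$ when $s,t$ lie in a single such interval with $1/\base^{K+1}\le|t-s|<1/\base^K$. Unfolding the recursion $\boldG_{K+1}=\cL\boldG_K$ gives a telescoping expression of the form
\[
\boldG_K(t)-\boldG_K(s)=A_{u_1}\dotsb A_{u_K}\bigl(\boldG_0(t')-\boldG_0(s')\bigr)J_{\rho\omega}^{-K},
\]
where $u_1\dots u_K$ are the common first figures and $s',t'$ are the tails. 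Taking norms and using $\norm{A_{u_1}\dotsb A_{u_K}}\le\lambda^K$ together with $\norm{J_{\rho\omega}^{-K}}\le c\,\rho^{-K}$ for a suitable induced norm yields $\norm{\boldG_K(t)-\boldG_K(s)}\le c\,(\lambda/\rho)^K\norm{V}$, which by the same argument as before bounds the mean speed $\Delta$ on every dyadic square, hence on all of $[0,1]^2$, by $c'\base\norm{V}(\base\lambda/\rho)^K$.

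From here the conclusion follows verbatim as in Lemma~\ref{Du08vers09:thm:Holderbasic}: given $x,y$ with $1/\base^{\nu+1}<|y-x|\le1/\base^\nu$, interpolate between $\bF$ and $\boldG_\nu$ using the uniform convergence $\norm{\boldG_\nu-\bF}_\infty\le\gamma(\lambda/\rho)^\nu$ (valid because $\cL$ is a contraction with ratio $\lambda/\rho$ after passing to radix $\base^T$) and the mean-speed bound, to obtain $\norm{\bF(y)-\bF(x)}\le(\lambda/\rho)^{\log_\base(\rho/\lambda)}$-type control, giving the H\"older exponent $\log_{\base}(\rho/\lambda)$. Passing from $\lambda>\lambdajsr$ to the joint spectral radius is handled exactly as in Proposition~\ref{Du08vers09:thm:Holderwithjsr}: replace $\base$ by $\base^T$ so that $\lambda_T<\mu<\rho$, noting $\log_{\base^T}(\rho^T/\lambda^T)=\log_\base(\rho/\lambda)$, and observe that when $\lambdajsr$ is attained one may take $\lambda=\lambdajsr$. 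Since the bound is for the whole matrix $\bF$ and the induced norm dominates each column, every $\boldF^{(j)}$ inherits the stated exponent.

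The main obstacle I anticipate is purely bookkeeping rather than conceptual: the factor $J_{\rho\omega}^{-K}$ in the telescoped increment is not a scalar, so I must choose the induced norm carefully to guarantee $\norm{J_{\rho\omega}^{-K}}\le c\,\rho^{-K}$ with a constant $c$ independent of $K$. Because the spectral radius of $J_{\rho\omega}^{-1}$ is exactly $1/\rho$ but $J_{\rho\omega}^{-1}$ is not diagonal, the naive operator-norm power bound carries extra polynomial-in-$K$ factors; the clean way around this is to absorb a sliver of room, using any induced norm with $\norm{J_{\rho\omega}^{-1}}<1/\mu$ for some $\mu<\rho$ (such a norm exists, as already invoked in Lemma~\ref{Du08vers09:lemma:systemdilationequation}), so that the product $\lambda_T\norm{J_{\rho\omega}^{-1}}<1$ after grouping letters by $T$. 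This is precisely the device already used to make $\cL$ contracting, so the H\"older estimate goes through with $\lambda/\rho$ replaced by the contraction ratio, and the exponent comes out as claimed.
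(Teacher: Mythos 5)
Your plan is in substance the paper's own proof: the paper also iterates the matrix-valued contracting operator of Lemma~\ref{Du08vers09:lemma:systemdilationequation} (from $\varGamma_0=0$ rather than from $xV$, an immaterial difference), notes the iterates are piecewise linear, and invokes the speed-plus-uniform-convergence mechanism of Lemma~\ref{Du08vers09:thm:Holderbasic} after grouping letters by $T$ as in Proposition~\ref{Du08vers09:thm:Holderwithjsr}. You have moreover put your finger on the one genuine subtlety, which the paper passes over in silence: the telescoped increments carry the right factor $J_{\rho\omega}^{-K}$, and $\norm{J_{\rho\omega}^{-K}}$ grows like $K^{\nu-1}\rho^{-K}$ when $\nu\geq2$, not like $\rho^{-K}$. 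The gap is your final claim that ``the exponent comes out as claimed''. It does not: once you replace the false bound $\norm{J_{\rho\omega}^{-K}}\leq c\,\rho^{-K}$ by $\norm{J_{\rho\omega}^{-1}}<1/\mu$ with $\mu<\rho$, the per-letter ratio becomes $\lambda\,\norm{J_{\rho\omega}^{-1}}>\lambda/\rho$, and the argument delivers the exponent $\log_{\base}(\mu/\lambda)$, strictly smaller than $\log_{\base}(\rho/\lambda)$. For the first assertion of the lemma this loss is repairable, but you must say how: given $\lambda>\lambdajsr$, pick $\mu<\rho$ so close to $\rho$ that $\lambda':=\lambda\mu/\rho>\lambdajsr$, and run your argument with the pair $(\lambda',\mu)$; then $\log_{\base}(\mu/\lambda')=\log_{\base}(\rho/\lambda)$, which suffices because that assertion only concerns the open family of exponents below $\log_{\base}(\rho/\lambdajsr)$. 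For the second assertion (the endpoint exponent when $\lambdajsr$ is attained) no such repair exists inside your scheme: for $\nu\geq2$ \emph{every} induced norm has $\norm{J_{\rho\omega}^{-1}}>1/\rho$, since $\norm{J_{\rho\omega}^{-1}}\leq1/\rho$ would force $\norm{J_{\rho\omega}^{-K}}\leq\rho^{-K}$, contradicting the entry $\binom{-K}{1}(\rho\omega)^{-K-1}$ of $J_{\rho\omega}^{-K}$, of modulus $K\rho^{-K-1}$. Taking $\lambda=\lambdajsr$ does not help; you always land strictly below $\log_{\base}(\rho/\lambdajsr)$.

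This is not a removable bookkeeping loss, and it is worth recording that the step which fails in your write-up is exactly the step the paper's proof asserts without justification (``the argument remains the same''). Indeed the endpoint claim itself breaks down in the Jordan case, as the paper's own van der Corput discrepancy example (Ex.~\ref{Du08vers09:ex:vdCmeandiscrepancy}) shows: there $\base=2$, $\rho\omega=2$, $\nu=2$, and $\lambdajsr=1$ is attained, so the lemma would make $\boldF^{(1)}$ Lipschitz. But on $\left[0,1/2\right)$ the system reads $\boldF^{(1)}(x)=\frac12A_0\boldF^{(1)}(2x)-\frac{x}{2}V_2^0$, and since $A_0V_2^0=V_2^0$ one gets by iteration
\[
\boldF^{(1)}(2^{-m})=2^{-m}A_0^mV_2^1-\frac{m}{2}\,2^{-m}V_2^0,
\]
with $A_0^mV_2^1$ bounded ($A_0$ is diagonalizable with eigenvalues $1,1,1/2$); hence $\norm{\boldF^{(1)}(2^{-m})-\boldF^{(1)}(0)}\sim\frac{m}{2}2^{-m}\norm{V_2^0}$, which is not $O(2^{-m})$. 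So $\boldF^{(1)}$ is H\"older of every exponent $<1$ but not of exponent $1$: the polynomial factor coming from $J_{\rho\omega}^{-K}$ that you tried to absorb is a real feature, visible as a logarithmic correction in the function itself. Your honest treatment of that factor is a merit of the proposal; the error is concluding that the claimed exponents nevertheless follow, which is true (after the re-parametrization above) for the first assertion and false for the attained-endpoint assertion.
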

\begin{proof}
  We define a sequence of parameterized curves in the space of
  matrices of type $d\times\left[0,\nu\right)$. The first one is
  $\varGamma_0=0$ and the others are defined by the recursion
  $\varGamma_{K+1}=\cL\varGamma_K$, where~$\cL$ is the operator we used
  in the proof of Lemma~\ref{Du08vers09:lemma:systemdilationequation},
  namely
\[
\varGamma_{K+1}(x)=\sum_{r<x_1}A_rVJ_{\rho\omega}^{-1}+A_{x_1}\varGamma_K(\base x-x_1)J_{\rho\omega}^{-1}.
\]
Each $\varGamma_K$ is piecewise linear. 

We choose an induced norm and a~$T$ such that $\lambdajsr\leq
\lambda_T<\rho$, and we consider that we use the radix~$\base^T$. As
in the proof of Proposition~\ref{Du08vers09:thm:Holderbasic} we
observe that the sequence~$(\varGamma_{KT})$ converges uniformly
towards the solution~$\bF=(\boldF^{(j)})_{0\leq j<\nu}$ described
in Lemma~\ref{Du08vers09:lemma:systemdilationequation}. This permits
us to show that~$\bF$ is H\"older with exponent
$\log_{\base}(\rho/\lambda_T)$. The argument remains the same: the
maximal speed of a piecewise linear function is the greatest uniform
speed on each piece and to compute a uniform speed it is sufficient to
test two distinct points of the interval under consideration.
\end{proof}

We summarize the obtained result.
\begin{proposition}\label{Du08vers09:thm:Jordanasymptexp}
  Under Hypotheses~\ref{Du08vers09:hypo:jointspectralradius}
  and~\ref{Du08vers09:hypo:generalizedeigenvector},
  System~\eqref{systemdilationequation} has a unique
  solution~$(\boldF^{(j)})_{0\leq j<\nu}$ in the space of continuous
  functions from~$[0,1]$ into~$\bC^{d\times \left[0,\nu\right)}$ such
  that $\boldF^{(j)}(0)=0$, $\boldF^{(j)}(1)=V^{(j)}$ for $0\leq
  j<\nu$. Each of the functions $\boldF^{(j)}$, $0\leq j<\nu$, is
  H\"older with exponent $\log_{\base}(\rho/\lambda)$ for every
  $\lambda>\lambdajsr$. The running sum $\boldS_K$ associated
  to~$V^{(\nu-1)}$ admits the asymptotic expansion
\begin{multline*}
\boldS_K(x)=
\binom K {\nu-1}(\rho\omega)^{K-\nu+1}\boldF^{(0)}(x)+
\binom K {\nu-2}(\rho\omega)^{K-\nu+2}\boldF^{(1)}(x)+
\dotsb\\+
\binom K 1 (\rho\omega)^{K-1}\boldF^{(\nu-2)}(x)+
(\rho\omega)^K\boldF^{(\nu-1)}(x) +O(\lambda^K)
\end{multline*}
for every $\lambda>\lambdajsr$ and the big oh is uniform with respect
to~$x$ in~$[0,1]$. Moreover if $\lambdajsr$ is attained, the  
functions~$\boldF^{(j)}$'s, $0\leq j<\nu$, are H\"older with exponent
$\log_{\base}(\rho/\lambdajsr)$, and we may replace~$\lambda$
by~$\lambdajsr$ in the asymptotic expansion.
\end{proposition}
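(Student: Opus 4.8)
The plan is to assemble the three preceding lemmas, since this proposition is exactly their common summary and contains nothing beyond them. I would first invoke Lemma~\ref{Du08vers09:lemma:systemdilationequation} to fix, once and for all, the family $(\boldF^{(j)})_{0\leq j<\nu}$ as the unique solution of System~\eqref{systemdilationequation} subject to the boundary conditions $\boldF^{(j)}(0)=0$, $\boldF^{(j)}(1)=V^{(j)}$. This furnishes the first assertion of the proposition directly; the only cosmetic point is to identify the product space $\prod_{0\leq j<\nu}\cC^{(j)}$ of Lemma~\ref{Du08vers09:lemma:systemdilationequation} with the space of continuous matrix-valued functions into $\bC^{d\times[0,\nu)}$, which is immediate since stacking the columns $\boldF^{(j)}$ into a matrix is a linear homeomorphism respecting the stated constraints.

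With these functions fixed, the remaining two assertions are quoted from the other two lemmas without further computation. The Hölder regularity with exponent $\log_{\base}(\rho/\lambda)$ for every $\lambda>\lambdajsr$ is precisely Lemma~\ref{Du08vers09:lemma:HolderJordan}, applied to each component $\boldF^{(j)}$. For the asymptotic expansion I would observe that the function $\boldA_K$ introduced in the (unlabelled) second of the three lemmas is, by definition, the very linear combination of binomial coefficients, powers $(\rho\omega)^{K-j}$, and the functions $\boldF^{(j)}$ displayed in the statement. That lemma asserts $\boldS_K(x)=\boldA_K(x)+O(\lambda^K)$ uniformly in $x$, which is exactly the claimed expansion with $\operatorname{error}_K(x)=\boldS_K(x)-\boldA_K(x)$. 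The refinements for the case where $\lambdajsr$ is attained are present in the corresponding clauses of Lemma~\ref{Du08vers09:lemma:HolderJordan} and of the second lemma, and they combine verbatim into the final sentence of the proposition.

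Consequently there is no genuine obstacle here: the proof reduces to a one-line assembly. The only point demanding a moment's care is coherence of the objects across the three lemmas, namely that the family $(\boldF^{(j)})$ produced by Lemma~\ref{Du08vers09:lemma:systemdilationequation} is the same family entering the definition of $\boldA_K$ and appearing in Lemma~\ref{Du08vers09:lemma:HolderJordan}. This is automatic, since all three lemmas are phrased under the identical pair of Hypotheses~\ref{Du08vers09:hypo:jointspectralradius} and~\ref{Du08vers09:hypo:generalizedeigenvector} and refer to the \emph{unique} solution of System~\eqref{systemdilationequation}; uniqueness leaves no ambiguity to reconcile. Hence the proposition follows by simply collecting the three results.
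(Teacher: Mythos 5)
Your proposal is correct and matches the paper exactly: the paper offers no separate argument for this proposition, introducing it only with ``We summarize the obtained result,'' so it is indeed a pure assembly of Lemma~\ref{Du08vers09:lemma:systemdilationequation}, the unlabelled lemma on $\boldA_K$, and Lemma~\ref{Du08vers09:lemma:HolderJordan}. Your observation that the three lemmas refer to the same unique family $(\boldF^{(j)})_{0\leq j<\nu}$ is the only point of coherence to check, and you handle it correctly.
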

Despite of its technical aspect, the obtained result is very simple:
each term of the expansion of $Q^KV^{(\nu-1)}$ is shaken by a somewhat
regular vector-valued function which goes from~$0$ to the vector which
appears at the same place in the expansion.

\subsection{Noise}
So far we did not take into account the case $\rho\leq
\lambdajsr$. Some experiments lead us to expect a chaotic behaviour
for the running sums and we content ourselves with exhibiting a
natural bound.

First we consider an eigenvector~$V$ of~$Q$ associated to an
eigenvalue~$\rho\omega$, with $\rho\geq 0$ and $|\omega|=1$. Note that
this time the modulus~$\rho$ may be~$0$. 
\begin{lemma}\label{Du08vers09:lemma:noiseeigenvector}
Let~$V$ be an eigenvector for~$Q$ with eigenvalue of modulus
$\rho\geq 0$ which satisfies $\rho\leq \lambdajsr$. The sequence
of running sums associated to~$V$ satisfies
$\norm{\boldS_K}_{\infty}=O(\lambda^K)$ for every
$\lambda>\lambdajsr$. If $\lambdajsr$ is attained, we may replace~$O(\lambda^K)$
 by~$O(\lambdajsr^K)$ in case $\rho<\lambdajsr$ and
 by~$O(K\lambdajsr^K)$ in case $\rho=\lambdajsr$.  
\end{lemma}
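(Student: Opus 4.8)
The plan is to unroll the recursion of Lemma~\ref{Du08vers09:lemma:genrecrelS} into a closed sum and then bound each term by means of the joint spectral radius. Since $V$ is an eigenvector of $Q$ for the eigenvalue $\rho\omega$, we have $Q^KV=(\rho\omega)^KV$, so the recursion applied to $V$ in place of $C$ reads
\[
\boldS_{K+1}(x)=(\rho\omega)^K\sum_{r_1<x_1}A_{r_1}V+A_{x_1}\boldS_K(\base x-x_1),\qquad \boldS_0\equiv V.
\]
Writing $\sigma$ for the shift $x\mapsto\base x-x_1$ on $[0,1)$, so that the first digit of $\sigma^jx$ is $x_{j+1}$, and setting $G(x)=\sum_{r_1<x_1}A_{r_1}V$, iterating $K$ times would give
\[
\boldS_K(x)=\sum_{j=0}^{K-1}(\rho\omega)^{K-1-j}A_{x_1}\dotsb A_{x_j}\,G(\sigma^jx)+A_{x_1}\dotsb A_{x_K}\,V,
\]
where the empty product is the identity. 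The function $G$ takes only finitely many values, so $\norm{G(\sigma^jx)}\leq B$ for a constant $B$ uniform in $j$ and $x$.

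The second ingredient is a word-length-uniform bound $\norm{A_w}\leq C_0\lambda^{\length w}$, valid for every word $w$ and any fixed $\lambda>\lambdajsr$. I would obtain it by choosing an induced norm and an integer $T$ with $\lambda_T<\lambda$ (possible because $\lambda_T\to\lambdajsr<\lambda$); then $\norm{A_v}\leq\lambda_T^T<\lambda^T$ for every word $v$ of length $T$, and splitting an arbitrary word into blocks of length $T$ plus a remainder of length less than $T$ yields $\norm{A_w}\leq C_0\lambda^{\length w}$, the constant $C_0$ absorbing the finitely many short-word norms. Feeding these two bounds into the closed form above and using $\rho\leq\lambdajsr<\lambda$ (hence $\rho<\lambda$) gives
\[
\norm{\boldS_K}_{\infty}\leq C_0B\sum_{j=0}^{K-1}\rho^{K-1-j}\lambda^j+C_0\lambda^K\norm V.
\]
Summing the geometric-type series, whose ratio $\rho/\lambda$ is strictly less than $1$, bounds the sum by a constant times $\lambda^K$, so $\norm{\boldS_K}_{\infty}=O(\lambda^K)$, which is the main assertion.

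For the refined statements I would take $\lambda=\lambdajsr$ in the bound $\norm{A_w}\leq C_0\lambdajsr^{\length w}$; this is legitimate when $\lambdajsr$ is attained, since there is then a norm and a $T$ with $\lambda_T=\lambdajsr$. If $\rho<\lambdajsr$ the series $\sum_j(\rho/\lambdajsr)^j$ still converges and the same computation produces $O(\lambdajsr^K)$. If $\rho=\lambdajsr$ every summand equals $\lambdajsr^{K-1}$, so the sum of the $K$ terms is $K\lambdajsr^{K-1}$, giving the announced $O(K\lambdajsr^K)$.

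The only delicate point — the main obstacle — is obtaining the bound $\norm{A_w}\leq C_0\lambda^{\length w}$ cleanly from the joint spectral radius and keeping track of which estimate is available: a strict $\lambda>\lambdajsr$ with an unavoidable geometric loss, versus the sharp $\lambda=\lambdajsr$ when $\lambdajsr$ is attained. The contrast between $\rho<\lambdajsr$ and $\rho=\lambdajsr$ is then precisely the contrast between a convergent geometric series and an arithmetic one, which is exactly where the extra factor $K$ enters.
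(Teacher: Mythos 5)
Your proof is correct. It rests on the same three ingredients as the paper's: the recursion of Lemma~\ref{Du08vers09:lemma:genrecrelS} specialised to an eigenvector, products $A_w$ controlled through the joint spectral radius, and the dichotomy between a convergent geometric series when $\rho<\lambda$ and an arithmetic sum of $K$ equal terms when $\rho=\lambda_*$, which is exactly where the factor $K$ enters in both texts. The organisation, however, is genuinely different. The paper never writes the closed-form sum: it argues by induction directly on the recursion, first under the auxiliary hypothesis that the \emph{single-letter} norms satisfy $\norm{A_r}\leq\lambda$ with $\rho\leq\lambda$ (a sort of converse of \hypo{rsr}), and then reduces the general case to this one by grouping letters into blocks of length $T$ — i.e.\ passing to radix $\base^T$, where $\lambda_T^T$ plays the role of $\lambda$ — applying the first step to the subsequence $(\boldS_{KT})$, and finally interpolating to the indices $KT+s$, $0\leq s<T$, via the recursion. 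You instead unroll the recursion completely and push all the block-splitting into the single uniform estimate $\norm{A_w}\leq C_0\lambda^{\length w}$, valid for every word once $\lambda>\lambda_*$ (or $\lambda=\lambda_*$ when it is attained). These are two packagings of the same submultiplicativity argument: your constant $C_0$ absorbs precisely what the paper's residue-interpolation step handles. What your route buys is a single self-contained computation with explicit constants and no change of radix; what the paper's route buys is that the induction-plus-subsequence pattern is the one it reuses verbatim in neighbouring results (Corollary~\ref{Du08vers09:cor:eigenvectorwitherror}, Proposition~\ref{Du08vers09:prop:noiseforgeneigenvector}), so the proofs share their structure. One cosmetic remark: your closed form splits the paper's last term $\sum_{r_K\leq x_K}$ into the strict-inequality part plus the extra term $A_{x_1}\dotsb A_{x_K}V$, which is consistent with Formula~\eqref{Du08vers09:eq:SKexpression}; no gap there.
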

\begin{proof}
  The functional equation satisfied by the running sum associated to
  the eigenvector~$V$ is
\[
\boldS_{K+1}(x)=\sum_{r_1<x_1}A_{r_1}Q^KV+A_{x_1}\boldS_K(\base x-x_1),
\]
that is 
\[
\boldS_{K+1}(x)=(\rho\omega)^K\sum_{r_1<x_1}A_{r_1}V
+A_{x_1}\boldS_K(\base x-x_1).
\]
Let us assume first that we have a $\lambda$ such that $\rho\leq
\lambda$ and $\norm{A_r}\leq \lambda$ for $0\leq r<\base$ and for some
induced norm, that is let us assume in a certain sense the converse of
Hypothesis~\ref{Du08vers09:hypo:roughspectralradius}. We obtain
\[
\norm{\boldS_{K+1}(x)}\leq \rho^K\sum_{r_1<x_1}\lambda\norm{V}+\lambda\norm{\boldS_K(\base x-x_1)},
\]
hence
\[
\norm{\boldS_{K+1}}_{\infty}\leq \rho^K\base \lambda\norm{V}+\lambda\norm{\boldS_K}_{\infty}.
\]
With $\norm{\boldS_K}_{\infty}=\lambda^K u_K$, we have $u_{K+1}\leq
(\rho/\lambda)^K\base \norm{V}+u_K$. In case $\rho<\lambda$, we have
$u_K\leq u_0+\base\norm{V}/(1-\rho/\lambda)$ and we conclude
$\norm{\boldS_K}_{\infty}=O(\lambda^K)$. In case $\rho=\lambda$, we
have $u_K\leq u_0+K\base \norm{V}$ and we conclude
$\norm{\boldS_K}_{\infty}=O(K\lambda^K)$.

Second we consider the general case, which divides into two sub-cases.
Let us assume that the joint spectral radius is smaller that all
$\lambda_T$, for every induced norm. For any $\lambda>\lambdajsr$, we
may find, for a given induced norm, a~$T$ such that
$\lambdajsr<\lambda_T\leq \lambda$. We apply the previous argument to
the sequence $(\boldS_{KT})$. This gives
$\norm{\boldS_{KT}}_{\infty}=O(\lambda_T^{KT})$ because we have
$\rho\leq \lambdajsr<\lambda_T$. Next,
Lemma~\ref{Du08vers09:lemma:genrecrelS} gives
$\norm{\boldS_{KT+s}}_{\infty}=O(\lambda_T^{KT+s})$ for $0\leq s<T$,
because there is a finite number of~$s$ between~$0$ and~$T$. In this
way, we obtain $\norm{\boldS_K}_{\infty}=O(\lambda_T^K)$, and we
conclude $\norm{\boldS_K}_{\infty}=O(\lambda^K)$ for every
$\lambda>\lambdajsr$.

Let us assume that~$\lambdajsr$ is some~$\lambda_T$ for some induced
norm. In case $0\leq\rho<\lambdajsr=\lambda_T$ the same reasoning as
above, with~$\lambdajsr$ in place of~$\lambda$, gives
$\norm{\boldS_K}_{\infty}=O(\lambdajsr^K)$. In case $\rho=\lambdajsr$
we find $\norm{\boldS_{KT}}_{\infty}=O(K\lambdajsr^{KT})$. As above
and again because there is a finite number of integers between 0
and~$T$, we obtain $\norm{\boldS_K}_{\infty}=O(K\lambdajsr^K)$.
\end{proof}
The next example shows that the upper bound we have found in case
$\rho=\lambdajsr$ has the right order of growth.
\begin{example}[Triangular tiling]\label{Du08vers09:ex:geometricex}
For a real~$\vartheta$, we consider the rotation matrix
\[
R_{\vartheta}=\left(\begin{array}{cc}\cos{\vartheta} & -\sin{\vartheta}\\\sin{\vartheta}&\cos{\vartheta}\end{array}\right).
\]
In the following we use $V_0=E_1$ the first vector of the canonical
basis and $V_{\vartheta}=R_{\vartheta}V_0$. The example is based on the
linear representation
\[
L=\left(\begin{array}{cc}
1&0
	\end{array}\right),\qquad
A_0=R_{-\vartheta},\qquad
A_1=R_{\vartheta},\qquad
C=V_0.
\]
Because we use orthogonal matrices, the joint spectral radius is
$\lambdajsr=1$. We specialize the case to $\vartheta=\pi/3$, so that
we have $\rho=\lambdajsr$. 

We consider the words $u=11$, $v=01$, and the rational
numbers~$x_{2K+2}$ and $y_{2K}$ whose binary expansions are
$(0.uv^K)_2$ and $(0.v^K)_2$ respectively. According to the functional
equation satisfied by~$(\boldS_K)$, we have
\begin{multline*}
\boldS_{2K+2}(x_{2K+2})=R_{-{\pi/3}}V_0+R_{{\pi/3}}\left(R_{-{\pi/3}}V_0+R_{{\pi/3}}S_{2K}(y_{2K})\right)\\=
V_{-{\pi/3}}+V_0+R_{2{\pi/3}}S_{2K}(y_{2K}),
\end{multline*}
\[
\boldS_{2K}(y_{2K})=R_{-{\pi/3}}\left(R_{-{\pi/3}}V_0+R_{{\pi/3}}\boldS_{2K-2}(y_{2K-2})\right)=
V_{-2{\pi/3}}+\boldS_{2K-2}(y_{2K-2}).
\]
With $\boldS_0(\varepsilon)=V_0$ ($\varepsilon$ is the empty word), we obtain by induction  $\boldS_{2K}(y_{2K})=KV_{-2{\pi/3}}+V_0$ and
\[
\boldS_{2K+2}(x_{2K+2})=(K+1)V_0+V_{-{\pi/3}}+V_{2{\pi/3}}=(K+1)V_0.
\]
Hence the~$O(K\lambdajsr^K)$ we have found in
Lemma~\ref{Du08vers09:lemma:noiseeigenvector} is satisfying.

The orbit of the parameterized curve~$\boldS_K$ is illustrated in
Fig.~\ref{Du08vers09:fig:triangular_cells}. In this example the
recurrences which define the sequence~$(\boldS_K)$ and the
sequence~$(\boldF_K)$ are exactly the same. Moreover computing the
values $\boldS_K(x)$ for a rational~$x$ whose binary expansion has
length at most~$K$ is almost the same as computing the values
$\boldF(x)$ for a hypothetical solution of the basic dilation
equation. This is not really true because $A_0V_0$ is not~$V_0$, and
the less significant bits with value~$0$ must be taken into
account. (Look at Fig.~\ref{Du08vers09:fig:triangular_cells}. From the
first to the second picture a half-turn has been applied to the little
pattern of the leftmost picture, because $A_0^3=-\operatorname{I}_2$.)
But $A_0^6=\operatorname{I}_2$ and if we use radix $2^6=64$ in place
of~$2$ this problem disappears. The computation we have made shows
that the basic dilation equation has no solution: if~$\boldF$ was a
solution, it would satisfy $\boldF(y)=V_{-2{\pi/3}}+\boldF(y)$ for
$y=1/3=(0.010101\ldots)_2$.

\begin{figure}
\begin{center}
\includegraphics[width=0.3\linewidth]{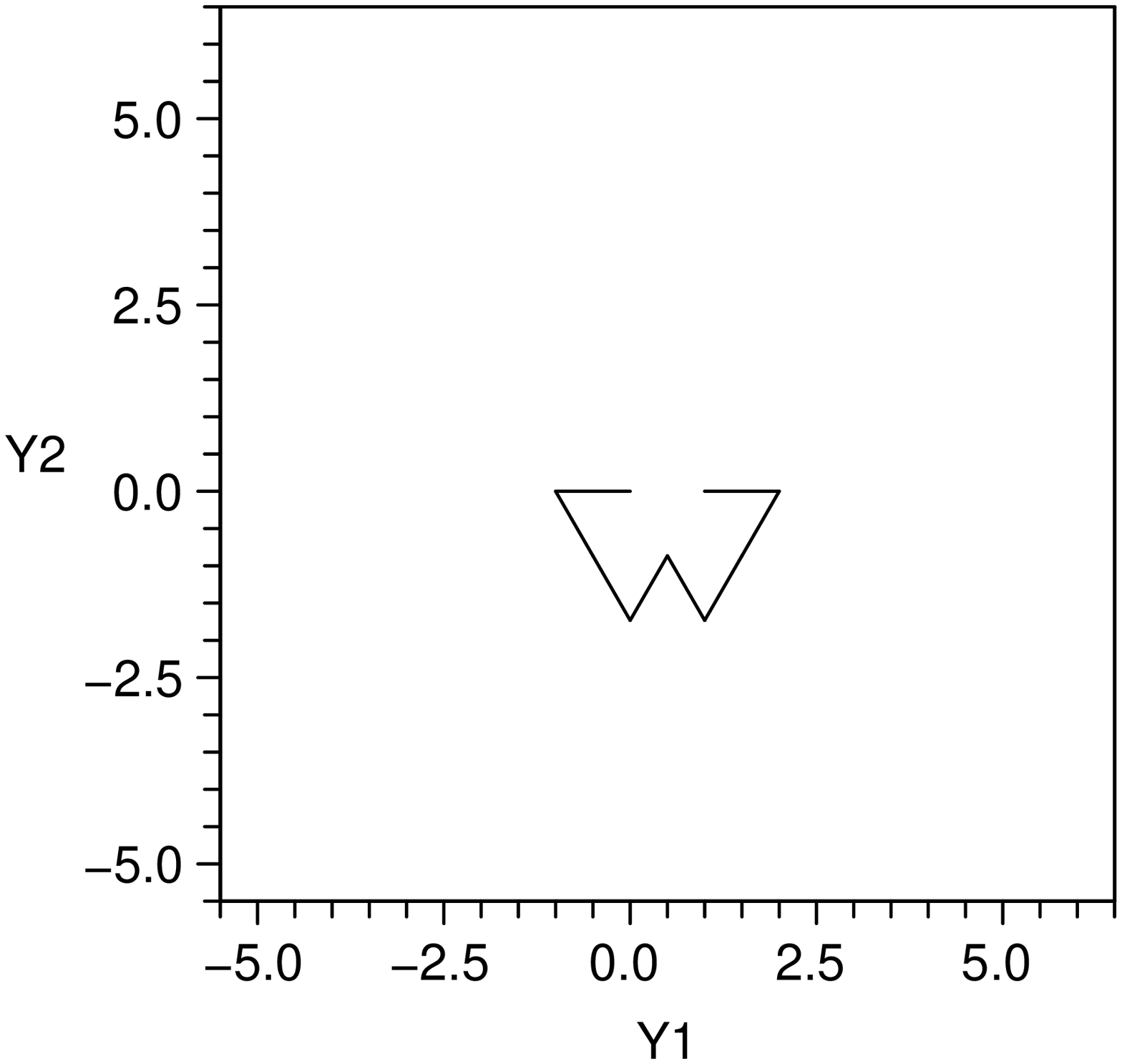}
\hfil
\includegraphics[width=0.3\linewidth]{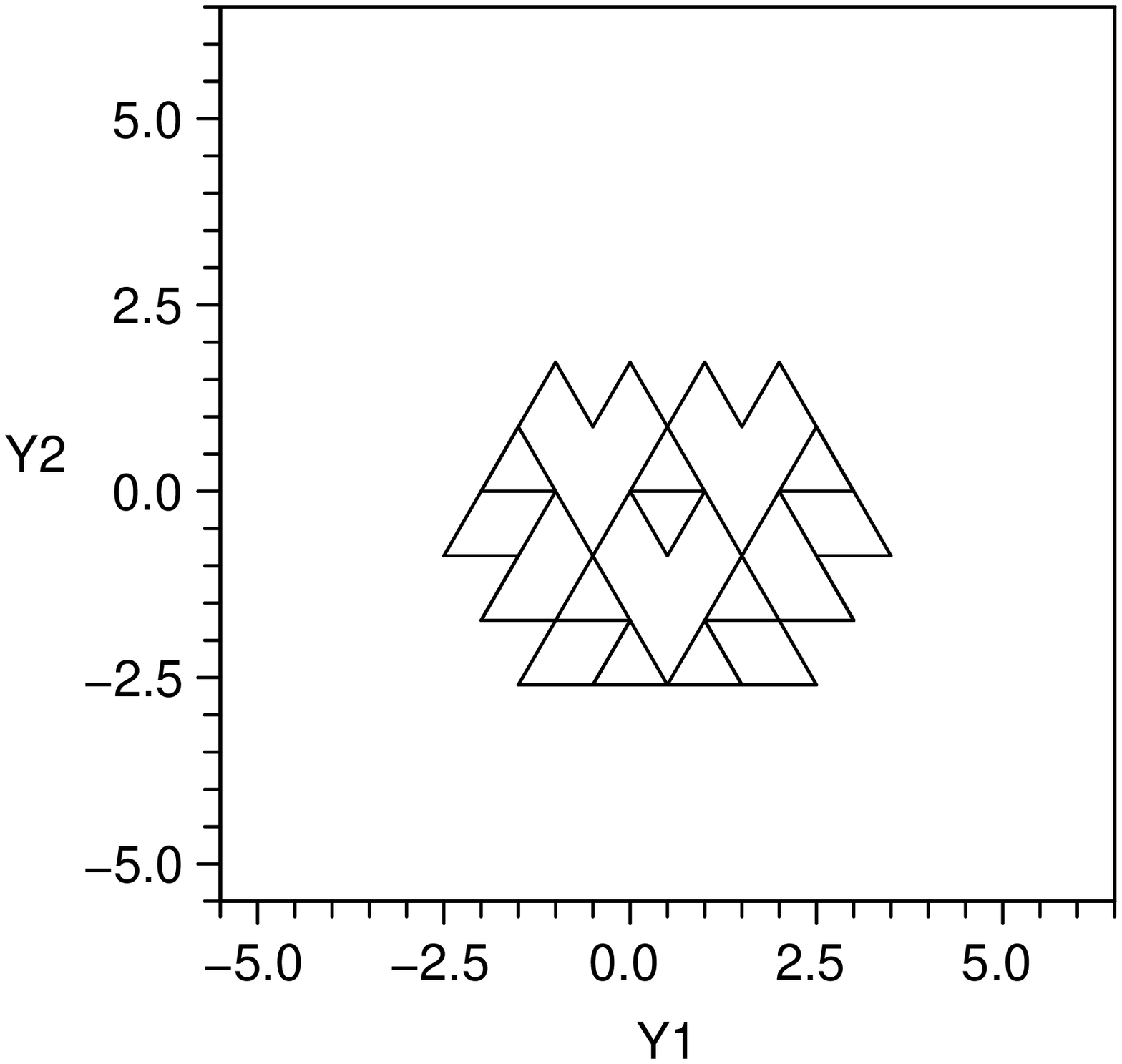}
\hfil
\includegraphics[width=0.3\linewidth]{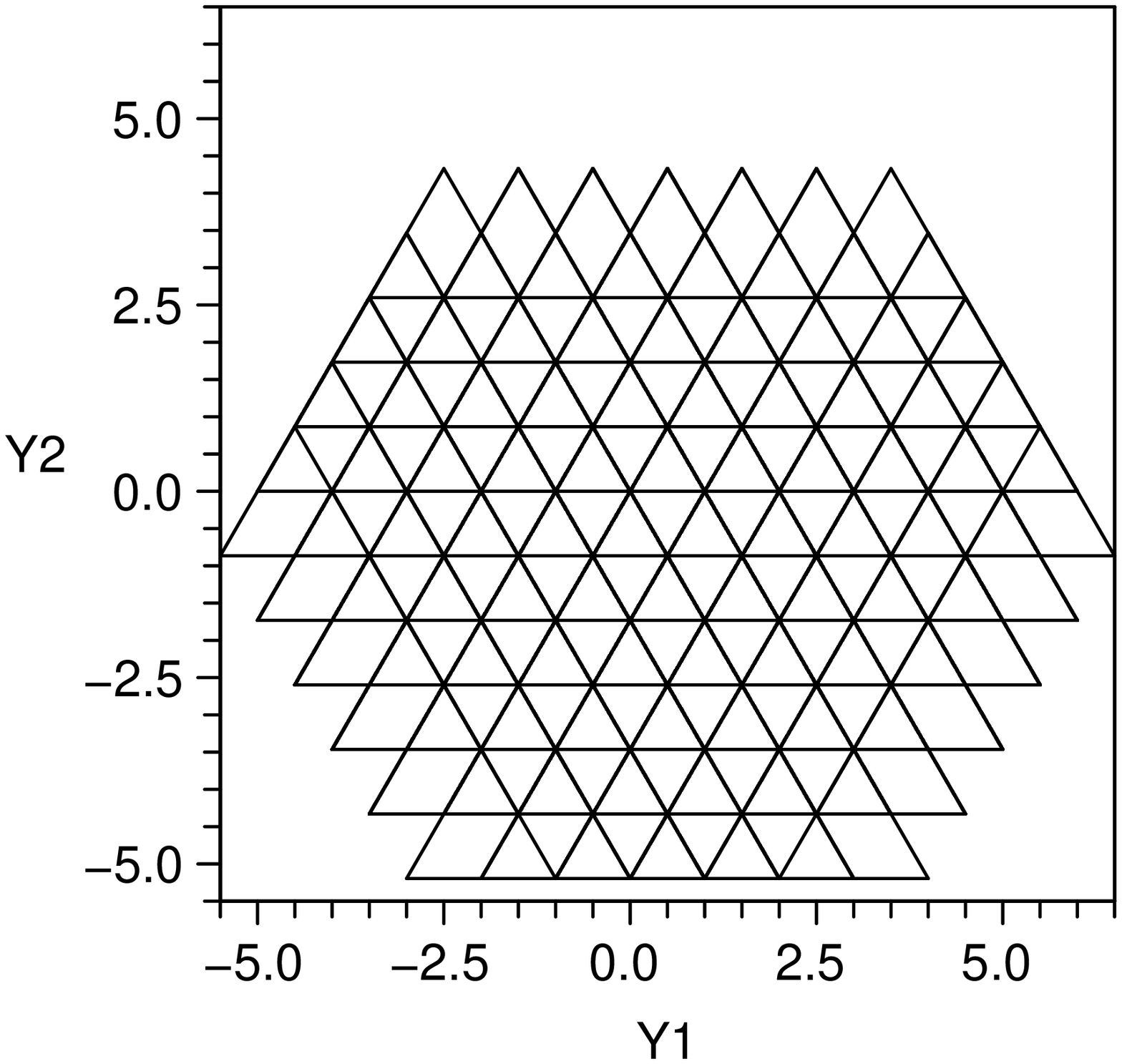}
\end{center}
\caption{\label{Du08vers09:fig:triangular_cells}The image of the parameterized curve~$\boldS_K$ from Ex.~\ref{Du08vers09:ex:geometricex} grows linearly with~$K$. Here we see $\boldS_3$, $\boldS_6$, and $\boldS_{12}$.
}
\end{figure}
\end{example}

We generalize slightly the previous lemma into the following assertion.
\begin{proposition}\label{Du08vers09:prop:noiseforgeneigenvector}
  Under Hypothesis~\ref{Du08vers09:hypo:generalizedeigenvector} with
  $0\leq \rho\leq \lambdajsr$, the sequence of running
  sums~$(\boldS_K)$ associated to~$V^{(\nu-1)}$ satisfies
  $\norm{\boldS_K}_{\infty}=O(\lambda^K)$ for every
  $\lambda>\lambdajsr$. Moreover if~$\lambdajsr$ is attained, we may
  replace~$O(\lambda^K)$ by~$O(\lambdajsr^K)$ in case
  $\rho<\lambdajsr$ and by $O(\lambdajsr^KK^{\nu})$ otherwise.
\end{proposition}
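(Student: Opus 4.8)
The plan is to follow the path of Lemma~\ref{Du08vers09:lemma:noiseeigenvector}, the only new ingredient being that the driving term $Q^KV^{(\nu-1)}$ now grows polynomially in $K$ rather than being a pure exponential. The running sum associated to $V^{(\nu-1)}$ satisfies the functional equation of Lemma~\ref{Du08vers09:lemma:genrecrelS}, namely $\boldS_{K+1}(x)=\sum_{r_1<x_1}A_{r_1}Q^KV^{(\nu-1)}+A_{x_1}\boldS_K(\base x-x_1)$, and Formula~\eqref{Du08vers09:eq:Q^KV} gives at once the bound $\norm{Q^KV^{(\nu-1)}}=O(K^{\nu-1}\rho^K)$ (when $\rho>0$; when $\rho=0$ the vector $Q^KV^{(\nu-1)}$ even vanishes for $K\geq\nu$). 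This polynomial factor is what will turn the linear overshoot $O(K\lambdajsr^K)$ of the eigenvector case into $O(K^\nu\lambdajsr^K)$.

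First I would treat the special case in which some induced norm makes $\norm{A_r}\leq\lambda$ for all $r$ with $\rho\leq\lambda$, that is the situation converse to Hypothesis~\ref{Du08vers09:hypo:roughspectralradius}. Taking norms in the functional equation yields $\norm{\boldS_{K+1}}_{\infty}\leq\base\lambda\,cK^{\nu-1}\rho^K+\lambda\norm{\boldS_K}_{\infty}$ for a constant $c$, and the substitution $\norm{\boldS_K}_{\infty}=\lambda^Ku_K$ turns this into $u_{K+1}\leq\base c\,K^{\nu-1}(\rho/\lambda)^K+u_K$. Telescoping gives $u_K\leq u_0+\base c\sum_{j<K}j^{\nu-1}(\rho/\lambda)^j$. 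When $\rho<\lambda$ the series converges, so $u_K=O(1)$ and $\norm{\boldS_K}_{\infty}=O(\lambda^K)$; when $\rho=\lambda$ the partial sum is of order $K^{\nu}$, so $u_K=O(K^{\nu})$ and $\norm{\boldS_K}_{\infty}=O(K^{\nu}\lambda^K)$. This already exhibits the announced exponent in the boundary regime and specializes to the $O(K\lambda^K)$ of Lemma~\ref{Du08vers09:lemma:noiseeigenvector} when $\nu=1$.

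Then I would remove the artificial norm hypothesis by grouping letters into blocks of length~$T$, exactly as in Lemma~\ref{Du08vers09:lemma:noiseeigenvector}: for the radix~$\base^T$ the single-block matrices are the $A_w$ with $\length w=T$, bounded by $\lambda_T^T$, and the driving term for the subsequence $(\boldS_{KT})$ is built from $Q^{KT}V^{(\nu-1)}$, which Formula~\eqref{Du08vers09:eq:Q^KV} bounds by $O(K^{\nu-1}(\rho^T)^K)$. Thus the special case applies to $(\boldS_{KT})$ with $\lambda$, $\rho$ replaced by $\lambda_T^T$, $\rho^T$ and polynomial degree $\nu-1$ in the $\base^T$-index~$K$. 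If $\lambdajsr$ is not attained, for $\lambda>\lambdajsr$ I choose a norm and a~$T$ with $\lambdajsr<\lambda_T\leq\lambda$; since $\rho\leq\lambdajsr<\lambda_T$ we are in the strict regime and obtain $\norm{\boldS_{KT}}_{\infty}=O(\lambda_T^{KT})$. If $\lambdajsr=\lambda_T$ is attained and $\rho<\lambdajsr$, the same computation with $\lambda=\lambdajsr$ gives $\norm{\boldS_{KT}}_{\infty}=O(\lambdajsr^{KT})$; if $\rho=\lambdajsr$ we fall into the equality sub-case and get $\norm{\boldS_{KT}}_{\infty}=O(K^{\nu}\lambdajsr^{KT})$.

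Finally I would fill in the intermediate indices through the functional equation, since one application multiplies $\boldS_{KT}$ by a single bounded matrix $A_{x_1}$ and adds a term whose norm $O((KT)^{\nu-1}\rho^{KT})$ is dominated by the estimate already in force because $\rho\leq\lambda_T$; as there are only $T$ residues~$s$, the bound propagates to every $\boldS_{KT+s}$, $0\leq s<T$. Rewriting $K=\lfloor n/T\rfloor$ then converts the $\base^T$-index estimates into $O(\lambda^K)$, $O(\lambdajsr^K)$ and $O(\lambdajsr^KK^{\nu})$ in the original index. The one delicate point — and the main obstacle — is precisely this change of index in the boundary sub-case $\rho=\lambdajsr$: the factor $K^{\nu}$ furnished by the special case is measured in the $\base^T$-index, and one must verify that $K^{\nu}=(n/T)^{\nu}=O(n^{\nu})$, so that the polynomial order~$\nu$ (and not some inflated order like $\nu T$) is exactly what survives in the final statement.
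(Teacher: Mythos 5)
Your proposal is correct and takes essentially the same approach as the paper: the paper's own proof merely notes $Q^KV^{(\nu-1)}=O(\rho^KK^{\nu-1})$ and says ``as above,'' meaning exactly the three-step argument of Lemma~\ref{Du08vers09:lemma:noiseeigenvector} (the special case under a norm hypothesis, the blocking by a length~$T$, and the propagation to intermediate indices) that you spell out. Your telescoped sum $\sum_{j<K}j^{\nu-1}(\rho/\lambda)^j$, giving $O(1)$ in the strict case and $O(K^{\nu})$ in the boundary case $\rho=\lambda$, is precisely the refinement the paper leaves implicit.
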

\begin{proof}
The generalized eigenvector $V^{(\nu-1)}$ satisfies
$Q^KV^{(\nu-1)}=O(\rho^KK^{\nu-1})$. As above we conclude
$\norm{\boldS_K}_{\infty}=O(\lambda^K)$ for every
$\lambda>\lambdajsr$. We may refine the computation in case
$\lambdajsr$ is some~$\lambda_T$~: in that case, if $\rho<\lambdajsr$
we have $\norm{\boldS_K}_{\infty}=O(\lambdajsr^K)$, and if
$\rho=\lambdajsr$ we obtain
$\norm{\boldS_K}_{\infty}=O(\lambdajsr^KK^{\nu})$.
\end{proof}

\begin{example}[Powers of~$2$ characteristic sequence]\label{Du08vers09:ex:poersof2}
The sequence~$(u(n))$ which takes value~$1$ on the powers of~$2$
and~$0$ on the others integers admits the generating family
$(u(n),u(2n+1))$ and the linear representation
\[
L=\left(\begin{array}{cc}
0 & 1 
	\end{array}\right),\qquad
A_0=\left(\begin{array}{cc}
1 & 0 \\ 0 & 1 
	\end{array}\right),\qquad
A_1=\left(\begin{array}{cc}
0 & 0 \\ 1 & 0
	\end{array}\right),\qquad
C=\left(\begin{array}{c}
1\\0
	\end{array}\right).
\]
Matrix $Q=A_0+A_1$ has~$1$ as a unique eigenvalue and the
plane~$\bR^2$ is the characteristic subspace
$\operatorname{Ker}(Q-\operatorname{I}_2)^2$.  
Hypothesis~\ref{Du08vers09:hypo:asymptbehav} is satisfied with
\[
\rho=1,\qquad R(K)=K,\qquad
V=\left(\begin{array}{cc}0&1\end{array}\right)^{\trpse}. 
\]
The joint spectral radius is $\lambdajsr=1$
(consider~$A_0$). The fixed point equation is
\[
\left\{\begin{array}{l}
F_1(x)=F_1(2x),\\
F_2(x)=F_2(2x),
       \end{array}\right.
\quad\text{for $0\leq x<\frac{1}{2}$\,;}
\qquad
\left\{\begin{array}{l}
F_1(x)=0,\\
F_2(x)=1+F_1(2x-1),
       \end{array}\right.
\quad\text{for $\frac{1}{2}\leq x<1$.}
\]
It admits the only solution (consider~$F_1$ and first the
interval~$\left[1/2,1\right[$, next the
interval~$\left[1/4,1/2\right[$, and so on)
\[
\boldF(x)=\left\{\begin{array}{ll}
(\begin{array}{cc}0&1\end{array})^{\trpse} & \text{if $0<x\leq 1$,}\\
(\begin{array}{cc}0&0\end{array})^{\trpse} & \text{if $x=0$.}
		 \end{array}\right.
\]
(We do not forget the conditions $\boldF(0)=0$, $\boldF(1)=V$.) The
sequence $(\boldF_K)$ satisfies
\[
\boldF_K(x)=\left\{\begin{array}{ll}
(\begin{array}{cc}0&0\end{array})^{\trpse} & \text{if $0\leq x<1/2^K$,}\\
(\begin{array}{cc}1/K&k/K\end{array})^{\trpse} & \text{if $2^{k-1}/2^K\leq x <2^k/2^K$ for $1\leq k\leq K$,}\\
(\begin{array}{cc}1/K&1\end{array})^{\trpse} & \text{if $x= 1.$}
		   \end{array}\right.
\]
It turns out that $(\boldF_K)$ converges simply towards~$\boldF$. This
implies that we have $\boldS_K(x)=K\boldF(x)+O(1)$ and
$\norm{\boldS_K}_{\infty}\sim K$.
\end{example}
Previous Examples~\ref{Du08vers09:ex:geometricex}
and~\ref{Du08vers09:ex:poersof2} show that under the condition
$\rho=\lambdajsr$ the sequence of running sums~$(\boldS_K)$ can or
cannot define a limit function~$\boldF$ according to the case.

\subsection{General case}
The general case is obtained by expanding the vector~$C$ over a Jordan
basis for~$Q$. Let us recall that the height of a generalized
eigenvector is the dimension of the subspace it spans under the action
of~$Q$. Gathering Prop.~\ref{Du08vers09:thm:Jordanasymptexp}
and~\ref{Du08vers09:prop:noiseforgeneigenvector}, we arrive at the
following assertion.
\begin{theorem}\label{Du08vers09:thm:formalasymptoticexpansion}
  Let $L$, $(A_r)_{0\leq r<\base}$, $C$ be a linear representation of
  a rational formal power series. The sequence of running sums 
  \[
\boldS_K(x)=\sum_{\begin{subarray}{c}\length w = K\\ \Bexp w \leq x\end{subarray}} A_wC
  \]
  admits an asymptotic expansion with error term~$O(\lambda^K)$ for
  every $\lambda>\lambdajsr$, where~$\lambdajsr$ is the joint spectral
  radius of the family~$(A_r)_{0\leq r<\base}$. The used asymptotic
  scale is the family of sequences $\rho^K\binom K \ell$, $\rho>0$,
  $\ell\in\mathbb N_{\geq 0}$. The error term is uniform with respect
  to~$x\in[0,1]$. 

  If~$C$ is a generalized eigenvector~$V$ for~$Q$ of
  height~$\nu$ associated to an eigenvalue~$\rho\omega$ of
  modulus~$\rho>\lambdajsr$ and if~$(V^{(j)})_{0\leq j<\nu}$ is the
  associated family of vectors such that $V^{(\nu-1)}=V$,
  $QV^{(j)}=\rho\omega V^{(j)}+V^{(j-1)}$ for $j>0$ and
  $QV^{(0)}=\rho\omega V^{(0)}$ , from the expansion 
\begin{multline*}
  Q^KV=\binom K {\nu-1}(\rho\omega)^{K-\nu+1}V^{(0)}+
\binom K {\nu-2}(\rho\omega)^{K-\nu+2}V^{(1)}+
\dotsb\\+
\binom K 1 (\rho\omega)^{K-1}V^{(\nu-2)}+
(\rho\omega)^KV^{(\nu-1)}.
\end{multline*}
  is deduced the asymptotic expansion 
\begin{multline*}
\sum_{\begin{subarray}{c}\length w = K\\ \Bexp w \leq x\end{subarray}}A_wV
\mathop{=}_{K\to+\infty}
\binom K {\nu-1}(\rho\omega)^{K-\nu+1}\boldF^{(0)}(x)+
\binom K {\nu-2}(\rho\omega)^{K-\nu+2}\boldF^{(1)}(x)\\+
\dotsb+
\binom K 1 (\rho\omega)^{K-1}\boldF^{(\nu-2)}(x)+
(\rho\omega)^K\boldF^{(\nu-1)}(x)+
O(\lambda^K)
\end{multline*}
where the family $(\boldF^{(j)})_{0\leq j<\nu}$ is the unique solution
to the system of dilation equations
\[
\left\{\begin{array}{rcl}
\rho\omega \boldF^{(0)}(x) &=& 
\displaystyle
\sum_{r_1<x_1}A_{r_1}V^{(0)}+A_{x_1}\boldF^{(0)}(\base x-x_1),\\[3.0ex]
\boldF^{(0)}(x)+\rho\omega\boldF^{(1)}(x) &=& 
\displaystyle
\sum_{r_1<x_1}A_{r_1}V^{(1)}+A_{x_1}\boldF^{(1)}(\base x-x_1),\\
&\vdots&\\
\boldF^{(\nu-2)}(x)+\rho\omega\boldF^{(\nu-1)}(x) &=& 
\displaystyle
\sum_{r_1<x_1}A_{r_1}V^{(\nu-1)}+A_{x_1}\boldF^{(\nu-1)}(\base x-x_1).
\end{array}
\right.
\]
Moreover if the joint spectral radius~$\lambdajsr$ is attained, the
error term may be improved in the following way. Let~$\cV$ be a
Jordan basis for ~$Q$. Let~$m$ be the maximal height of the
generalized eigenvectors~$V$ from~$\cV$ associated to an eigenvalue
of modulus~$\lambdajsr$ such that the vector~$C$ has a nonzero
component over~$V$, with $m=0$ if there does not exist such
generalized eigenvector. The error term may be taken as
$O(\lambdajsr^KK^m)$.
\end{theorem}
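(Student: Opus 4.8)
The plan is to deduce the theorem from the two preceding propositions by exploiting the fact that the running sum $\boldS_K(x)=\sum_{\length w=K,\,\Bexp w\le x}A_wC$ depends \emph{linearly} on the vector~$C$. First I would fix a Jordan basis~$\cV$ for the matrix $Q=\sum_{0\le r<\base}A_r$ and expand $C=\sum_{b\in\cV}c_b\,b$. By linearity the running sum splits as $\boldS_K(x)=\sum_{b\in\cV}c_b\,\boldS_K^{b}(x)$, where $\boldS_K^{b}$ denotes the running sum associated to the basis vector~$b$. Each such~$b$ sits at some position in its Jordan chain, so reading that chain from its bottom up to~$b$ exhibits~$b$ as the top vector $V^{(\nu-1)}$ of a family $(V^{(j)})_{0\le j<\nu}$ satisfying Hypothesis~\ref{Du08vers09:hypo:generalizedeigenvector}, with~$\nu$ the height of~$b$ and~$\rho\omega$ the corresponding eigenvalue.

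Next I would sort the basis vectors according to the modulus~$\rho$ of their eigenvalue. For those~$b$ with $\rho>\lambdajsr$, Proposition~\ref{Du08vers09:thm:Jordanasymptexp} applies and yields, for $\boldS_K^{b}$, an asymptotic expansion along the scale $\rho^K\binom K\ell$ with continuous (indeed H\"older) coefficients $\boldF^{(j)}$ solving the displayed system of dilation equations, and with error $O(\lambda^K)$ uniform in~$x$ for every $\lambda>\lambdajsr$. For those~$b$ with $\rho\le\lambdajsr$, Proposition~\ref{Du08vers09:prop:noiseforgeneigenvector} applies and shows that $\boldS_K^{b}(x)$ contributes nothing to the main expansion but only a term $O(\lambda^K)$, again uniform in~$x$. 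Summing the finitely many contributions gives the existence statement of the first paragraph: the main terms come exactly from the high-modulus chains, and the error is a finite sum of $O(\lambda^K)$'s, hence $O(\lambda^K)$, for every $\lambda>\lambdajsr$. The explicit expansion in the middle of the statement is then nothing but Proposition~\ref{Du08vers09:thm:Jordanasymptexp} read for the single chain generated by~$C$ itself, when $C$ is a generalized eigenvector of height~$\nu$ with $\rho>\lambdajsr$.

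Finally, for the refined error term when $\lambdajsr$ is attained, I would track the powers of~$K$ produced by the two propositions. Choosing an induced norm and an integer~$T$ realising $\lambdajsr=\lambda_T$, the high-modulus chains contribute errors $O(\lambdajsr^K)$ (the improved bound of Proposition~\ref{Du08vers09:thm:Jordanasymptexp}); the chains with $\rho<\lambdajsr$ contribute $O(\lambdajsr^K)$; and the chains with $\rho=\lambdajsr$ and height~$h$ contribute $O(\lambdajsr^K K^{h})$ by Proposition~\ref{Du08vers09:prop:noiseforgeneigenvector}. Since~$\cV$ is finite, the total error is governed by the largest such~$h$ among the~$b$ with $\rho=\lambdajsr$ and $c_b\neq0$; calling this maximum~$m$ (and $m=0$ when no such vector occurs) gives the bound $O(\lambdajsr^K K^{m})$. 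The genuine analytic content---existence, uniqueness, continuity and H\"older regularity of the limit functions, together with the contraction estimates controlling the error---has already been discharged inside Propositions~\ref{Du08vers09:thm:Jordanasymptexp} and~\ref{Du08vers09:prop:noiseforgeneigenvector}, so no new inequality is needed here. The hard part, rather, is purely the bookkeeping: checking that components of~$C$ spread over several vectors of one chain are correctly handled by reading sub-chains from the bottom, and that the finite aggregation of uniform $O$-estimates preserves both the uniformity in~$x$ and the exact power of~$K$. Getting this aggregation right---in particular confirming that the $K^{m}$ factor really is the maximum over the relevant chains and is not inflated by cross-terms within a single chain---is the main point to verify.
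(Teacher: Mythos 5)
Your proposal is correct and follows essentially the same route as the paper: the paper's entire proof consists of expanding $C$ over a Jordan basis of $Q$ and gathering Proposition~\ref{Du08vers09:thm:Jordanasymptexp} (for the chains with eigenvalue modulus $\rho>\lambdajsr$) with Proposition~\ref{Du08vers09:prop:noiseforgeneigenvector} (for those with $\rho\leq\lambdajsr$), exactly as you do. Your bookkeeping--treating each basis vector with nonzero coefficient as the top of its sub-chain read from the bottom, summing the finitely many uniform error terms, and taking $m$ as the maximal height among the modulus-$\lambdajsr$ vectors actually present in $C$--is precisely the aggregation the paper leaves implicit, and it is sound since linearity rules out any cross-terms.
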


The asymptotic expansion is concretely obtained by the algorithm of
page~\pageref{Du08vers09:algoforrationalformalseries} named {\tt
LRtoAE1} (for linear representation to asymptotic expansion). Clearly
this gives an asymptotic expansion for the number-valued sum
$S_K(x)=L\boldS_K(x)$ too. It is noteworthy that two important
hypotheses are made in the algorithm. First it is assumed that the
joint spectral radius~$\lambdajsr$ is known and even more it is
assumed that we know if the joint spectral radius is attained. This is
not an obstacle to the use of the algorithm. If we have at our
disposal an upper bound $\lambda_+>\lambdajsr$, we may
replace~$\lambdajsr$ by~$\lambda_+$, obtaining of course a less
precise expansion. Second it is assumed that we can solve dilation
equations according to
Lemma~\ref{Du08vers09:lemma:systemdilationequation}. This is evidently
wrong in full generality, but the solution is mathematically well
defined and may be computed exactly over a dense subset by the cascade
algorithm.


\begin{algorithm}
\caption{\label{Du08vers09:algoforrationalformalseries} {\tt LRtoAE1}}
\SetKwInOut{KwInput}{Input}
\SetKwInOut{KwOutput}{Output}
\SetLine
\SetArgSty{rm}
  \KwInput{A linear representation $L$, $(A_r)_{0\leq r<\base}$, $C$,
  and its joint spectral radius~$\lambdajsr$.}
  \KwOutput{An asymptotic expansion of the running sum $\boldS_K(x)$
  associated to the vector~$C$ when~$K$ goes to infinity at the
  precision $\Ostar(\lambdajsr^K)$ if~$\lambdajsr$ is attained and
  $O(\lambda^K)$ for a $\lambda$ slightly greater than $\lambdajsr$ if
  not, with respect to the scale $\rho^K\binom K \ell$, $\rho>0$,
  $\ell\in\mathbb N_{\geq 0}$.}
  $Q:=\sum_{0\leq r<\base}A_r$\;
  \eIf{$\lambdajsr$ is attained}
      {$\lambda:=\lambdajsr$}
      {$\lambda:=$ any number between $\lambdajsr$ and the infimum of
      the modulus of eigenvalues of~$Q$ greater than~$\lambdajsr$}
  compute a Jordan basis $\cV$ for the matrix $Q$\;
  expand the column vector~$C$ of the linear representation over the
  Jordan basis, as $C=\sum_{V\in\cV}\gamma_VV$\;
  $\cV_{>}:=$ the set of generalized eigenvectors in~$\cV$ such that $\gamma_V\neq0$ and the associated eigenvalue~$\rho\omega$ has a modulus $\rho>\lambda$\;
  \For{each vector~$V$ in~$\cV_{>}$} 
      {$\rho\omega:=$ the eigenvalue associated to~$V$\;
       $\nu:=$ the height of~$V$\;
       an asymptotic expansion 
\begin{multline*}
\boldA_{V;K}(x):=
\binom K {\nu-1}(\rho\omega)^{K-\nu+1}\boldF^{(0)}(x)+
\binom K {\nu-2}(\rho\omega)^{K-\nu+2}\boldF^{(1)}(x)+\mbox{}\\
\dotsb+
\binom K 1 (\rho\omega)^{K-1}\boldF^{(\nu-2)}(x)+
(\rho\omega)^K\boldF^{(\nu-1)}(x)\qquad
\end{multline*}
       is deduced from Proposition~\ref{Du08vers09:thm:Jordanasymptexp}}
  $\boldA_K(x):=\sum_{V\in\cV_{>}}\gamma_{V}\boldA_{V;K}(x)$\;
  $\operatorname{error~term}_K(x):=O(\lambda^K)$\;
  \If{$\lambdajsr$ is attained}
     {$\cV_=:=$ the set of generalized eigenvectors in~$\cV$ such that
     $\gamma_V\neq0$ and the associated eigenvalue~$\rho\omega$ has a
     modulus $\rho=\lambdajsr$\;
     $m:=0$\;
    \For{each~$V$ in~$\cV_{=}$}
        {$\nu:=$ the height of~$V$\;
         $m:=\max(m,\nu)$}
     $\operatorname{error~term}_K(x):=O(\lambdajsr^KK^m)$}
  \Return{
     \[\displaystyle \boldS_K(x)\mathop{=}_{K\to+\infty}\boldA_K(x)+\operatorname{error~term}_K(x)\]
  }
\end{algorithm}

\begin{example}[Lipmaa--Wallen's formal power series]\label{Du08vers09:ex:LipmaaWallen}
For a nonnegative integer~$K$, the ring ${\mathbb Z}/2^K{\mathbb Z}$
may be equipped with two additions.  The first one is the quotient
addition~$+$ which comes form the addition of the ring of
integers~$\mathbb Z$. The second one uses the identification of the
set ${\mathbb Z}/2^K{\mathbb Z}$ with the set $({\mathbb Z}/2{\mathbb
Z})^K$ through the correspondence which maps a $K$-tuple of bits
$(\varepsilon_k)_{0\leq k<K}$ onto the integer
$\varepsilon_0+2\varepsilon_1+\dotsb+2^{K-1}\varepsilon_{K-1}$. It is
the bitwise addition~$\oplus$, or exclusive-or. The cryptanalysts
Lipmaa and Wallen have studied the propagation of differences (in the
sense of~$+$) through~$\oplus$ and more precisely the additive
differential probability $\operatorname{adp}$ which maps a triple
$(\alpha,\beta,\gamma)\in ({\mathbb Z}/2^K{\mathbb Z})^3$ onto the
probability that the difference $((x+\alpha)\oplus (y+\beta))-x\oplus
y$ has value~$\gamma$ for $x$, $y$ taken from ${\mathbb Z}/2^K{\mathbb
Z}$ with equiprobability. By coding every triple
$(\alpha,\beta,\gamma)$ as a word~$w$ over the alphabet ${\mathbb
Z}/8{\mathbb Z}$ (with $w_k=4\alpha_k+2\beta_k+\gamma_k$ for $0\leq
k<K$), \citet{LiWaDu04} show that $\operatorname{adp}$ turns out to be
a rational series. With $(E_j)_{1\leq j\leq 8}$ the canonical basis of
$\mathbb Q^8$, it admits the $8$-dimensional linear representation
\[
L=\sum_{0\leq r<8}E_i^{\trpse},\qquad (A_r)_{0\leq r<8},\qquad C=E_1
\]
where~$A_0$ is defined as
  \[
  A_0 = \frac14\left(
    \begin{array}{cccccccc}
      4 & 0 & 0 & 1 & 0 & 1 & 1 & 0 \\
      0 & 0 & 0 & 1 & 0 & 1 & 0 & 0 \\
      0 & 0 & 0 & 1 & 0 & 0 & 1 & 0 \\
      0 & 0 & 0 & 1 & 0 & 0 & 0 & 0 \\
      0 & 0 & 0 & 0 & 0 & 1 & 1 & 0 \\
      0 & 0 & 0 & 0 & 0 & 1 & 0 & 0 \\
      0 & 0 & 0 & 0 & 0 & 0 & 1 & 0 \\
      0 & 0 & 0 & 0 & 0 & 0 & 0 & 0
    \end{array}\right).
  \]
and the other matrices $A_r$ with $r\neq 0$ are obtained from~$A_0$ by
permuting row $i$ with row $i \oplus r$ and column $j$ with column $j
\oplus r$. Using the maximum absolute column sum norm, we find
$\lambdajsr\leq 1$ and because~$1$ is an eigenvalue for all
matrices~$A_r$, $0\leq r<8$, we have $\lambdajsr=1$. 
\begin{figure} 
\begin{center}
\includegraphics[width=0.45\textwidth]{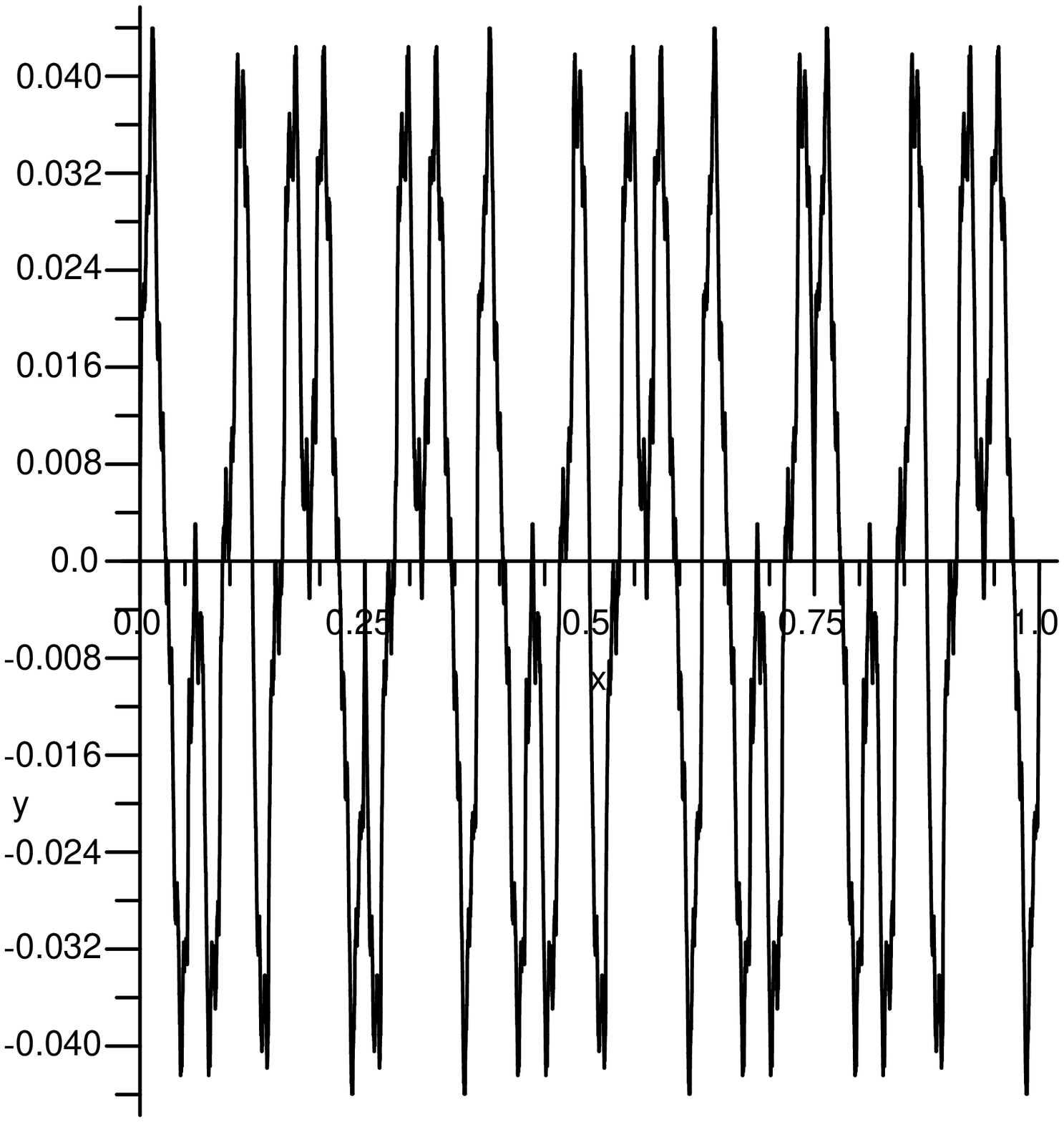}
\hfil
\includegraphics[width=0.45\textwidth]{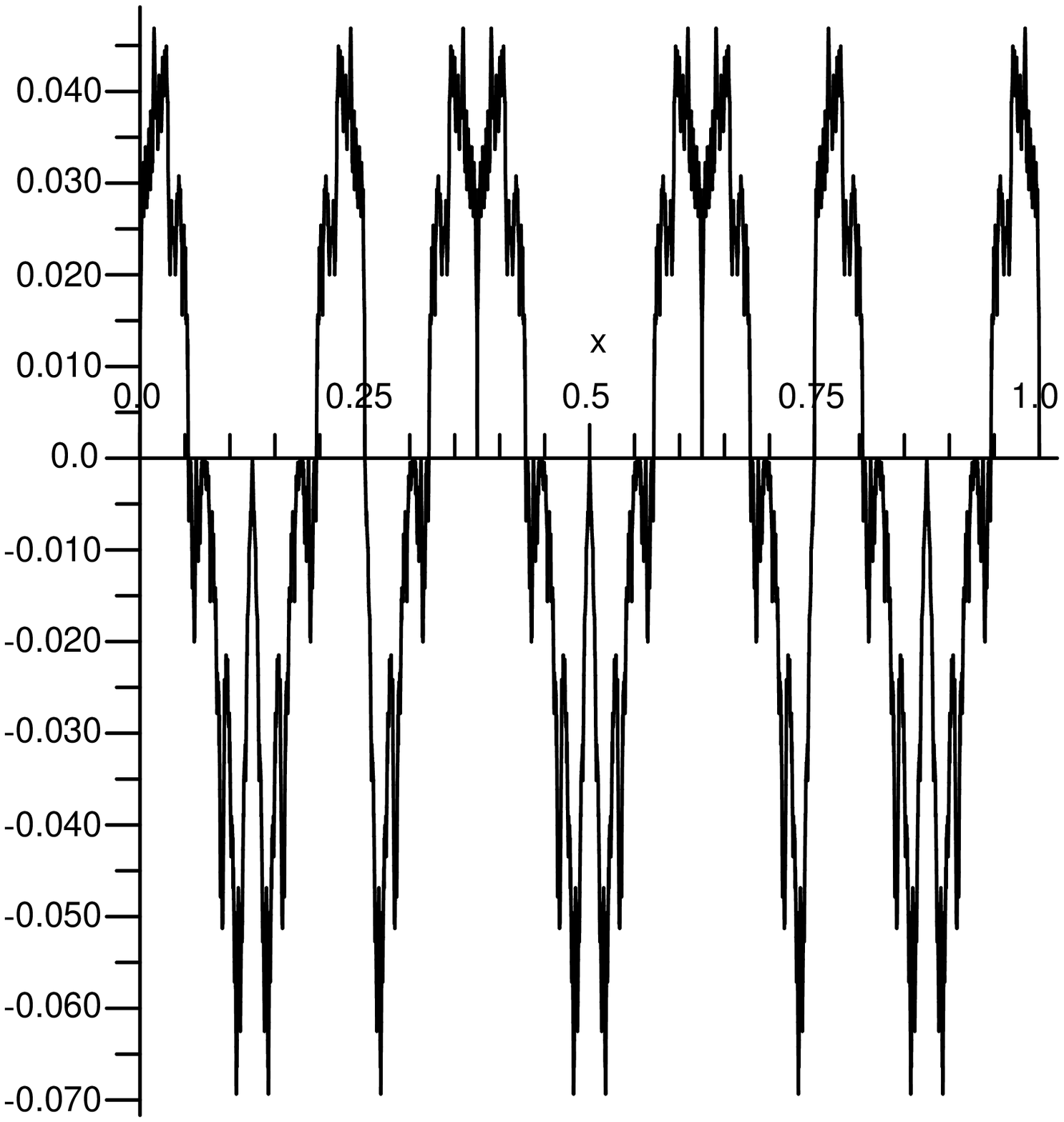}
    \end{center}
    \caption{\label{Du08vers09:fig:LipmaaWallenadp}Refinement
      functions which originate from cryptanalysis
      (Ex.~\ref{Du08vers09:ex:LipmaaWallen}).}
\end{figure}

The matrix 
\[
Q=A_0+A_1+\dotsb+A_7=
\frac{1}{4}
 \left( 
\begin{array}{cccccccc} 
7&2&2&1&2&1&1&0\\2&
7&1&2&1&2&0&1\\2&1&7&2&1&0&2&1\\1&
2&2&7&0&1&1&2\\2&1&1&0&7&2&2&1\\1&
2&0&1&2&7&1&2\\1&0&2&1&2&1&7&2\\0&
1&1&2&1&2&2&7
\end{array} \right) 
\]
is diagonalizable, with eigenvalues~$4$, $2$, $1$, and eigenspaces
respectively of dimension~$1$, $3$, $4$. The vector $C=E_1$ decomposes
as $C=V_4+V_2+V_1$, where
\[
V_4=\frac{1}{8}\left(\begin{array}{c}
1\\1\\1\\1\\1\\1\\1\\1
		     \end{array}\right),\qquad
V_2=\frac{1}{8}\left(\begin{array}{c}
3\\1\\1\\-1\\1\\-1\\-1\\-3
		     \end{array}\right),\qquad
V_1=\frac{1}{4}\left(\begin{array}{c}
2\\-1\\-1\\0\\-1\\0\\0\\1
		     \end{array}\right)
\]
are respectively eigenvectors for the eigenvalues~$4$, $2$,
$1$. According to
Theorem~\ref{Du08vers09:thm:formalasymptoticexpansion}, we obtain the
following asymptotic expansion (with $F_{\rho}=L\boldF_{\rho}$ for
$\rho=4$, $2$)
\[
\sum_{\begin{subarray}{c}\length w =K\\(0.w)_{{8}}\leq x
      \end{subarray}}\operatorname{adp}(w)\mathop{=}_{K\to+\infty}
4^K F_4(x)+2^K F_2(x)+O(K),
\]
where the functions~$\boldF_4$ and~$\boldF_2$ are determined by the
dilation equation (as usual~$x_1$ is the first octal digits after
the octal point of $x\in\left[0,1\right)$)
\[
\boldF_{\rho}(x)=\frac{1}{\rho}\sum_{r<x_1}A_rV_{\rho}+
\frac{1}{\rho}A_{x_1}\boldF_{\rho}(8x-x_1),\qquad
\text{with $\boldF_{\rho}(0)=0$ and $\boldF_{\rho}(1)=V_{\rho}$}
\]
for $\rho=4$, $2$. Functions~$\boldF_4$ and~$\boldF_2$ are
respectively H\"older with exponent $\log_84=2/3$ and $\log_82=1/3$.
Function $F_4=L\boldF_4$ is very near the identity and the graph of
$x\mapsto F_4(x)-x$ is drawn on the left of
Fig.~\ref{Du08vers09:fig:LipmaaWallenadp}. On the right is the graph
of $F_2=L\boldF_2$. Both functions have an amplitude of order about
$0.05$. The obtained result improves~\citep{DuLiWa07} which gives only
an error term~$O(2^K)$.
\end{example}

\section{Radix-rational sequences and periodic functions}\label{Du08vers09:sec:main:periodicfunction}
It is usual that sequences rational with respect to a radix introduce
periodic functions in logarithmic scale. These periodic functions
appear either by subtle elementary arguments, like explicit expression
of the sequence with fractional part of a logarithm in base~$\base $,
or by sophisticated argument which express partial sums by contour
integral of a meromorphic function whose poles are regularly spaced on
a vertical line. With the approach used in this paper, the
introduction of periodic functions is natural. Nevertheless these
functions are generally only pseudo-periodic functions and we will see
why they are periodic in the previous works on the subject.

\subsection{Asymptotic expansion}
To a rational series~$S$ defined by a linear representation~$L$,
$A_0$, $\ldots$, $A_{\base -1}$, $C$ is associated a sequence~$u$
defined in the following way: we write the integer~$n$ with respect to
the radix~$\base$, $n=\Bint w$, and the value of the sequence for~$n$
is the value of the rational series for the word~$w$, that is
$u(n)=(S,w)$.  We want to evaluate the partial sums $\sum_{0\leq n\leq
  N}u(n)$ of the sequence for~$N$ large. With $t\in\left[0,1\right)$
and~$K$ integer, we may write $N=\base ^{K+t}$, that is~$K$ is the
integer part of~$\log_{\base}N$ and~$t$ is its fractional part. The
partial sum up to~$N$ is the sum of the partial sum up to~$\base^K-1$
and a complementary term. The first appears as the sum of~$(S,w)$ over
all words~$w$ of length not greater than~$K$ minus the sum over all the
previous words which begin with a~$0$. The complementary term appears
as the running sum associated to~$S$ we have studied, but without the
words of length~$K+1$ which begin by a~$0$. This remark gives the
following formula
\begin{multline*}
  \sum_{n\leq \base^{K+t}}u(n)=
\sum_{0\leq k\leq K}\biggl(\sum_{\length w =k}LA_wC
-\sum_{\length{w'}=k-1}LA_0A_{w'}C\biggr)\\
+
\biggl(
\sum_{\begin{subarray}{c}\length w=K+1\\ \Bint w  \leq \base^{K+1}\base^{t-1}
  \end{subarray}}LA_wC
-\sum_{\length{w'}=K}LA_0A_{w'}C\biggr)
\end{multline*}
that is
\begin{equation}\label{Du08vers09:eq:runningsumu}
  \sum_{n\leq \base^{K+t}}u(n)=L(\operatorname{I}_d-A_0)
\sum_{0\leq k\leq K}Q^kC+
\sum_{\begin{subarray}{c}\length w=K+1\\ \Bint w \leq \base^{K+1}\base^{t-1}
  \end{subarray}}LA_wC.
\end{equation}
Just as we have considered the sums $\boldS_K(x)$, we introduce the
vector-valued sums
\begin{equation}\label{Du08vers09:eq:vectorvaluedrunningsumu}
\boldSigma_N=
\sum_{(w)_\base\leq N}A_wC,
\end{equation}
that is 
\begin{equation}\label{Du08vers09:eq:vectorvaluedrunningsumubis}
\boldSigma_N=
(\operatorname{I}_d-A_0)
\sum_{0\leq k\leq K}Q^kC+
\sum_{\begin{subarray}{c}\length w=K+1\\ \Bint w \leq \base^{K+1}\base^{t-1}
  \end{subarray}}A_wC
\end{equation}
with $N=\base^{K+t}$, $0\leq t<1$. As in the previous section, we
decompose the column vector~$C$ over a Jordan basis for the
matrix~$Q$. Hence we consider for the rest of the section a family of
linearly independent vectors $(V^{(j)})_{0\leq j<\nu}$ which satisfies
$QV^{(j)}=\rho\omega V^{(j)}+V^{(j-1)}$ for $j>0$ and
$QV^{(0)}=\rho\omega V^{(0)}$, with $\rho>\lambdajsr$ and
$|\omega|=1$. In other words, we assume
Hypotheses~\ref{Du08vers09:hypo:generalizedeigenvector}
and~\ref{Du08vers09:hypo:jointspectralradius}.  We want to determine
an asymptotic expansion for the running sums $\boldSigma_N$ associated
to $V=V^{(\nu-1)}$.

Formula~\eqref{Du08vers09:eq:vectorvaluedrunningsumubis} expresses the
running sums of the sequence as the sum of two terms.  For the first
term in the right member of
Formula~\eqref{Du08vers09:eq:vectorvaluedrunningsumubis}, we use
formul\ae\ like Eq.~\eqref{Du08vers09:eq:Q^KV} and the following
elementary summation formula
\begin{equation}\label{Du08vers09:eq:summatoryformulaQ^KV}
\sum_{k=\ell}^K\binom k \ell \alpha^{k-\ell}=
\left\{
\begin{array}{ll}
\displaystyle
\binom K {\ell+1} + \binom K \ell & \begin{array}{l}\text{if $\alpha=1$,}\end{array}\\[2.ex]
\begin{aligned}
\displaystyle
\binom K \ell \frac{\alpha^{K-\ell+1}}{\alpha-1}  +
\sum_{j=1}^{\ell}(-1)^j\binom K
{\ell-j}\frac{\alpha^{K-\ell+j}}{(\alpha-1)^{j+1}} &
\\
\displaystyle \mbox{} +
 (-1)^{\ell+1} \frac{1}{(\alpha-1)^{\ell+1}} &
\end{aligned}
& \begin{array}{l}\phantom{\displaystyle\sum_{j=1}^{\ell}}
\\\text{if $\alpha\neq 1$.}\end{array}
\end{array}
\right.
\end{equation}
As a consequence, the first term of
Formula~\eqref{Du08vers09:eq:vectorvaluedrunningsumubis} has an
asymptotic expansion in the asymptotic scale $\rho^K\binom K \ell$,
$\rho>0$, $\ell\in {\mathbb N}_{\geq 0}$.
\begin{lemma}\label{Du08vers09:lemma:firsttermasymptotic}
  Under Hypothesis~\ref{Du08vers09:hypo:generalizedeigenvector} with
  $C=V^{(\nu-1)}$ the first term in the right member of
  Formula~\eqref{Du08vers09:eq:vectorvaluedrunningsumubis} has the
  following expansion
\begin{multline}\label{Du08vers09:eq:Q^KVexpansioncaserhomeganeq1}
   (\operatorname{I}_d-A_0) \sum_{k=0}^KQ^kV^{(\nu-1)}=
\binom{K}{\nu-1}\frac{(\rho\omega)^{K-\nu+2}}{\rho\omega-1}(\operatorname{I}_d-A_0)V^{(0)}\\
+
\sum_{\ell=0}^{\nu-2}\binom{K}{\ell}(\rho\omega)^{K-\ell}\left[
\frac{\rho\omega}{\rho\omega-1}(\operatorname{I}_d-A_0)V^{(\nu-\ell-1)}
\phantom{\sum_{j=\ell+1}^{\nu-1}}
 \right.\\\left.
+
\sum_{j=\ell+1}^{\nu-1}\frac{(-1)^{j-\ell}}{(\rho\omega-1)^{j-\ell+1}}(\operatorname{I}_d-A_0)V^{(\nu-j-1)}
\right]\\
+
\sum_{\ell=0}^{\nu-1}(-1)^{\ell+1} \frac{1}{(\rho\omega-1)^{\ell+1}}(\operatorname{I}_d-A_0)V^{(\nu-\ell-1)}
\end{multline}
in case $\rho\omega\neq1$ and
\begin{multline}\label{Du08vers09:eq:Q^KVexpansioncaserhomega=1}
 (\operatorname{I}_d-A_0)\sum_{k=0}^K Q^kV^{(\nu-1)}=
\binom K\nu (\operatorname{I}_d-A_0) V^{(0)}\\ +
\sum_{\ell=1}^{\nu-1}\binom K\ell (\operatorname{I}_d-A_0)\left[V^{(\nu-\ell)}
+V^{(\nu-\ell-1}\right]+(\operatorname{I}_d-A_0)V^{(\nu-1)}
\end{multline}
in case $\rho\omega=1$.
\end{lemma}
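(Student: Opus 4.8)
The plan is to compute $\sum_{k=0}^{K}Q^{k}V^{(\nu-1)}$ directly from the closed form~\eqref{Du08vers09:eq:Q^KV}, and then to left-multiply by the constant matrix $\operatorname{I}_d-A_0$, which commutes with the finite sum over~$k$. First I would reindex~\eqref{Du08vers09:eq:Q^KV} by setting $\ell=\nu-1-m$, where $m$ labels the vector $V^{(m)}$, so that the coefficient of $V^{(\nu-1-\ell)}$ in $Q^{k}V^{(\nu-1)}$ is exactly $\binom{k}{\ell}(\rho\omega)^{k-\ell}$ for $0\le\ell\le\nu-1$. Summing over $k$ and exchanging the two finite summations reduces the whole computation to evaluating $\sum_{k=\ell}^{K}\binom{k}{\ell}(\rho\omega)^{k-\ell}$ for each~$\ell$, which is precisely the elementary formula~\eqref{Du08vers09:eq:summatoryformulaQ^KV} with $\alpha=\rho\omega$. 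This is what splits the argument into the two announced cases.

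For $\rho\omega=1$ the inner sum equals $\binom{K}{\ell+1}+\binom{K}{\ell}$. A single index shift $\ell\mapsto\ell-1$ on the $\binom{K}{\ell+1}$ part recollects everything in the scale $\binom{K}{\ell}$: the boundary values $\ell=\nu$ and $\ell=0$ produce the isolated terms $\binom{K}{\nu}V^{(0)}$ and $V^{(\nu-1)}$, while each interior index contributes the pair $V^{(\nu-\ell)}+V^{(\nu-\ell-1)}$. Applying $\operatorname{I}_d-A_0$ then yields~\eqref{Du08vers09:eq:Q^KVexpansioncaserhomega=1} verbatim.

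For $\rho\omega\neq1$ I would substitute the three-part right-hand side of~\eqref{Du08vers09:eq:summatoryformulaQ^KV} and reorganize the resulting double sum according to the asymptotic scale $\binom{K}{\ell}(\rho\omega)^{K-\ell}$. The decisive observation is that a term carrying $\binom{K}{\ell-j}$ together with $(\rho\omega)^{K-\ell+j}$ already has the form $\binom{K}{\ell'}(\rho\omega)^{K-\ell'}$ with $\ell'=\ell-j$. Collecting, for fixed $\ell'$, the leading contribution coming from $\ell=\ell'$ (with coefficient $\frac{\rho\omega}{\rho\omega-1}$) together with the contributions $j\ge1$ coming from $\ell=\ell'+j$ recovers exactly the bracketed coefficient of~\eqref{Du08vers09:eq:Q^KVexpansioncaserhomeganeq1}; the top binomial $\binom{K}{\nu-1}$ falls out separately because at $\ell'=\nu-1$ the correction sum is empty. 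Finally, the purely constant terms $(-1)^{\ell+1}(\rho\omega-1)^{-(\ell+1)}V^{(\nu-1-\ell)}$ assemble, after multiplication by $\operatorname{I}_d-A_0$, into the last line of the claimed identity.

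The main obstacle is the bookkeeping in the case $\rho\omega\neq1$: one must relabel $j\mapsto j-\ell$ (equivalently $\ell'=\ell-j$) in the inner sum and check that the ranges match, namely that after regrouping the inner index $j$ runs from $\ell+1$ to $\nu-1$ and that each $V^{(\nu-j-1)}$ lands with the correct sign and power of $(\rho\omega-1)$. No estimate or limit is involved — everything is a finite rearrangement of sums — so once the index ranges are verified the two displayed formulae follow, and pulling $\operatorname{I}_d-A_0$ through the sum at the very end is immediate.
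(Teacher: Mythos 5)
Your proposal is correct and follows exactly the route the paper intends: the paper's (implicit) proof of this lemma is precisely to substitute the Jordan-block expansion~\eqref{Du08vers09:eq:Q^KV} of $Q^kV^{(\nu-1)}$, exchange the finite sums, and apply the elementary summation formula~\eqref{Du08vers09:eq:summatoryformulaQ^KV} with $\alpha=\rho\omega$, splitting into the cases $\rho\omega=1$ and $\rho\omega\neq1$. Your reindexing bookkeeping (the substitution $\ell'=\ell-j$ with the inner index running from $\ell'+1$ to $\nu-1$, and the Pascal-rule shift in the case $\rho\omega=1$) checks out and simply makes explicit what the paper leaves to the reader.
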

\noindent
The previous expansion is not an asymptotic expansion but it may
easily converted into such an expansion in the scale $\rho^K\binom
K\ell$, $\rho>0$, $\ell\in\bN_{\geq0}$, with
$K=\lfloor\log_{\base}N\rfloor$. The error term depends on the
relative places of~$\lambdajsr$, $\rho$, and~$1$. We will return on
this point in a while.

The second term in the right member of
Formula~\eqref{Du08vers09:eq:vectorvaluedrunningsumubis} is nothing but
$\boldS_{K+1}(\base^{t-1})$.
Proposition~\ref{Du08vers09:thm:Jordanasymptexp} translates into the
following result.
\begin{lemma}
  Under Hypotheses~\ref{Du08vers09:hypo:generalizedeigenvector}
  and~\ref{Du08vers09:hypo:jointspectralradius}, the second term in
  the right member of
  Formula~\eqref{Du08vers09:eq:vectorvaluedrunningsumubis} for the
  running sum associated to~$V^{(\nu-1)}$ admits the following
  asymptotic expansion with respect to the scale $\rho^K\binom K\ell$,
  $\rho>0$, $\ell\in\bN_{\geq0}$, with
  $K=\lfloor\log_{\base}N\rfloor$, where the functions~$\boldF^{(\ell)}$'s
  are defined in Lemma~\ref{Du08vers09:thm:Jordanasymptexp},
\begin{multline}\label{Du08vers09:eq:expansionofthesecondterm}
\sum_{\begin{subarray}{c}\length w=K+1\\ \Bint w \leq \base^{K+1}\base^{t-1} \end{subarray}}
  A_wV^{(\nu-1)}\mathop{=}_{N\to+\infty}
\binom{K}{\nu-1}(\rho\omega)^{K-\nu+2}\boldF^{(0)}(\base^{t-1})\\
+
\sum_{\ell=0}^{\nu-2}\binom{K}{\ell}(\rho\omega)^{K-\ell}\left[
\rho\omega \boldF^{(\nu-\ell-1)}(\base^{t-1})+\boldF^{(\nu-\ell-2)}(\base^{t-1})
\right]+O(\lambda^K).
\end{multline}
for every $\lambda>\lambdajsr$ and the big oh is uniform with respect
to~$t$ in~$[0,1]$. Moreover if $\lambdajsr$ is attained we may
replace~$\lambda$ by~$\lambdajsr$ in the asymptotic expansion.
\end{lemma}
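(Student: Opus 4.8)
The plan is to recognize the second term as an instance of the running sum already treated in Proposition~\ref{Du08vers09:thm:Jordanasymptexp} and then to perform a purely combinatorial rewriting. As the text observes just before the statement, for a word $w$ of length $K+1$ the condition $\Bint w\le\base^{K+1}\base^{t-1}$ is equivalent to $\Bexp w\le\base^{t-1}$, so the second term in the right member of Formula~\eqref{Du08vers09:eq:vectorvaluedrunningsumubis} is exactly $\boldS_{K+1}(\base^{t-1})$ for the running sum associated to $V^{(\nu-1)}$. Since $t\mapsto\base^{t-1}$ sends $[0,1]$ into $[\base^{-1},1]\subset[0,1]$, any estimate uniform in $x\in[0,1]$ automatically becomes one uniform in $t\in[0,1]$, which will take care of the uniformity claim.

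First I would apply Proposition~\ref{Du08vers09:thm:Jordanasymptexp} with $K$ replaced by $K+1$ and $x=\base^{t-1}$. Writing its expansion compactly as
\[
\boldS_{K+1}(\base^{t-1})=\sum_{j=0}^{\nu-1}\binom{K+1}{\nu-1-j}(\rho\omega)^{K+2-\nu+j}\boldF^{(j)}(\base^{t-1})+O(\lambda^{K+1}),
\]
and setting $\ell=\nu-1-j$, this reads
\[
\boldS_{K+1}(\base^{t-1})=\sum_{\ell=0}^{\nu-1}\binom{K+1}{\ell}(\rho\omega)^{K+1-\ell}\boldF^{(\nu-1-\ell)}(\base^{t-1})+O(\lambda^K),
\]
where I absorb the constant factor into the big oh using $O(\lambda^{K+1})=O(\lambda^K)$ for each fixed $\lambda>\lambdajsr$, and likewise $O(\lambdajsr^{K+1})=O(\lambdajsr^K)$ when $\lambdajsr$ is attained.

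The only genuine computation, which I expect to be routine but is the natural place for an indexing slip, is converting the coefficients $\binom{K+1}{\ell}$ into the prescribed scale $\binom K\ell$. Using Pascal's rule $\binom{K+1}{\ell}=\binom{K}{\ell}+\binom{K}{\ell-1}$ splits the sum into two pieces; re-indexing the second piece by $\ell\mapsto\ell+1$ and peeling off the $\ell=\nu-1$ term of the first piece yields
\[
\binom{K}{\nu-1}(\rho\omega)^{K-\nu+2}\boldF^{(0)}(\base^{t-1})+\sum_{\ell=0}^{\nu-2}\binom{K}{\ell}(\rho\omega)^{K-\ell}\left[\rho\omega\,\boldF^{(\nu-\ell-1)}(\base^{t-1})+\boldF^{(\nu-\ell-2)}(\base^{t-1})\right],
\]
which is exactly the claimed expansion~\eqref{Du08vers09:eq:expansionofthesecondterm}. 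The error term, its uniformity in $t$, and the refinement to $O(\lambdajsr^K)$ in the attained case are all inherited directly from Proposition~\ref{Du08vers09:thm:Jordanasymptexp} via the remark of the first paragraph. There is no substantive obstacle beyond the care needed in the binomial bookkeeping, since the real work of constructing the $\boldF^{(j)}$'s and controlling the error has already been carried out in establishing Proposition~\ref{Du08vers09:thm:Jordanasymptexp}.
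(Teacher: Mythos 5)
Your proposal is correct and follows essentially the same route as the paper: the paper's proof is the one-line remark that Proposition~\ref{Du08vers09:thm:Jordanasymptexp}, applied to $\boldS_{K+1}(\base^{t-1})$, together with the fundamental formula of binomial coefficients (Pascal's rule) yields the expansion by mere substitution. You have simply made explicit the substitution $K\mapsto K+1$, $x=\base^{t-1}$, the re-indexing, and the splitting $\binom{K+1}{\ell}=\binom{K}{\ell}+\binom{K}{\ell-1}$, all of which check out, including the absorption of constants into the error term and the uniformity in $t$.
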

\begin{proof}
  Proposition~\ref{Du08vers09:thm:Jordanasymptexp} and the fundamental
  formula of binomial coefficients provide the expansion of
  $\boldS_{K+1}(\base^{t-1})$ by a mere substitution.
\end{proof}

Gathering the results of the previous discussion, we arrive at the
following qualitative theorem. 
\begin{theorem}\label{Du08vers09:thm:sequenceasymptoticexpansion}
  Let~$L$, $(A_r)_{0\leq r<\base}$, $C$ be a linear representation of
  a radix-rational sequence~$(u_n)$. The vector-valued running sum
  $\boldSigma_N$ defined by
  Equation~\eqref{Du08vers09:eq:vectorvaluedrunningsumu} admits an
  asymptotic expansion with error term~$O(N^{\log_{\base}\lambda})$
  for every $\lambda>\lambdajsr$, where~$\lambdajsr$ is the joint
  spectral radius of the family~$(A_r)_{0\leq r<\base}$. If
  $\lambdajsr$ is attained the error term may be replaced by
  $\Ostar(N^{\log_{\base}\lambdajsr})$, where the soft big oh hides
  a power of $\log N$. The used asymptotic scale is the family of
  sequences $N^{\alpha}\binom {\lfloor \log_{\base}N\rfloor} \ell$,
  $\alpha\in\bR$, $\ell\in\mathbb N_{\geq 0}$.
\end{theorem}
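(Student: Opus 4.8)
The plan is to assemble the two preceding lemmas, pass from the index~$K$ to the index~$N$, and keep careful track of which contributions belong to the expansion and which to the error term. Since $\boldSigma_N$ depends linearly on~$C$, I would first expand~$C$ over a Jordan basis for~$Q$ and treat each generalized eigenvector separately, so that it suffices to produce the expansion when $C=V^{(\nu-1)}$ is a single Jordan vector of height~$\nu$ relative to an eigenvalue~$\rho\omega$, and then sum the finitely many contributions. For such a~$C$, Formula~\eqref{Du08vers09:eq:vectorvaluedrunningsumubis} writes $\boldSigma_N$ as the first term $(\operatorname{I}_d-A_0)\sum_{0\leq k\leq K}Q^kV^{(\nu-1)}$ plus a second term which is exactly $\boldS_{K+1}(\base^{t-1})$, where $N=\base^{K+t}$ with $K=\lfloor\log_\base N\rfloor$ and $t\in\left[0,1\right)$.

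Both terms are already handled. Lemma~\ref{Du08vers09:lemma:firsttermasymptotic} expands the first term in the scale $(\rho\omega)^{K-\ell}\binom K\ell$, and the following lemma, Formula~\eqref{Du08vers09:eq:expansionofthesecondterm}, expands the second term in the same scale whenever $\rho>\lambdajsr$, with error $O(\lambda^K)$ uniform in~$t$ for every $\lambda>\lambdajsr$. When $\rho\leq\lambdajsr$ the second term is bounded by Proposition~\ref{Du08vers09:prop:noiseforgeneigenvector}, while the growth of the first term is controlled directly from $Q^kV^{(\nu-1)}=O(\rho^kk^{\nu-1})$: summed over $0\leq k\leq K$ its non-constant part stays within $O(\lambda^K)$ for every $\lambda>\lambdajsr$, and I would retain only its constant part as the term of order~$N^0$. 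Adding the two terms and reassembling the finitely many Jordan components yields an expansion of $\boldSigma_N$ in the variable~$K$.

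It then remains to change variables. Writing $N=\base^{K+t}$ gives $(\rho\omega)^{K-\ell}=N^{\log_\base\rho}\,\rho^{-t-\ell}\,\omega^{K-\ell}$, so each scale element $(\rho\omega)^{K-\ell}\binom K\ell$ turns into $N^{\alpha}\binom{\lfloor\log_\base N\rfloor}{\ell}$ with $\alpha=\log_\base\rho$, multiplied by the bounded factor $\rho^{-t-\ell}\omega^{K-\ell}$; the functions $\boldF^{(j)}(\base^{t-1})$ produced by the second term are continuous, hence bounded, so the total coefficient of each scale element is a bounded function of the fractional part~$t$ and of~$\omega^K$, which is where the pseudo-periodic behaviour enters. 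The same substitution turns $O(\lambda^K)$ into $O(N^{\log_\base\lambda})$, since $\lambda^{-t}$ is bounded on $\left[0,1\right)$. Finally, when $\lambdajsr$ is attained, the sharper bounds $O(\lambdajsr^KK^m)$ coming from Propositions~\ref{Du08vers09:thm:Jordanasymptexp} and~\ref{Du08vers09:prop:noiseforgeneigenvector} become $\Ostar(N^{\log_\base\lambdajsr})$, the soft big oh absorbing the factor $K^m=\lfloor\log_\base N\rfloor^m$.

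The main obstacle I anticipate is not a single hard idea but the careful bookkeeping of the error term across the cases $\rho>\lambdajsr$ and $\rho\leq\lambdajsr$. The delicate points are the boundaries where extra factors surface: when $\rho=1$ the summation formula~\eqref{Du08vers09:eq:summatoryformulaQ^KV} raises the binomial degree by one, so the first term may carry a $\binom K\nu$, and when $\rho=\lambdajsr$ an additional power of~$K$ appears. One must verify that, for every eigenvalue of modulus $\rho\leq\lambdajsr$, the combined growth of the first and second terms, including these enhancements, still fits inside $O(\lambda^K)$ for each $\lambda>\lambdajsr$, and inside $\Ostar(\lambdajsr^K)$ when $\lambdajsr$ is attained, so that nothing beyond the announced error survives while the genuine terms of order~$N^0$ and higher are kept in the expansion.
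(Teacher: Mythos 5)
Your proposal is correct and follows essentially the same route as the paper: decompose $C$ over a Jordan basis for $Q$, split $\boldSigma_N$ by Formula~\eqref{Du08vers09:eq:vectorvaluedrunningsumubis}, treat the first term with Lemma~\ref{Du08vers09:lemma:firsttermasymptotic}, the second with Formula~\eqref{Du08vers09:eq:expansionofthesecondterm}, push the components of modulus $\rho\leq\lambdajsr$ into the error via Proposition~\ref{Du08vers09:prop:noiseforgeneigenvector}, and convert from $K$ to $N$ through $N=\base^{K+t}$. Your explicit retention of the constant part of $(\operatorname{I}_d-A_0)\sum_{0\leq k\leq K}Q^kV$ for the small eigenvalues (indispensable when $\lambdajsr<1$, since a nonzero constant is not $O(N^{\log_\base\lambda})$ for $\lambda<1$) is the same bookkeeping the paper carries out in its discussion of the cases $\lambda<1$, $\lambda=1$, $\lambda>1$ surrounding Algorithm {\tt LRtoAE2}, so this is a difference of care in a corner case, not of method.
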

Theorem~\ref{Du08vers09:thm:sequenceasymptoticexpansion} translates
concretely into Algorithm {\tt LRtoAE2} of
pages~\pageref{Du08vers09:algoforrationalsequencespiece1},
\pageref{Du08vers09:algoforrationalsequencespiece2}. It is a simple
variation on Algorithm {\tt LRtoAE1}, but is a little more complicated
because it takes into account the relative positions of the
parameter~$\lambda$, which governs the error term, and of the
number~$1$. The right member of
Eq.~\eqref{Du08vers09:eq:expansionofthesecondterm} causes no problem
because it is readily seen as an asymptotic expansion in the scale
$N^{\alpha}\binom {\lfloor \log_{\base}N\rfloor} \ell$. The right
member of Eq.~\eqref{Du08vers09:eq:Q^KVexpansioncaserhomeganeq1}
and~\eqref{Du08vers09:eq:Q^KVexpansioncaserhomega=1} is more
annoying. In case $\lambda>1$, we take into account only the
eigenvalues with modulus $\rho>1$ and the case $\rho\omega=1$ of
Lemma~\ref{Du08vers09:lemma:firsttermasymptotic} cannot happen. For
the case $\rho>1$, we consider in the right member of
Eq.~\eqref{Du08vers09:eq:Q^KVexpansioncaserhomeganeq1} only the
quantity
\begin{multline*}
X=\binom{K}{\nu-1}\frac{(\rho\omega)^{K-\nu+2}}{\rho\omega-1}(\operatorname{I}_d-A_0)V^{(0)}\\
+
\sum_{\ell=0}^{\nu-2}\binom{K}{\ell}(\rho\omega)^{K-\ell}\left[
\frac{\rho\omega}{\rho\omega-1}(\operatorname{I}_d-A_0)V^{(\nu-\ell-1)}
\phantom{\sum_{j=\ell+1}^{\nu-1}}
 \right.\\\left.
+
\sum_{j=\ell+1}^{\nu-1}\frac{(-1)^{j-\ell}}{(\rho\omega-1)^{j-\ell+1}}(\operatorname{I}_d-A_0)V^{(\nu-j-1)}
\right]
\end{multline*}
because the quantity
\[
Y=\sum_{\ell=0}^{\nu-1}(-1)^{\ell+1} \frac{1}{(\rho\omega-1)^{\ell+1}}(\operatorname{I}_d-A_0)V^{(\nu-\ell-1)}
\]
is $O(1)$ and negligeable in that case. (See
lines~\ref{lrtoaeline3}--\ref{lrtoaeline1}
and~\ref{lrtoaeline8}--\ref{lrtoaeline2} of Algorithm {\tt LRtoAE2}.) 
In case $\lambda<1$, all terms must be taken into account but we have
to distinguish between the cases $\rho\omega\neq1$ and 
$\rho\omega=1$. (See lines~\ref{lrtoaeline3}--\ref{lrtoaeline6},
\ref{lrtoaeline4}--\ref{lrtoaeline7},
and~\ref{lrtoaeline9}--\ref{lrtoaeline10}.) Lastly in case
$\lambda=1$, the case $\rho\omega=1$ does not appear because it is
assumed $\rho>\lambda$, but it may have an effect on the error term if
$\lambdajsr$ is attained. (See
lines~\ref{lrtoaeline11}--\ref{lrtoaeline12} and
~\ref{lrtoaeline5}--\ref{lrtoaeline13}.) The case $\rho\omega\neq1$
provide the term~$X$. (See lines~\ref{lrtoaeline3}--\ref{lrtoaeline1}
and~\ref{lrtoaeline8}--\ref{lrtoaeline2}.) The term~$Y$, which
is~$O(1)$, comes automatically into the error term, which is
$O(\lambda^K)$ that is~$O(1)$.

\begin{algorithm}
\caption{\label{Du08vers09:algoforrationalsequencespiece1} {\tt LRtoAE2} (beginning)}
\SetKwInOut{KwInput}{Input}
\SetKwInOut{KwOutput}{Output}
\SetLine
\SetArgSty{rm}
  \KwInput{A linear representation $L$, $(A_r)_{0\leq r<\base}$, $C$,
  and its joint spectral radius~$\lambdajsr$.}
  \KwOutput{An asymptotic expansion of the running sum $\boldSigma_N$
  associated to the vector~$C$ when~$N$ goes to infinity at the
  precision $\Ostar(N^{\log_\base\lambdajsr})$ if~$\lambdajsr$ is
  attained and $O(N^{\log_\base\lambda})$ for a $\lambda$ slightly
  greater than $\lambdajsr$ if not, with respect to the scale
  $N^{\alpha}\binom {\lfloor\log_\base N\rfloor} \ell$, $\alpha\in\bR$,
  $\ell\in\mathbb N_{\geq 0}$.}
  $Q:=\sum_{0\leq r<\base}A_r$\;
  \eIf{$\lambdajsr$ is attained\nllabel{lrtoaeline11}}
      {$\lambda:=\lambdajsr$\nllabel{lrtoaeline12}}
      {$\lambda:=$ any number between $\lambdajsr$ and the infimum of
      the modulus of eigenvalues of~$Q$ greater than~$\lambdajsr$}
  compute a Jordan basis $\cV$ for the matrix $Q$\;
  expand the column vector~$C$ of the linear representation over the
  Jordan basis, as $C=\sum_{V\in\cV}\gamma_VV$\;
  $\cV_{>}:=$ the set of generalized eigenvectors in~$\cV$ such that $\gamma_V\neq0$ and the associated eigenvalue~$\rho\omega$ has a modulus $\rho>\lambda$\;
%
  \For{each vector~$V$ in~$\cV_{>}$} 
      {$\rho\omega:=$ the eigenvalue associated to~$V$\;
       $\nu:=$ the height of~$V$\;
       \eIf{$\rho\omega\neq 1$\nllabel{lrtoaeline3}}
            {
             using Eq.~\eqref{Du08vers09:eq:Q^KVexpansioncaserhomeganeq1}
             \nllabel{lrtoaeline1}
             \begin{multline*}\!\!\!\!\!\! 
	X:=\binom{K}{\nu-1}\frac{(\rho\omega)^{K-\nu+2}}{\rho\omega-1}(\operatorname{I}_d-A_0)V^{(0)}+\mbox{}\\
       \sum_{\ell=0}^{\nu-2}\binom{K}{\ell}(\rho\omega)^{K-\ell}\left[
\frac{\rho\omega}{\rho\omega-1}(\operatorname{I}_d-A_0)V^{(\nu-\ell-1)}
+
\sum_{j=\ell+1}^{\nu-1}\frac{(-1)^{j-\ell}}{(\rho\omega-1)^{j-\ell+1}}(\operatorname{I}_d-A_0)V^{(\nu-j-1)}
\right]; 
            \end{multline*}\\
         \raisebox{0.0ex}[1.0ex][0.0ex]{$\displaystyle Y:=\sum_{\ell=0}^{\nu-1}(-1)^{\ell+1} \frac{1}{(\rho\omega-1)^{\ell+1}}(\operatorname{I}_d-A_0)V^{(\nu-\ell-1)};$} 
         \\[2.0ex]
         $Z:=X+Y$\nllabel{lrtoaeline6}}
        {\nllabel{lrtoaeline4}using Eq.~\eqref{Du08vers09:eq:Q^KVexpansioncaserhomega=1}
        \[
        Z:=\binom K\nu (\operatorname{I}_d-A_0) V^{(0)} +
\sum_{\ell=1}^{\nu-1}\binom K\ell (\operatorname{I}_d-A_0)\left[V^{(\nu-\ell)}
+V^{(\nu-\ell-1}\right]+(\operatorname{I}_d-A_0)V^{(\nu-1)}
        \]
\nllabel{lrtoaeline7}}
    \eIf{$\lambda\geq1$\nllabel{lrtoaeline8}}
        {$\boldA_{V;N}:=X$\nllabel{lrtoaeline2}}
        {\nllabel{lrtoaeline9}$\boldA_{V;N}:=Z$\nllabel{lrtoaeline10}}
  according to Eq.~\eqref{Du08vers09:eq:expansionofthesecondterm}, a
  second term is added
  \begin{multline*}
  \boldA_{V;N}:=\boldA_{V;N}+
\binom{K}{\nu-1}(\rho\omega)^{K-\nu+2}\boldF^{(0)}(\base^{t-1})\\
+
\sum_{\ell=0}^{\nu-2}\binom{K}{\ell}(\rho\omega)^{K-\ell}\left[
\rho\omega \boldF^{(\nu-\ell-1)}(\base^{t-1})+\boldF^{(\nu-\ell-2)}(\base^{t-1})
\right]
  \end{multline*}
}
  $\boldA_N:=\sum_{V\in\cV_{>}}\gamma_{V}\boldA_{V;N}$\;
\setcounter{DuviiiLastNumberLineofAlgo}{\theAlgoLine}
\end{algorithm}
\addtocounter{algocf}{-1}
\begin{algorithm}
\setcounter{AlgoLine}{\theDuviiiLastNumberLineofAlgo}
\caption{\label{Du08vers09:algoforrationalsequencespiece2} {\tt LRtoAE2} (end)}
\SetLine
\SetArgSty{rm}
  $\operatorname{error~term}_N:=O(N^{\log_\base \lambda})$\;
  \If{$\lambdajsr$ is attained\nllabel{lrtoaeline5}}
     {$\cV_=:=$ the set of generalized eigenvectors in~$\cV$ such that
     $\gamma_V\neq0$ and the associated eigenvalue~$\rho\omega$ has a
     modulus $\rho=\lambdajsr$\;
     $m:=0$\;
    \For{each~$V$ in~$\cV_{=}$}
        {$\nu:=$ the height of~$V$\;
         $m:=\max(m,\nu)$}
     $\operatorname{error~term}_N:=O(N^{\log_\base \lambdajsr}\log^m_\base N)$\nllabel{lrtoaeline13}}
  \Return{
     \[\displaystyle \boldSigma_N\mathop{=}_{N\to+\infty}\boldA_N+\operatorname{error~term}_N\]
  }
\end{algorithm}

The asymptotic expansion has essentially the form
\[
\boldSigma_N\mathop{=}_{N\to+\infty}
\sum_{\rho>\lambdajsr,\,\ell\geq0}N^{\log_{\base}\rho}\binom{\lfloor \log_{\base}N\rfloor} \ell
\sum_{\omega}\omega^{\lfloor \log_{\base}N\rfloor}\Phi_{\rho,\ell,\omega}(\log_{\base}N)
+
\Ostar(N^{\log_{\base}\lambdajsr}),
\]
where~$\rho$ is the modulus of any eigenvalue of~$Q$, $\omega$ is a
complex number with modulus~$1$, and $\ell$ is an integer related to
the maximal size of the Jordan blocks associated to~$\rho\omega$. As
regards $\Phi$, we change the status of the variable~$t$. It was only
an abbreviation for $\{\log_{\base}N\}$. Now we see it as a real
variable and we substitute to it the logarithm base~$\base$ of~$N$ to
obtain the expression of the expansion. As a consequence
the functions~$\Phi$ appear as $1$-periodic functions.

The asymptotic expansion has variable
coefficients~\citep[Chapter~V]{Bourbaki76}. The coefficients are taken
from the vector space generated by sequences which write
$(\omega^{\lfloor\log_{\base}N\rfloor}\Phi(\log_{\base}N))$ with
$|\omega|=1$ and $\Phi$ a $1$-periodic function.

\begin{example}[Discrepancy of the van der Corput sequence]\label{Du08vers09:ex:vdCmeandiscrepancy}
\citet{MR0444600,MR520308} show that the discrepancy of the van der
Corput sequence satisfies the following recursion 
\[
D(1)=1,\qquad D(2n)=D(n),\qquad D(2n+1)=\frac{1}{2}(D(n)+D(n+1)+1)
\]
and we add $D(0)=0$. Let us recall first that the (binary) van der Corput
sequence is defined as follows: for an integer~$n$, we write its
binary expansion $(n_{K-1}\ldots n_1n_0)_2$; we reverse it and we place
it after the binary dot. The real number $(0.n_0n_1\ldots n_{K-1})_2$ is
the value~$x_n$ of the van der Corput sequence for the integer~$n$. Second
the discrepancy of the sequence~$(x_n)$ is 
\[
D(n)=\sup_{0\leq \alpha<\beta\leq 1}\left|\frac{\nu(n,\alpha,\beta)}{n}-(\beta-\alpha)\right|,
\]
where $\nu(n,\alpha,\beta)$ is the number of terms~$x_k$, $1\leq k\leq
n$, which fall in the interval~$\left[\alpha,\beta\right)$. It
measures the deviation from the uniform distribution for the
sequence~$(x_n)$ \citep{Niederreiter92}. We want to evaluate more
precisely the mean value of the sequence~$(D(n))$ given by
\citep[Th.~3]{MR520308}
\[
\frac{1}{N}\sum_{n=1}^ND(n)\mathop{=}_{N\to+\infty}\frac{1}{4}\log_2N+O(1).
\]
The sequence~$(D(n))$ is $2$-rational and admits the following linear
representation
\[
L=\left(\begin{array}{ccc}
0 & 1 & 1
	\end{array}\right),\quad
A_0=\left(\begin{array}{ccc}
1 & 1/2 & 0 \\
0 & 1/2 & 0 \\
0 & 1/2 & 1
	\end{array}\right),\quad
A_1=\left(\begin{array}{ccc}
1/2 & 0 & 0 \\
1/2 & 1 & 0 \\
1/2 & 0 & 1
	\end{array}\right),\quad
C=\left(\begin{array}{c}
1 \\ 0 \\ 0
	\end{array}\right).
\]
with respect to the generating family $(D(n),D(n+1),1)$.  Numerical
computations lead us to think that the joint spectral radius of the
representation is $\lambdajsr=1$. To prove this result we use the Lie
algebra $\frakg$ generated by~$A_0$ and~$A_1$ (with the usual bracket
product $[A,B]=AB-BA$). The derived series of a Lie algebra $\frakg$
is the filtration $\frakg=D^0\frakg\supset D^1\frakg\supset
\dotsb\supset D^k\frakg\supset\dotsb$ defined inductively by
$D^1\frakg=[\frakg,\frakg]$,
$D^k\frakg=[D^{k-1}\frakg,D^{k-1}\frakg]$. The Lie algebra~$\frakg$ is
defined to be solvable if $D^k\frakg=\{0\}$ for some nonnegative
integer~$k$. According to Lie's theorem, if~$\frakg$ is a solvable Lie
algebra over the algebraically closed field of complex numbers there
exists a change of basis matrix which makes triangular all matrices of
any representation of~$\frakg$~\citep{GoKh00}. As a consequence if~$\cA$
is a finite set of real matrices which generates a solvable Lie
algebra, the joint spectral radius of~$\cA$ is the maximum of the
spectral radii~$\rho(A)$ for~$A$ in~$\cA$. Moreover the joint spectral
radius is attained with a Euclidean norm which writes $x\mapsto
(x^{\trpse}Px)^{1/2}$ where~$P$ is a symmetric definite positive
matrix~\citep{BlNeTh05}.

To apply this result it remains to compute the derived series
of~$\frakg$. We use $B_0=2A_0$ and $B_1=2A_1$ to avoid
denominators. Let $\frakg$ be the Lie algebra generated by~$B_0$
and~$B_1$. We find at the same time a basis of this algebra and its
multiplication table, namely
\[
\begin{array}{c|c|c|c|c|}
[\ ,\ ] & B_0 & B_1 & B_2 & B_3 \\\hline
B_0 & 0 & -B_2 & -B_3 & -B_2 \\\hline
B_1 & B_2 & 0 & B_4 & -B_2 \\\hline
B_2 & B_3 & -B_4 & 0 & B_5 \\\hline
B_3 & B_2 & B_2 & -B_5 & 0 \\\hline  
\end{array}
\]
with 
\[
B_2=[B_1,B_0]=\left(\begin{array}{ccc}
-1 & 1 & 0 \\
1 & 1 & 0 \\
-1 & 1 & 0
  \end{array}\right),\qquad
B_3=[B_2,B_0]=\left(\begin{array}{ccc}
-1 & -1 & 0\\
1 & 1 & 0 \\
-1 & -3 & 0
  \end{array}\right)
\]
and $B_4=2B_0-2B_1+B_3$, $B_5=-2B_0+2B_1-2B_3$. Hence $\frakg$ admits
the basis $(B_0,B_1,B_2,B_3)$ and $D^1\frakg$ admits the basis
$(B_2,B_3,B_4)$. Next we find that $D^2\frakg$ has dimension~$1$ and
is generated by $B_3+B_4=2B_0-2B_1+2B_3$. As a consequence
$D^3\frakg=\{0\}$ and~$\frakg$ is solvable.  Further the joint
spectral radius is $\lambdajsr=1$ and it is attained. Because~$1$ is a
simple eigenvalue of $Q=A_0+A_1$, the error term of the asymptotic
expansion for the running sum is~$O(\log N)$.

Using the basis $(V_1,V_2^0,V_2^1)$ with
\[
V_1=\left(\begin{array}{c}
1/2 \\ -1/2 \\ 0
  \end{array}\right),\qquad
V_2^0=\left(\begin{array}{c}
0 \\ 0 \\ 1/2
  \end{array}\right),\qquad
V_2^1=\left(\begin{array}{c}
1/2 \\ 1/2 \\ 0
  \end{array}\right),
\]
the matrix~$Q$ takes the Jordan form 
\[
J=\left(\begin{array}{ccc}
1 & 0 & 0 \\
0 & 2 & 1 \\
0 & 0 & 2
  \end{array}\right).
\]
The vector~$C$ expands as $C=V_1+V_2^1$, and because~$V_1$ is related
to the eigenvalue~$1$ it may be neglected. We find 
\[
\boldS_K(x)=\frac{1}{2}2^K\,K\,\boldF^0(x)+2^K\,\boldF^1(x)
\]
where~$\boldF^0$ and~$\boldF^1$ are solutions of the dilation
equations
\begin{multline*}
\boldF^0(x)=\frac{1}{2}A_0\boldF^0(2x),\qquad \text{for $0\leq x<1/2$,}\\
\boldF^0(x)=\frac{1}{2}A_0V_2^0+\frac{1}{2}\boldF^0(2x-1),\qquad \text{for $1/2\leq x<1$;}
\end{multline*}
\begin{multline*}
\boldF^1(x)=-\frac{1}{2}\boldF^0(x)+
\frac{1}{2}A_0\boldF^0(2x),\qquad \text{for $0\leq x<1/2$,}\\
\boldF^1(x)=-\frac{1}{2}\boldF^0(x)+
\frac{1}{2}A_0V_2^1+\frac{1}{2}\boldF^1(2x-1),\qquad \text{for $1/2\leq x<1$}
\end{multline*}
with the boundary conditions $\boldF^0(0)=0$, $\boldF^0(1)=V_2^0$,
$\boldF^1(0)=0$, $\boldF^1(1)=V_2^1$. It is readily seen that
$\boldF^0(x)=xV_2^0$, but $\boldF^1$ is not
explicit. Eventually we arrive at the formula
\[
\frac{1}{N}\sum_{n=1}^ND(n)\mathop{=}_{N\to+\infty}
\frac{1}{4}\log_2N 
+
\frac{1}{4}\left(
1-\{t\}+2^{3-\{t\}}\left(F^1_2(2^{\{t\}-1})+F^1_3(2^{\{t\}-1}\right)
\right)
+
O\left(\frac{\log N}{N}\right).
\]
Figure~\ref{Du08vers09:fig:vdCmeandiscrepancy} show the empirical
periodic function and the comparison between the empirical and
theoretical periodic functions.

\begin{figure}
  \begin{center}
\includegraphics[width=0.30\linewidth]{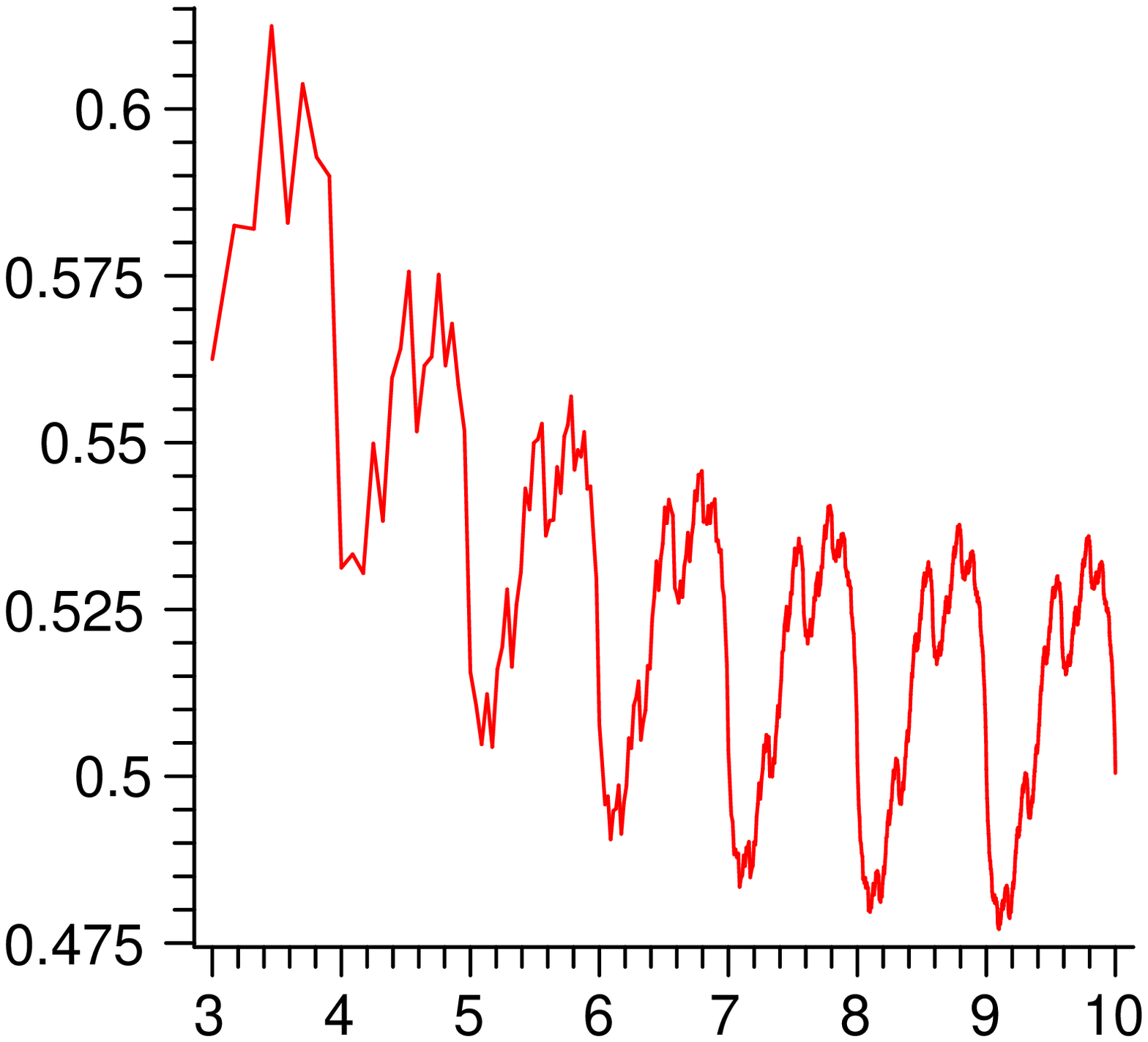}
\hfil
\includegraphics[width=0.56\linewidth]{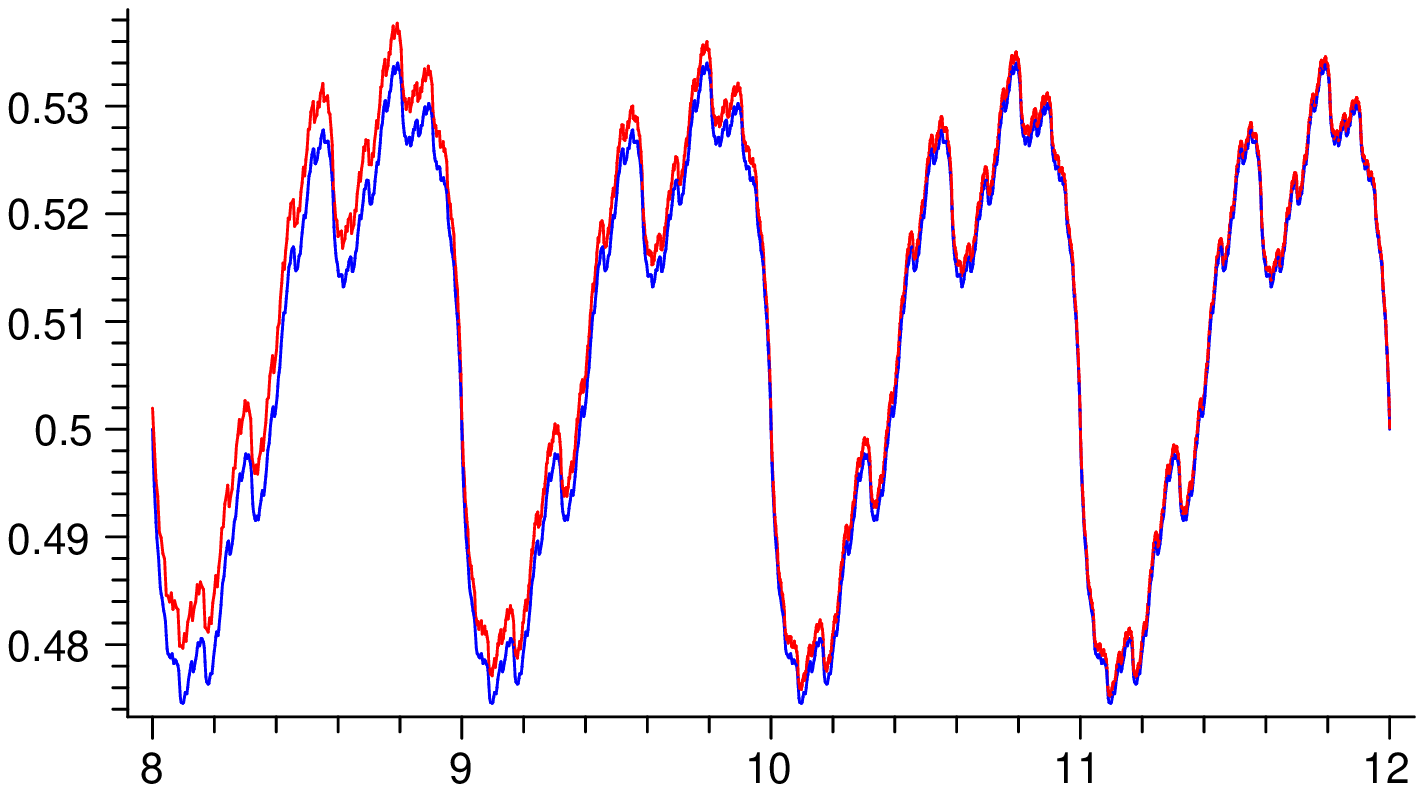}
  \end{center}
\caption{\label{Du08vers09:fig:vdCmeandiscrepancy}
On the left the empirical (red) periodic function of
Ex.~\ref{Du08vers09:ex:vdCmeandiscrepancy} (mean discrepancy of the
van der Corput sequence), and on the right the comparison between it
and the theoretical (blue) periodic function computed by the cascade
algorithm. }
\end{figure}
\end{example}

The coefficients of the asymptotic expansion of the running
sum~$\boldSigma_N$ inherit the properties of the coefficients of the
asymptotic expansion of $\boldS_K(x)$. To see this we need to be more
precise about the way we obtain the asymptotic expansion. Anew we use
the same formalism as in Section~\ref{Du08vers09:subsec:Jordan}.
\begin{corollary}
  Under Hypotheses~\ref{Du08vers09:hypo:generalizedeigenvector}
  and~\ref{Du08vers09:hypo:jointspectralradius} the asymptotic
  expansion of the running sum~$\boldSigma_N$ associated with
  $V^{(\nu-1)}$ has coefficients which are H\"older with exponent
  $\log_{\base}(\rho/\lambda)$ for $\lambda>\lambdajsr$ and~$\lambda$
  may be replaced by~$\lambdajsr$ if $\lambdajsr$ is attained.
\end{corollary}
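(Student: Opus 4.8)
The plan is to trace each coefficient of the asymptotic expansion of $\boldSigma_N$ back to the two terms of Formula~\eqref{Du08vers09:eq:vectorvaluedrunningsumubis} and to verify the Hölder property for each contribution separately. Recall that the first term $(\operatorname{I}_d-A_0)\sum_{0\leq k\leq K}Q^kV^{(\nu-1)}$ is expanded by Lemma~\ref{Du08vers09:lemma:firsttermasymptotic}; once rewritten in the asymptotic scale $\rho^K\binom{K}{\ell}$ with $K=\lfloor\log_{\base}N\rfloor$, every coefficient it contributes is a \emph{constant} vector (built from the $(\operatorname{I}_d-A_0)V^{(j)}$ and from powers of $(\rho\omega-1)^{-1}$), with no dependence on the fractional part $t=\{\log_{\base}N\}$. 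The second term equals $\boldS_{K+1}(\base^{t-1})$, and its expansion, furnished by Eq.~\eqref{Du08vers09:eq:expansionofthesecondterm}, carries coefficients of the form $\boldF^{(\ell)}(\base^{t-1})$ and $\rho\omega\,\boldF^{(\nu-\ell-1)}(\base^{t-1})+\boldF^{(\nu-\ell-2)}(\base^{t-1})$. Hence, after collecting terms of equal order in the scale, each coefficient $\Phi$ of the expansion of $\boldSigma_N$ is a constant vector plus a finite linear combination of functions $t\mapsto\boldF^{(j)}(\base^{t-1})$.

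First I would dispose of the constant contributions: a constant function trivially satisfies the Hölder inequality for any exponent, in particular for $\log_{\base}(\rho/\lambda)$. Next I would treat the genuine functions. By Proposition~\ref{Du08vers09:thm:Jordanasymptexp} (equivalently Lemma~\ref{Du08vers09:lemma:HolderJordan}), each $\boldF^{(j)}$ is Hölder on $[0,1]$ with exponent $\alpha=\log_{\base}(\rho/\lambda)$ for every $\lambda>\lambdajsr$, and with exponent $\log_{\base}(\rho/\lambdajsr)$ when $\lambdajsr$ is attained. The inner map $\chi:\,t\mapsto\base^{t-1}$ is smooth on the compact interval $[0,1]$, hence Lipschitz there with some constant~$M$, and it takes values in $[1/\base,1]\subset[0,1]$, so the composition $\boldF^{(j)}\circ\chi$ is well defined. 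For a function $g$ that is Hölder of exponent~$\alpha$ and a Lipschitz map~$\chi$ one has
\[
\norm{g(\chi(t_2))-g(\chi(t_1))}\leq c\,\length{\chi(t_2)-\chi(t_1)}^{\alpha}\leq c\,M^{\alpha}\length{t_2-t_1}^{\alpha},
\]
so $\boldF^{(j)}\circ\chi$ is again Hölder with the same exponent~$\alpha$.

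Finally I would close under the operations that produce the coefficients: a finite linear combination of functions that are Hölder with a common exponent~$\alpha$ is itself Hölder with exponent~$\alpha$, and adding the constant part changes nothing. This yields that every coefficient~$\Phi$ of the asymptotic expansion of~$\boldSigma_N$ is Hölder with exponent $\log_{\base}(\rho/\lambda)$, and with exponent $\log_{\base}(\rho/\lambdajsr)$ in the attained case, as claimed. I do not expect a serious obstacle here: the statement is essentially a transport, through the analytic change of variable $N=\base^{K+t}$, of the regularity already established for the~$\boldF^{(j)}$ in Proposition~\ref{Du08vers09:thm:Jordanasymptexp}. The only point requiring a moment's care is the bookkeeping---checking that the contributions of the first term in Formula~\eqref{Du08vers09:eq:vectorvaluedrunningsumubis} are indeed constant in~$t$, so that they cannot spoil the exponent, and that no coefficient mixes functions of different Hölder exponents (which cannot occur, since all the $\boldF^{(j)}$ share the single exponent $\log_{\base}(\rho/\lambda)$).
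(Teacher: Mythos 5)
Your argument proves only the part of the statement that the paper treats as immediate, and it misses the point on which the paper's whole proof is spent. In Theorem~\ref{Du08vers09:thm:sequenceasymptoticexpansion} the coefficients of the expansion are sequences of the form $\omega^{\lfloor\log_{\base}N\rfloor}\Phi(\log_{\base}N)$ with $\Phi$ a \emph{$1$-periodic} function, so the H\"older property asserted by the corollary is a property of $\Phi$ on the whole real line, in particular at integer values of $\log_{\base}N$, where the fractional part $t$ wraps from $1^-$ to $0$ and $K=\lfloor\log_{\base}N\rfloor$ jumps by one. Your bookkeeping hides this seam. The contribution of the first term of Formula~\eqref{Du08vers09:eq:vectorvaluedrunningsumubis} is constant only on each period: as a function of $\log_{\base}N$ it depends on $K$, hence is a step function with jumps at the integers (and in the scale $N^{\alpha}\binom{K}{\ell}$ of the theorem it moreover acquires a factor $\rho^{-t}\omega^{K}$, so it is not even constant within a period). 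Likewise your functions $t\mapsto\boldF^{(j)}(\base^{t-1})$ have left limit $\boldF^{(j)}(1)=V^{(j)}$ but right value $\boldF^{(j)}(1/\base)$ at an integer, and these differ in general. Thus \emph{neither} contribution is continuous as a periodic function of $\log_{\base}N$; interior regularity (composition with the Lipschitz map $t\mapsto\base^{t-1}$, closure under finite linear combinations) cannot repair discontinuities at the seam, and a discontinuous periodic function is not H\"older for any positive exponent.

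The missing step --- which is essentially all of the paper's proof --- is that the two jumps cancel exactly. The paper forms the matrix $\bA(t)=(\operatorname{I}_d-A_0)\sum_{0\leq k\leq K}Q^kV+\bF(\base^{t-1})J_{\rho\omega}^{K+1}$, whose last column is the regular part of the expansion of $\boldSigma_N$, and compares its one-sided limits at an integer $K_0$. From above, the dilation equation~\eqref{Du08vers09:eq:matrixsystemdilationequation} gives $\bF(1/\base)=A_0VJ_{\rho\omega}^{-1}$, so the limit is $(\operatorname{I}_d-A_0)\sum_{0\leq k\leq K_0}Q^kV+A_0VJ_{\rho\omega}^{K_0}$; from below it is $(\operatorname{I}_d-A_0)\sum_{0\leq k\leq K_0-1}Q^kV+VJ_{\rho\omega}^{K_0}$. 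The difference, $(\operatorname{I}_d-A_0)Q^{K_0}V+(A_0-\operatorname{I}_d)VJ_{\rho\omega}^{K_0}$, vanishes precisely because Hypothesis~\ref{Du08vers09:hypo:generalizedeigenvector} reads $QV=VJ_{\rho\omega}$. Only after this cancellation is the coefficient a continuous periodic function, to which your interior H\"older estimate then applies. So your proposal is not a complete proof: it establishes H\"older regularity of the pieces on each period interval, but the statement requires the combined coefficient to be H\"older across the integers, and that needs the boundary identity of the dilation equation together with the eigenrelation for $V$, neither of which appears in your argument.
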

\begin{proof}
  The only point which is to be verified is the continuity at
  integers, because up to this property the H\"olderian character is
  evident (the coefficients are expressed as combinations of solutions
  of dilation equations and functions which are smooth except perhaps
  to integers). As we have seen the expansion of~$\boldSigma_N$ comes from
  both terms in the right member of
  Formula~\eqref{Du08vers09:eq:vectorvaluedrunningsumubis}. The first is
  $(\operatorname{I}_d-A_0)\sum_{0\leq k\leq K}Q^kV^{(\nu-1)}$ and we
  do not want to expand it here, because the asymptotic expansion
  varies with the relative position of~$\rho$, $\lambdajsr$,
  and~$1$. Moreover this is not necessary since it depends on~$t$ only
  through~$K$. The expansion of the second term is given by
  Formula~\eqref{Du08vers09:eq:expansionofthesecondterm}. We note that
  this expression is merely the last column of the matrix
  $\bF(\base^{t-1})J_{\rho\omega}^{K+1}$ with the notations used in
  the proof of Lemma~\ref{Du08vers09:lemma:systemdilationequation}.
  Hence we are considering the matrix
\[
  \bA(t)=(\operatorname{I}_d-A_0)\sum_{0\leq k\leq K}Q^kV+
  \bF(\base^{t-1})J_{\rho\omega}^{K+1}, 
\] 
  where $V$ is the matrix whose columns are the column vectors
  $V^{(0)}$ , $V^{(1)}$, $\ldots$, $V^{(\nu-1)}$ as in the proof of
  Lemma~\ref{Du08vers09:lemma:systemdilationequation}. The last column
  of the matrix~$\bA(t)$ is the regular part of the expansion
  of~$\boldSigma_N$ (that is without the error term), and we want to
  verify that $t\mapsto \bA(t)$ is a continuous function at integers.
  Let~$K_0$ be an integer. When $t$ tends towards~$K_0$ from above, we
  have $K=\lfloor t\rfloor=K_0$ and
\[
\lim_{t\overset{>}{\rightarrow}K_0}\bA(t)=
(\operatorname{I}_d-A_0)\sum_{0\leq k\leq K_0}Q^kV+
\bF(1/\base)J_{\rho\omega}^{K_0+1},
\]
but according to the dilation
equation~\eqref{Du08vers09:eq:matrixsystemdilationequation},
$\bF(1/\base)=A_0VJ_{\rho\omega}^{-1}$ and we find
\[
\lim_{t\overset{>}{\rightarrow}K_0}\bA(t)=
(\operatorname{I}_d-A_0)\sum_{0\leq k\leq K_0}Q^kV+A_0VJ_{\rho\omega}^{K_0}.
\]
  On the other side when $t$ tends towards~$K_0$ from below, we have
  $K=\lfloor t\rfloor=K_0-1$ and
\[
\lim_{t\overset{<}{\rightarrow}K_0}\bA(t)=
(\operatorname{I}_d-A_0)\sum_{0\leq k\leq K_0-1}Q^kV+
\bF(1)J_{\rho\omega}^{K_0}=
(\operatorname{I}_d-A_0)\sum_{0\leq k\leq K_0-1}Q^kV+VJ_{\rho\omega}^{K_0}.
\]
  The difference is
\[
\lim_{t\overset{>}{\rightarrow}K_0}\bA(t)-
\lim_{t\overset{<}{\rightarrow}K_0}\bA(t)=
(\operatorname{I}_d-A_0)Q^{K_0}V+(A_0-\operatorname{I}_d)VJ_{\rho\omega}^{K_0}
\]
but Hypothesis~\ref{Du08vers09:hypo:generalizedeigenvector} writes
$QV=VJ_{\rho\omega}$ and we conclude that the difference is the null
matrix.
\end{proof}

Theorem~\ref{Du08vers09:thm:sequenceasymptoticexpansion} deserves some
comments. Functions which roughly write $\Phi(t)=\omega^K\rho^{1-t}
F(\base^{t-1})$ have not the self-similar character of the solution
~$F(x)$ of a dilation equation, because we have cut the piece of the
function between~$0$ and~$1/\base $. Moreover it is possible to use a
more ordinary scale by expanding $\binom{\lfloor
\log_{\base}N\rfloor}{\ell}=\binom{-t+\log_{\base}N}{\ell}$ as a
polynomial in $\log_{\base}N$ with coefficients in
$t=\{\log_{\base}N\}$. But doing this we hide completely the structure
of the asymptotic expansion and the dilation equations which are
behind it. This is certainly the reason why these equations have not
been perceived before, even if the word ``fractal'' is frequently
employed.
\begin{example}[Coquet sequence]
  The Coquet sequence~\citep{Coquet83} is defined as
  $u(n)=(-1)^{s_2(3n)}$. (We recall that $s_2(n)$ is the sum of the
  bits in the binary expansion of~$n$.)
\begin{figure}
\includegraphics[width=0.45\linewidth]{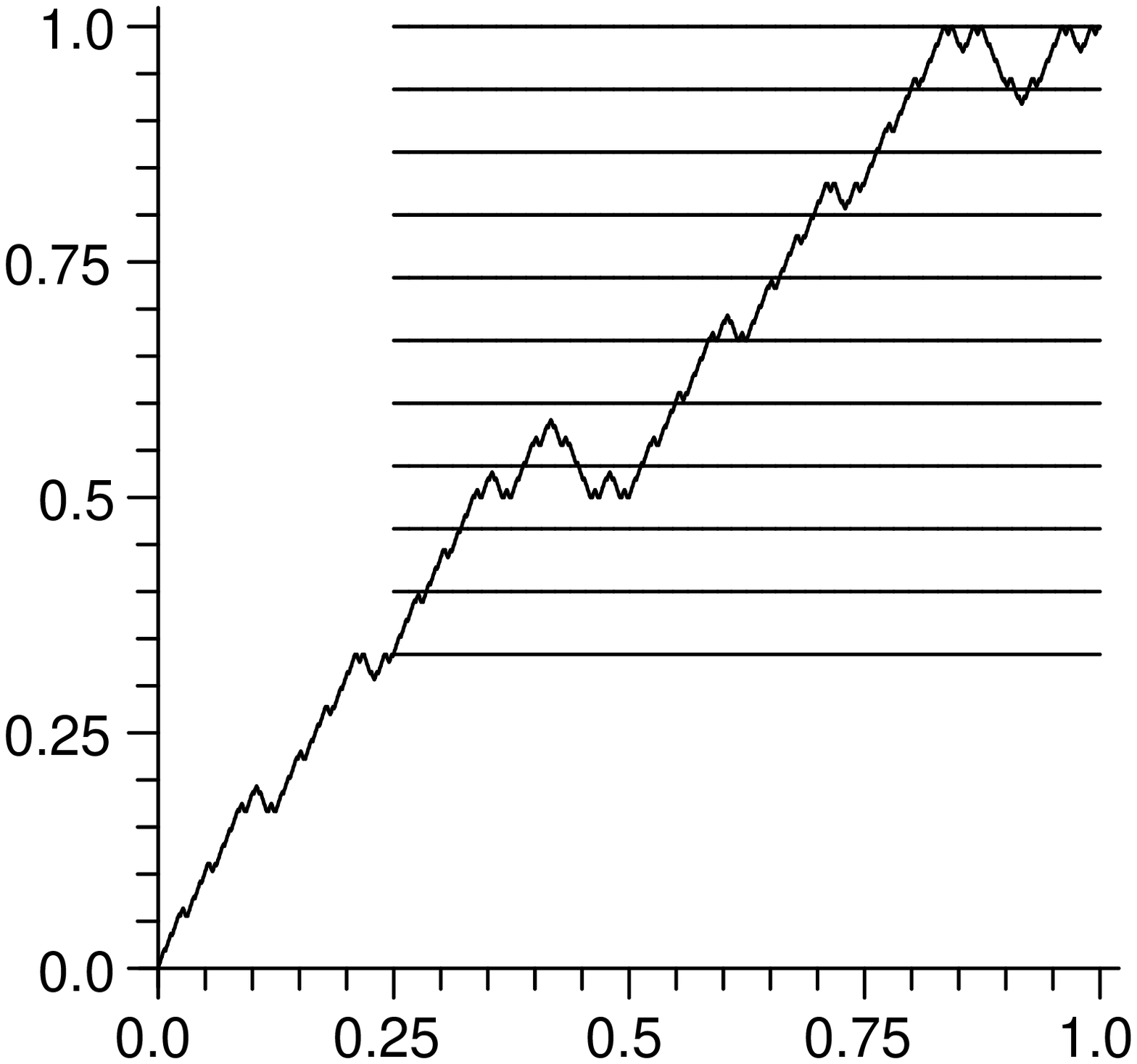}
\hfil
\includegraphics[width=0.45\linewidth]{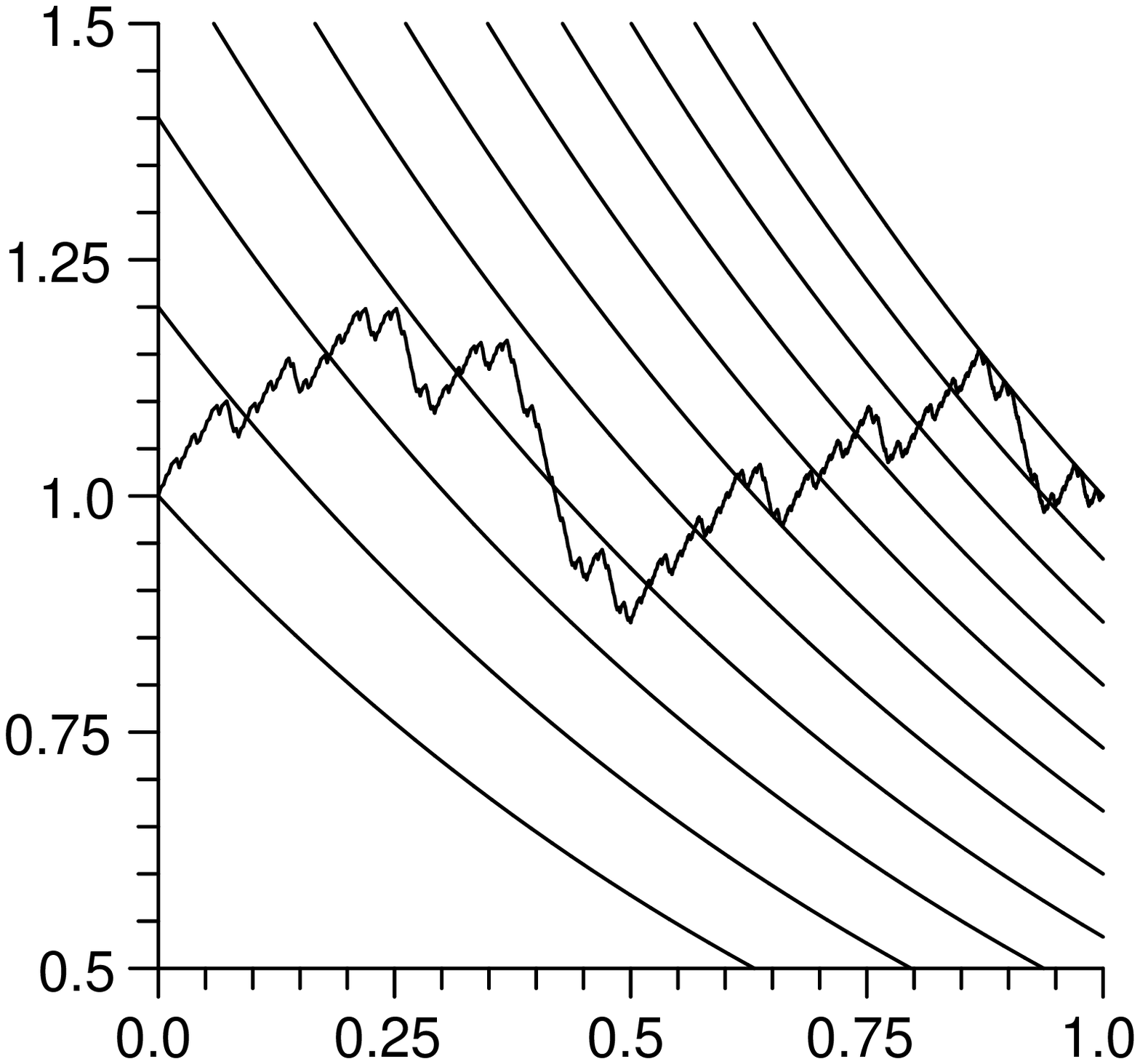}
\caption{\label{Du08vers09:fi:Coquet}
The change from the function~$F(x)$ (left) to the
periodic function~$\Phi(t)$ (right) illustrated for the Coquet
sequence.}
\end{figure}
It is 
$4$-rational and  it admits (with $(u(n),u(4n+2),u(4n+3))$ as a
generating family) the linear representation
\begin{multline*}
A_0=\left(\begin{array}{ccc}
1 & 1 & 1 \\ 0 & 0 & 0 \\  0 & 0 & 0 
	  \end{array}\right),\qquad
A_1=\left(\begin{array}{ccc}
1 & 0 & 0 \\ 0 & 1 & -1 \\  0 & 0 & 0 
	  \end{array}\right),\qquad
C=\left(\begin{array}{c} 1 \\ 0 \\ 0 \end{array}\right)
\\
L=\left(\begin{array}{ccc} 1 & 1 & 1\end{array}\right),\qquad
A_2=\left(\begin{array}{ccc}
 0 & 0 & 0 \\ 1 & 1 & 0 \\ 0 & 0 & 1
	  \end{array}\right),\qquad
A_3=\left(\begin{array}{ccc}
 0 & 0 & 0 \\  0 & 0 & 0 \\ 1 & -1 & 1
	  \end{array}\right).
\end{multline*}
The matrix
\[
Q=A_0+A_1+A_2+A_3=
\left(\begin{array}{ccc} 2 & 1 & 1 \\ 1 & 2 & -1 \\ 1 & -1 & 2\end{array}\right)
\]
has as eigenvalue~ $3$, which is double, and~$0$, which is simple. The
joint spectral radius of the family $(A_r)_{0\leq r<4}$ is
$\lambdajsr=1$, because with the maximum sum column norm we find
$\lambda_1^{(1)}=1$ and ~$1$ is an eigenvalue of each matrix~$A_r$,
$0\leq r<4$. The column vector~$C$ decomposes as $C=V_3+V_0$, with
\[
V_3=\left(\begin{array}{c}2/3 \\ 1/3 \\
1/3\end{array}\right),\qquad
V_0=\left(\begin{array}{c}1/3\\-1/3\\-1/3\end{array}\right)
\]
and~$V_{\rho}$ is an eigenvector for~$\rho=3$, $0$. 

Theorem~\ref{Du08vers09:thm:sequenceasymptoticexpansion} applies and
we obtain~\citep{Coquet83,FlGrKiPrTi94}
\[
\sum_{n\leq N}(-1)^{s_2(3n)}\mathop{=}_{N\to+\infty}
N^{\log_43}\,3^{1-\{t\}}F(4^{\{t\}-1})+O(1),
\]
with $t=\log_4 N$. The function~$F$ is the sum $F=F_1+F_2+F_3$,
where $\boldF=(F_1,F_2,F_3)$ is the unique solution of the dilation
equation, written with Convention~\ref{Du08vers09:hypo:homogequconv},
\[
\left\{
\begin{array}{l}
 F_1(x)=\displaystyle\frac{1}{3}F_1(4x)+\frac{1}{3}F_2(4x)+\frac{1}{3}F_3(4x)+\frac{1}{3}F_1(4x-1),\\[1.5ex]
F_2(x)=\displaystyle\frac{1}{3}F_2(4x-1)-\frac{1}{3}F_3(4x-1)+\frac{1}{3}F_1(4x-2)+\frac{1}{3}F_2(4x-2),\\[1.5ex]
F_3(x)=\displaystyle\frac{1}{3}F_3(4x-2)+\frac{1}{3}F_1(4x-3)-\frac{1}{3}F_2(4x-3)+\frac{1}{3}F_3(4x-3),
\end{array}\right.
\]
with the conditions $F_1(0)=F_2(0)=F_3(0)=0$, $F_1(1)=2/3$,
$F_2(1)=F_3(1)=1/3$. The function~$F(x)$ and the periodic function
$\Phi(t)=3^{1-\{t\}}F(4^{\{t\}-1})$ are illustrated in
Fig.~\ref{Du08vers09:fi:Coquet} respectively on the left and on the
right. The positive character of~$\Phi$ proves a subtle
phenomenon~\citep{Newman69}: the number of ones in the binary
expansion of the integers which are multiple of~$3$ is more often
even than odd.  Both functions are H\"older with exponent
$\log_43\simeq 0.795$. In the translation from~$F$ to~$\Phi$, the
first quarter of~$F$ is lost and~$\Phi$ has not the auto-similar
character of~$F$.
\end{example}

As we have seen if the previous sections, it may be interesting to
change the radix. Going from radix~$\base$ to radix~$\base^T$, we
change the matrix~$Q$ into its power~$Q^T$. Frequently the eigenvalues
of~$Q$ have arguments which are commensurable with~$\pi$ and a
suitable choice of~$T$ gives a matrix~$Q$ whose all eigenvalues are
real nonnegative. This trick is often used implicitly. For example
the Coquet sequence is viewed as a $4$-rational sequence (as we have
made in the previous example) in \citep{Coquet83,DuTh89,AlSh03} (but
this is not the case in~\citep{FlGrKiPrTi94}). However it may be of
interest to use two radices at a time. Even if both dilation equations
define as well the function under consideration, it may be practically
simpler to study the solution for one radix in terms of solutions for
the other radix. The next example illustrates this point.

\begin{example}[Rudin-Shapiro sequence continued]
  As an application of
  Theorem~\ref{Du08vers09:thm:sequenceasymptoticexpansion}, we obtain the
  asymptotic expansion for the Rudin-Shapiro sequence defined in
  Ex.~\ref{Du08vers09:ex:RudinShapiro} \citep{BrErMo83}
\[
\sum_{n\leq N}u_n\mathop{=}_{N\to+\infty}\sqrt{N}\Phi(\log_4N)+O(1)
\]
where~$\Phi$ is the $1$-periodic function defined by
$\Phi(t)=2^{1-\{t\}}F(4^{\{t\}-1})$ and~$F$ is defined through a
dilation equation for radix~$4$.  Functions~$F$ and~$\Phi$ are
illustrated in Fig.~\ref{Du08vers09:fig:RudinShapiro_symmetry}. Let us
denote $\pi_k$ the part of the graph of~$F$ which corresponds to the
interval $[k/8,(k+1)/8]$ for $0\leq k<8$. It is evident that the parts
of odd index on one side and the part of even index on the other side
reproduce the same pattern (with a piece upside down) and we want to
prove these facts.

\begin{figure}
\begin{center}
\includegraphics[width=0.45\linewidth]{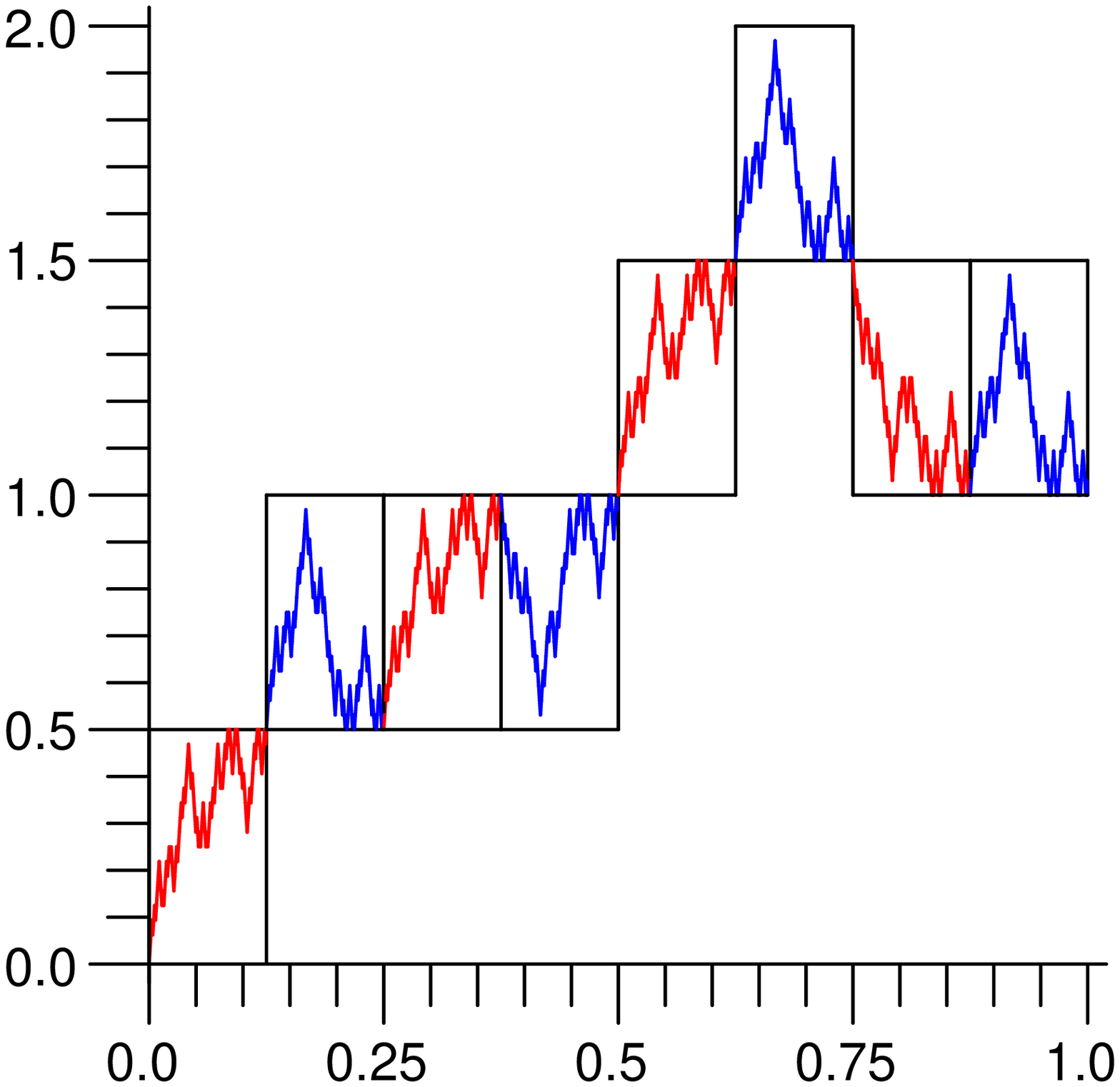}
\hfil
\includegraphics[width=0.45\linewidth]{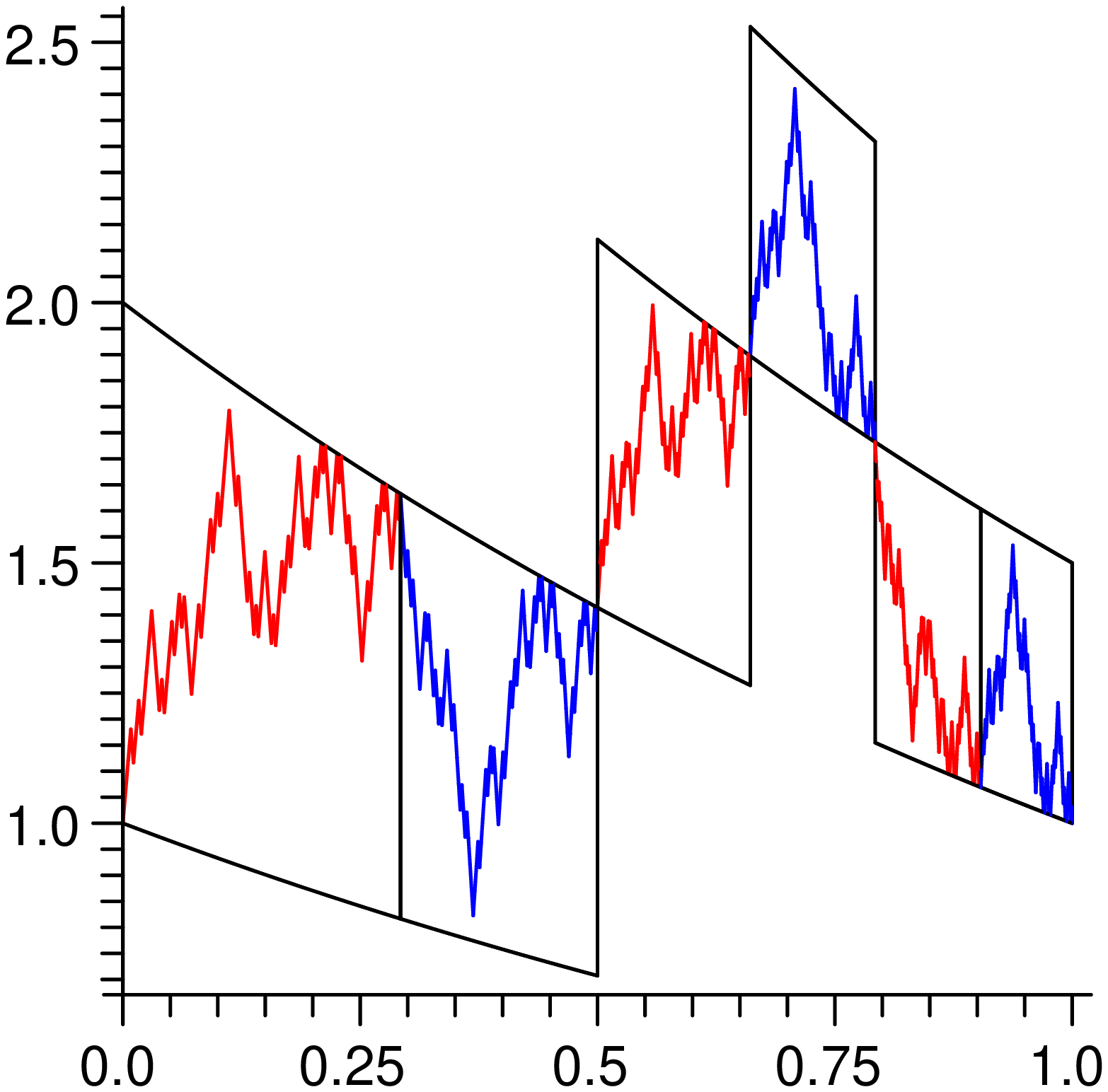}
\end{center} 
\caption{\label{Du08vers09:fig:RudinShapiro_symmetry} The solution of
  the dilation equation (left-hand side) and the periodic function
  (right-hand side) associated to the Rudin-Shapiro sequence show some
  symmetries.  }
\end{figure} 

Besides of radix~$4$, we will use radix~$2$. The matrix~$Q$ for
radix~$2$ (see Ex.~\ref{Du08vers09:ex:RudinShapiro}) has two
eigenvalues $\pm\sqrt{2}$ and associated eigenvectors
$V_{\pm}=(2\pm\sqrt{2},\pm\sqrt{2})/4$. The vector~$C$ of the
representation writes $C=V_++V_-$ and we have to consider two
vector-valued functions $\boldF_{\pm}$. Their components
are~$F_{\pm,1}$ and~$F_{\pm,2}$. The function under study is
$F=F_{+,1}+F_{+,2}+F_{-,1}+F_{-,2}$.

To abbreviate the computations and with the hope to make them clearer,
we introduce the following formalism. The group of affine transforms
of the real line acts on the vector space of functions from the real
line into itself. More precisely the group acts on the right by
substitution: for an affine transform~$a$ and a function~$f$, the
image $f.a$ is simply the composed map $f\circ a$. As affine transforms
it is natural in the actual context to consider $d_0:x\mapsto 2x$ and
$d_1:x\mapsto 2x-1$. We will use $t_{1/2}:x\mapsto x+1/2$ too. With
these notations the dilation equation for~$\boldF_\pm$ writes (We use
Convention~\ref{Du08vers09:hypo:homogequconv} to write the dilation
equations, that is we consider that functions are adequately extended to
the whole real line.)
\begin{equation}\label{Du08vers09:eq:abstractdileq}
F_{\pm,1}=\pm\frac{1}{\sqrt 2}F_{\pm,1}.d_0\pm\frac{1}{\sqrt
  2}F_{\pm,2}.d_0,\qquad
F_{\pm,2}=\pm\frac{1}{\sqrt 2}F_{+,1}.d_1\mp\frac{1}{\sqrt
  2}F_{\pm,2}.d_1,
\end{equation}
with the conditions $\boldF_\pm(0)=0$ and $\boldF_\pm(1)=V_\pm$ We
will collect some simple facts in order to achieve our goal. It
results from the dilation equations that $F_{\pm,2}$ is~$0$ on the
left of~$1/2$, while $F_{\pm,1}$ is constant on the right
of~$1/2$. The action of~$t_{1/2}$ on both members of the second
equation above with the equality $d_1\circ t_{1/2}=d_0$ provides
\[
F_{+,2}.t_{1/2}=\frac{1}{\sqrt 2}F_{+,1}.d_0-\frac{1}{\sqrt
  2}F_{+,2}.d_0.
\]
A subtraction and an addition give respectively
\begin{equation}\label{Du08vers09:eq:Fsym}
F_{+,1}-F_{+,2}.t_{1/2}=\sqrt{2}F_{+,2}.d_0,\qquad
F_{+,1}+F_{+,2}.t_{1/2}=\sqrt{2}F_{+,1}.d_0.
\end{equation}
From these preliminaries follow the following formul\ae
\begin{equation}\label{Du08vers09:eq:formulaforRudinShapiro}
\begin{array}{llll}
F\left(x+\frac{1}{2}\right)-F(x)=1,&  \text{for $0\leq x\leq
  \frac{1}{4}$},
&\qquad
F\left(x+\frac{1}{2}\right)+F(x)=2,&  \text{for $\frac{1}{4}\leq x\leq \frac{1}{2}$},
\\[2ex]
F\left(x+\frac{1}{4}\right)-F(x)=\frac{1}{2},&  \text{for
$\displaystyle 0\leq x\leq \frac{1}{8}$},
&\qquad
F\left(x+\frac{1}{4}\right)+F(x)=\frac{3}{2},&  \text{for
$\displaystyle\frac{1}{8}\leq x\leq \frac{1}{4}$}.
\end{array}
\end{equation}
We prove only the third one; the other proof are of the same type.
For $0\leq x\leq 1/8$, the number $x+1/4$ lies in $[1/4,3/8]$ and both
numbers are in $[0,1/2]$. The dilation equations~\eqref{Du08vers09:eq:abstractdileq} give
\[
F_{\pm,1}(x)=\pm\frac{1}{\sqrt 2}F_{\pm,1}(2x)\pm\frac{1}{\sqrt 2}F_{\pm,2}(2x),\qquad
F_{\pm,2}(x)=0,
\]
and 
\[
F_{\pm,1}\left(x+\frac{1}{4}\right)=\pm\frac{1}{\sqrt
  2}F_{\pm,1}\left((2x+\frac{1}{2}\right)\pm\frac{1}{\sqrt
  2}F_{\pm,2}\left(2x+\frac{1}{2}\right),\quad\displaystyle
F_{\pm,2}\left(x+\frac{1}{4}\right)=0.
\]
Since $F$ is $F_{+,1}+F_{+,2}+F_{-,1}+F_{-,2}$, we obtain 
\begin{multline*}
F\left(x+\frac{1}{4}\right)-F(x)=\frac{1}{\sqrt 2}\left[
F_{+,1}\left(2x+\frac{1}{2}\right)+F_{+,2}\left(2x+\frac{1}{2}\right)
\right.\\\left.
-
F_{-,1}\left(2x+\frac{1}{2}\right)-F_{-,2}\left(2x+\frac{1}{2}\right)
-F_{+,1}(2x)-F_{+,2}(2x)
+F_{-,1}(2x)+F_{-,2}(2x)
\right].
\end{multline*}
However $2x$ is in $[0,1/4]$ and $2x+1/2$ is in  $[1/2,3/4]$. The
functions $F_{\pm,1}$ are constant on the right of $1/2$, while the
functions $F_{\pm,2}$ are constant on the left of $1/2$. Making
explicit theses constant values, we obtain 
\[
F\left(x+\frac{1}{4}\right)-F(x)=\frac{1}{\sqrt 2}\left[
\frac{1}{2}\sqrt{2}
-F_{+,1}(2x)
+F_{-,1}(2x) 
+F_{+,2}\left(2x+\frac{1}{2}\right)
-F_{-,2}\left(2x+\frac{1}{2}\right)
\right].
\]
Besides Formula~\eqref{Du08vers09:eq:Fsym} gives 
\[
F_{\pm,1}(2x)=F_{\pm,2}\left(2x+\frac{1}{2}\right)
\]
because $x$ is in $[0,1/8]$. The expected formula is proved.

The previous computations ask some questions. We have established
formul\ae\, which write
\[
\sum_{f,a} \lambda_{f,a} f.a=c,
\]
where $f$ runs through the four functions $F_{+,1}$, $F_{+,2}$,
$F_{-,1}$, $F_{-,2}$; $a$ runs through the set of affine transforms of
the real line; the family $(\lambda_{f,a})$ has a finite support; $c$
is some constant. Is it possible to design a method in order to find
such relationships? For example, how to obtain mechanically the
formula
\[
\frac{4}{2+\sqrt 2}F_{\pm,1}(x)-\frac{4}{\sqrt 2}F_{\mp,2}(1-x)=1?
\]
Does exists a regular process which gives formul\ae\, of the type
\[
\sum_a\lambda_a F.a=C?
\]
Evidently we want a minimal generating set for such formul\ae. 

From the four formul\ae~\eqref{Du08vers09:eq:formulaforRudinShapiro}
above we deduce that all pieces of the graph of~$F$ are obtained
from~$\pi_0$ and~$\pi_1$ by some translations or glide
reflections. The symmetries of~$F$ are translated to~$\Phi$, but they
lose their graphical evidence. The pieces~$\pi_0$ and~$\pi_1$
disappear and the pieces~$\pi_k$, $2\leq k<8$ become the
pieces~$\pi_k'$ associated to the interval
$[\log_4(k/2),\log_4((k+1)/2)]$. The links between the pieces become
more intricate. For example the pieces $\pi_2'$ and $\pi_3'$ on one
side and $\pi_6'$ and $\pi_7'$ on the other side are linked by the
formula $2^s\Phi(s)+2^t\Phi(t)=4$ under the condition that both
numbers $s\in[0,1/2]$ and $t\in[\log_43,1]$ are related by
$4^t-4^s=2$.
\end{example}

\begin{example}[Rescaling]
  In the study of radix-rational sequences, it is an attractive idea
  at first sight to multiply all the matrices~$A_0$, $\ldots$,
  $A_{\base -1}$ of a linear representation by a scalar in order
  to control the order of growth of the sequence. But after a while
  the idea seems to be silly, because it introduces a weighting
  according to the length of the radix~$\base $ expansions of the
  integers. The previous theorem shows clearly the effect of such a
  rescaling. It appears that the modification does not change at all
  the functions~$F(x)$ defined by a dilation equation, and changes very
  simply the periodic functions~$\Phi(t)$. 

  Let us consider the sequence $u(n)=n$. It admits the representation
\[
L=\left(\begin{array}{cc}0&1\end{array}\right),\quad
A_0=\left(\begin{array}{cc}{\base}&0\\0&1\end{array}\right),\quad
A_r=\left(\begin{array}{cc}{\base}&0\\r&1\end{array}\right)\;\text{for
$r>0$,}\quad C=\left(\begin{array}{c}1\\0\end{array}\right).
\]
for radix~$\base$. There is a dominant eigenvalue $\rho={\base}^2$,
with associated functions $F(x)=x^2$ and $\Phi(t)=1/2$. We obtain the
obvious asymptotic expansion
\[
\sum_{n\leq N}n\mathop{=}_{N\to+\infty}\frac{1}{2}N^2+O(N).
\]
Now consider the sequence $u(n)=n/{\base}^{\lambda_{\base}(n)}$, where
$\lambda_{\base}(n)$ is the length of the radix~${\base}$ expansion
of~$n$. With ${\base}=2$ it begins with $0$, $0.1$, $0.10$, $0.11$,
$0.100$, $0.101$, $0.110$, $0.111$, $0.1000$, $\ldots$ It admits the
representation
\[
L=\left(\begin{array}{cc}0&1\end{array}\right),\quad
A_0=\left(\begin{array}{cc}1&0\\0&1/{\base}\end{array}\right),\quad
A_r=\left(\begin{array}{cc}1&0\\r/{\base}&1/{\base}\end{array}\right)\;\text{for $r>0$,}\quad
C=\left(\begin{array}{c}1\\0\end{array}\right)
\]
and again we have $F(x)=x^2$, but $\rho={\base}$. We obtain the less
obvious asymptotic expansion
\[
\sum_{n\leq N}\frac{n}{\base^{\lambda_\base(n)}}\mathop{=}_{N\to+\infty}\frac{1}{2}N\Phi(\log_{\base}N)+O(1)
\]
with
\[
\Phi(t)=\frac{{\base}^{-t}+{\base}^{t-1}}{2}=\frac{1}{{\base}^{1/2}}\cosh
\left((t-\frac{1}{2})\ln {\base}\right),\qquad\text{for $0\leq t<1$.}
\]
The convergence towards this function is illustrated in
Fig.~\ref{Du08vers09:fig:rescaling} where a (piece of) catenary
appears in an evident way.
\begin{figure}
  \centering
\includegraphics[width=0.8\linewidth]{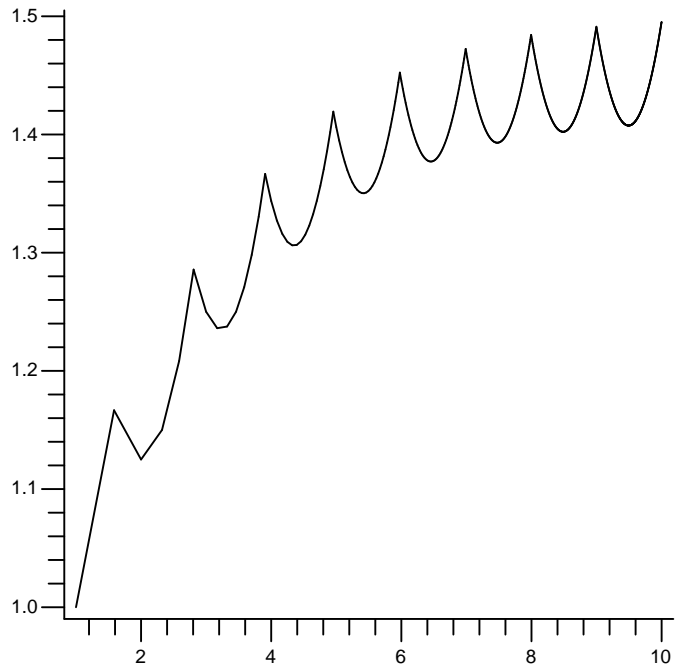}
\caption{\label{Du08vers09:fig:rescaling}The empirical periodic
  function for sequence $u(n)=n/2^{\lambda_2(n)}$
  (where $\lambda_{2}(n)=1+\lfloor \log_2 n\rfloor$ is the length of the
  binary expansion of~$n$) as a function of~$\log_2n$. A catenary
  appears.}
\end{figure}
\end{example}

\subsection{Substitutions and automata}
In the very beginning of the introduction we have written that nobody
has asserted a general theorem about the asymptotic expansion of
radix-rational sequences. This is not quite the truth because
\citet[Sec.~3.5]{AlSh03} provide a theorem. But their statement
suffers from a severe restriction because they assume for example the
hypothesis $A_w=O(K)$ for $\length w=K$, that is $\lambdajsr\leq 1$,
and some other technical hypotheses. It is noteworthy that the result
they give has no error term and for this it appears as a direct
generalization of~\citep{Delange75}.

Other works which deserve attention are related to the name of Dumont 
\citep{DuTh89,Dumont90,DuSiTh99}.  There is a link between these works
and the present one, but it needs some explanations to be
emphasized. We consider a finite alphabet~$\cX$. A
substitution~$\sigma$ on~$\cX$ is a map which associates to each
letter a non empty word. It extends as a morphism of~$\cX^*$ by
concatenation. If~$a$ is a letter which turns out to be a strict
prefix of the word~$\sigma(a)$, the sequence~$(\sigma^k(a))$, obtained
by iteration of~$\sigma$ from~$a$, converges to an infinite
word~$\omega=(\omega_n)_{n\geq0}$, which is named a fixed
point of~$\sigma$~\citep{ArBeFeItMaMoPeSiTaWe01}.

We may associate to the substitution~$\sigma$ and to the letter~$a$ an
automaton~$\cA_{\sigma}$. The state of the automaton are the letters
of the alphabet. There is a unique initial state which is the
letter~$a$. All states are final states. The transitions are labeled
by nonnegative integers and obtained in the following way. For a
letter~$x$, the non empty word~$\sigma(x)$ has length~$\ell\geq 1$ and
writes
$\sigma(x)=\sigma_0(x)\sigma_1(x)\ldots\sigma_{\ell-1}(x)$. There is a
transition from state~$x$ to state~$\sigma_r(x)$ labeled
by~$r$. Conversely the automaton determines the substitution and the
fixed point of the substitution which begins by~$a$. If~$\base$ is the
maximal length of the words $\sigma(x)$ for~$x$ a letter, all labels
are bounded above by $\base-1$. To the automaton is associated the
language~$\cL$ it recognizes, that is the set of words over the
alphabet $\left\{0,1,\ldots,\base-1\right\}$ which translate a
sequence of transitions from the initial state to a final state. This
language is regular because it is recognized by the automaton. The
simplest case is the case where the substitution is of constant
length, which means that all words $\sigma(x)$, $x\in\cX$, have the
same length $\base\geq 2$. In such a case the language~$\cL$ is merely
the free monoid $\left\{0,1,\ldots,\base-1\right\}^*$.

The automaton may be enriched by an output function~$f$ which
associates to each state, that is letter, a value. The output function
extends to words and the image of the fixed point~$\omega$ is an
infinite sequence of values. \citet{DuTh89} study the asymptotic
behaviour of the running sum
\[
S^f(N)=\sum_{0\leq n<N}f(\omega_n)
\]
where~$f$ is real-valued, under some hypotheses which we will
describe later.

To merge the framework of \citep{DuTh89} and ours it suffices to
consider a linear version of the automaton. For a given commutative
field~$\bK$, we consider the space $\bK^{\cX}$, whose canonical base
is $(e_x)_{x\in\cX}$. To each integer~$r$ between~$0$ and $\base-1$ we
associate a $d\times d$ matrix~$A_r$, where~$d$ is the size of the
alphabet, defined by $A_re_x=e_y$ if there is a transition
labeled~$r$ from state~$x$ to state~$y$ in the automaton, and
$A_re_x=0$ otherwise. The matrix $Q=A_0+\dotsb+A_{\base-1}$ is the
incidence matrix of the substitution; its entries $Q_{x,y}$ is the
number of occurrences of the letter~$y$ in the word $\sigma(x)$. We
insert in the description the column vector $C=e_a$ and the row
vector $L=(f(x))_{x\in\cA}$ and we have a linear representation. For a
word $w=w_0\dotsb w_{K-1}$ taken from the language $\cL$ of the
automaton, it is equivalent to follow from the initial state~$a$ the
transitions labeled $w_0$, $\ldots$, $w_{K-1}$ and to compute the
value $f(z)$ of the state~$z$ where ends the path, or to compute the
matrix product $LA_{w_{K-1}}\dotsb A_{w_0}C$. We are not far to the
definition of a radix-rational sequence, but the idea of a numeration
system is yet lacking.

The lengths of the iterates~$\sigma^n(a)$ are the elements of an
increasing sequence~$\cB=(\base_n)_{n\geq 0}$ which begins with
$\base_0=1$. This sequence defines a numeration system: a nonnegative
integer~$n$ admits an expansion $(a_{K-1}\ldots a_0)_{\cB}$ in the
system~$\cB$ if it writes
$n=a_0\base_0+a_1\base_1+\dotsb+a_{K-1}\base_{K-1}$. There is
always at least one expansion named the normal $\cB$-representation,
which is obtained by a greedy algorithm. The language~$\cS$ associated
to the system of numeration is the set of all normal
$\cB$-representations of nonnegative integers. (The representation
of~$0$ is the empty word.) Generally speaking, the language~$\cL$ is
not regular and is distinct form the language~$\cS$, but in the simple
case where the substitution is of constant length~$\base$, both
languages~$\cL$ and~$\cS$ are equal. As a consequence the sequence
$(f(\omega_n))$ is $\base$-rational. It is even $\base$-automatic: for
an nonnegative integer~$n$ we write its radix~$\base$ expansion and we
use this word to follow a path which begins with~$a$ in the automaton;
the end of the path is~$\omega_n$ and the value associated to~$n$
is~$f(\omega_n)$. It is proved that a $\base$-automatic sequence is a
$\base$-rational sequence which takes a finite number of values.

Nevertheless the previous system of numeration is not suitable in case
of a substitution which is not of constant length. The prefix-suffix
automaton~$\cA'_{\sigma}$ \citep{Mosse96,CaSi01a} is the right
automaton. Its states are the letters of~$\cX$. There is a transition
labeled $(p,x,s)\in\cX^*\times\cX\times\cX^*$ from~$x$ to~$y$ if
$\sigma(y)=pxs$ (hence the name of the automaton). All states are
initial and~$a$ is the only final states. Clearly~$\cA'_{\sigma}$ is a
sophisticated version of~$\cA$ (the transitions are reversed but this
is of no importance), which is equivalent in case of a substitution of
constant length. \citet{DuTh89} use a version of this automaton which
takes into account only the prefixes. The language~$\cL'$ recognized
by~$\cA'_{\sigma}$ may be defined by a set of prohibited patterns: a
pair of labels $((p,x,s),(p',y,s'))$ such that $\sigma(y)\neq psx$
does not occur in a word of~$\cL'$.  For every subword
$\omega^N=(\omega_n)_{0\leq n<N}$ there exists a unique path
$((p_k,x_k,s_k))_{0\leq k<K}$ of length~$K$, determined by
$\base_K\leq N<\base_{K+1}$, which ends at~$a$. In other terms there
exists a unique word $((p_k,x_k,s_k))_{0\leq k<K}$ from~$\cL'$ of
length~$K$ such that $x_{K-1}=a$. It defines a cutting of
$\omega^N=(\omega_n)_{0\leq n<N}$ which writes
$\omega^N=\sigma^{K-1}(p_{K-1})\sigma^{K-2}(p_{K-2})\ldots\sigma^0(p_0)$. Moreover
the prefixes~$p_k$ lie in the finite set of strict prefixes of the
words~$\sigma(x)$ with $x$ a letter. Such a result demands hypotheses
\citep{Mosse92}: the substitution is primitive, which means that there
exists a~$k$ such that for every pair of letters~$x$ and~$y$ the
letter~$x$ occurs in $\sigma^k(y)$; the fixed point~$\omega$ is not a
periodic sequence. Taking into account the lengths of the words, we
obtain
$N=|\sigma^{K-1}(p_{K-1})|+|\sigma^{K-2}(p_{K-2})|+\dotsb+|\sigma^0(p_0)|$. This
is the expansion of the integer~$N$ according to the numeration system
associated with the substitution~$\sigma$ and~$a$. (Note that there
does not exist a general definition of what is a numeration system.) 
The proof of the existence of the expansion is anew a greedy
algorithm~\citep[Lemma~1.3]{DuTh89}: $p_{K-1}$ is the longest prefix
of~$\omega$ such that $\sigma^{K-1}(p_{K-1})$ is a prefix
of~$\omega^N$.

The asymptotic study of the sequence~$(S^f(N))_{N\geq 0}$ is based on
the following remarks. The function~$f$ is defined on the letters of
the alphabet~$\cX$ and extends additively to the word
of~$\cX^*$. Particularly we have $S^f(N)=f(\omega^N)$. Let~$L$ the row
vector whose entries are the values of~$f$ on the letters of the
alphabet; let $C_w$ the column whose entries are the number of
occurrences of each letter in the word~$w$. A consequence of these
definition is the formula $f(w)=LC_w$. Let~$Q$ be the incidence matrix of
the substitution~$\sigma$. We see immediately the recursion
$C_{\sigma(w)}=QC_w$. As a consequence we have two analogous formul\ae\, 
\[
S^f(N)=f(\omega^N)=L\sum_{k=0}^{K-1}Q^kC_{p_k},\qquad
N=U\sum_{k=0}^{K-1}Q^kC_{p_k},
\]
where~$U$ is the row vector whose all entries are equal to~$1$. Both
formul\ae\, render plausible the following result
\citep[Th.~2.6]{DuTh89},
\[
S^f(N)\mathop{=}_{N\to+\infty}
L\Lambda \,N
+\log_{\theta}^{\alpha}NN^{\beta}G(N)+o(\log_{\theta}^{\alpha}NN^{\beta}).
\]
It is proved under the following hypotheses and through the following
assertions. Because~$Q$ is a primitive nonnegative matrix, its
spectral radius~$\theta$ is a dominant eigenvalue. The column
vector~$\Lambda$ is its positive eigenvector, normalized by
$U\Lambda=1$. The existence of a sub-dominant eigenvalue $\theta'$ is
assumed. It dominates all others eigenvalues apart~$\theta$, that is
$\theta>\theta'>|\theta''|$ for all eigenvalues~$\theta''$ except
$\theta$ and $\theta'$. Moreover it is assumed $\theta'>1$. The
integer~$\alpha$ is defined such that $\alpha+1$ is the multiplicity
of~$\theta'$ as root of the minimal polynomial of~$Q$ and the real
number~$\beta$ is defined by $\beta=\log_{\theta}\theta'$. The
function $G$ is defined as the limit
\[
\forall x> 0,\quad G(x)=x^{-\beta}\lim_{k\to+\infty}
\frac{S^f(\lfloor\theta^kx\rfloor)-L\Lambda \lfloor\theta^kx\rfloor}{k^{\alpha}\theta'{}^k}.
\]\
It satisfies $G(\theta x)=G(x)$ for~$x$ positive and is H\"older of
exponent~$\beta$.

To summarize \citet{DuTh89} study the mean asymptotic behaviour of
real-valued sequences defined by substitutions and for substitutions
with constant length this reduces to automatic sequences. We study
complex-valued sequences associated to complex rational formal power
series and for series which takes only a finite number of values this
reduces to automatic sequences.

\subsection{Linear representation insensitive to the leftmost zeroes}
In the use of Formula~\eqref{Du08vers09:eq:runningsumu}, we may
introduce a simplifying hypothesis which turns out to be non
restrictive.  To a rational formal series we associate a
radix-rational series through a numeration system. But the knowledge
of a radix-rational sequence does not determine a unique rational
formal series, because the expansions of the integers does not use the
words which begin with a zero. A first idea which comes in mind to
complete the definition of the formal series is to decide that words
which begin with zero give a null value. But there is another way to
extend a radix-rational sequence into a formal series and it is more
natural after all. Let us say that a linear representation of a
radix-rational series is {\em insensitive to the leftmost zeroes\/} if
it satisfies $LA_0=L$. For such a representation the formal power
series is completely determined by the radix-rational series, because
the adding of some zeroes on the left of the expansion of an integer
does not change the value associated to this word.
\begin{lemma}
  Every radix-rational sequence has a (reduced) linear insensitive to
  the leftmost zeroes representation, that is a representation such
  that $LA_0=L$.
\end{lemma}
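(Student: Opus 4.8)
The plan is to exhibit the desired representation directly from the very definition of radix-rationality, where the condition $LA_0=L$ turns out to be automatic, so that the only real work is checking that the construction can be taken reduced. Recall that $u$ being radix-rational means the $\bC$-vector space $M=\operatorname{span}\{u.w:\,w\in\cB^*\}$ spanned by all shifts of $u$ under the right action $v\mapsto v.r=(v(\base n+r))_n$ is finite dimensional; put $d=\dim M$ and fix a basis $e^1,\dots,e^d$ of $M$. I would then form the linear representation attached to this generating family exactly as in Section~\ref{Du08vers09:sect:rrs+de}: the row vector $L=(e^j(0))_j$, the matrices $A_r=(a_{r,i,j})$ defined by $e^j.r=\sum_i a_{r,i,j}e^i$, and $C=(c_i)$ defined by $u=\sum_i c_ie^i$. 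The associated series then satisfies $(S,w)=u(\Bint w)$; in particular prepending zeroes does not change the coded integer, which is the informal reason one should expect $LA_0=L$.

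The identity $LA_0=L$ comes from the single arithmetic fact that the figure $0$ acts through $n\mapsto\base n$, which fixes $n=0$. Indeed, evaluating the defining relation $e^j.0=\sum_i a_{0,i,j}e^i$ (an equality of sequences) at the index $0$ gives
\[
e^j(0)=(e^j.0)(0)=\sum_i a_{0,i,j}\,e^i(0),
\]
since $(e^j.0)(0)=e^j(\base\cdot 0)=e^j(0)$. The right-hand side is exactly the $j$-th entry of the row vector $LA_0$, while the left-hand side is the $j$-th entry of $L$; hence $LA_0=L$. Nothing here used the particular basis, so \emph{every} representation arising from a generating family via this construction is already insensitive to the leftmost zeroes.

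It remains to argue that this representation is reduced, which I expect to be the only delicate point. I would verify that it is simultaneously reachable and observable and then invoke the standard minimality criterion for recognizable series \citep{BeRe88}: a representation is reduced precisely when these two conditions hold, because the Hankel matrix $(S,vw)=LA_vA_wC$ factors as the composite of the observation map $x\mapsto(LA_vx)_v$ and the reachability map $w\mapsto A_wC$. Reachability is immediate here, since $A_wC$ is the coordinate vector of the shift $u.w$ and the $e^i$ were chosen to be a basis of the space $M$ spanned by these shifts, so the $A_wC$ span $\bC^d$. For observability, note that for $v\in M$ with coordinate vector $x$ one has $LA_wx=v(\Bint w)$, and as $w$ ranges over $\cB^*$ the integers $\Bint w$ exhaust $\bN_{\geq0}$; hence a vector annihilated by all $LA_w$ corresponds to a sequence vanishing identically, i.e. $x=0$. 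Thus the representation is reachable and observable, therefore reduced, and it satisfies $LA_0=L$, which proves the lemma. The main thing to get right in the final write-up is that ``reduced'' is understood with respect to the formal series $S$ produced by this very construction --- the leftmost-zero-insensitive extension of $u$ --- so that the minimality of $d=\dim M$ is genuinely the intrinsic Hankel rank of $S$.
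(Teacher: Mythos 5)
Your proof is correct, and its core mechanism is the same as the paper's: the identity $LA_0=L$ comes from evaluating the relation that defines $A_0$ at the index $0$ and using $\base\cdot 0=0$ (the paper phrases this as $\Lambda_0A_0=\Lambda_0$ for the row-vector sequence $\Lambda_n=(v^1_n,\ldots,v^d_n)$, which is exactly your computation $(e^j.0)(0)=e^j(0)$). The differences are organizational but worth noting. The paper builds a specific basis of the shift-span $M$ by a greedy scan of the subsequences $(u_n)$, $(u_{2n})$, $(u_{2n+1})$, $(u_{4n})$, \dots, which has the side benefit of making $C$ the first canonical basis vector; you correctly observe that any basis of $M$ (indeed any generating family, since the evaluation-at-$0$ argument never uses independence) yields the identity, so the greedy selection is inessential for the insensitivity claim. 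More substantially, the paper's proof never addresses the parenthetical ``(reduced)'' at all, whereas you prove it: your verification that the representation is reachable (the vectors $A_wC$ are the coordinate vectors of the shifts $u.w$, which span $M$) and observable ($LA_wx=v(\Bint{w})$ and the integers $\Bint{w}$ exhaust $\bN_{\geq0}$), combined with the minimality criterion of \citep{BeRe88}, is a genuine addition, and your closing caveat --- that reducedness must be read relative to the formal series $S$ obtained as the zero-insensitive extension of $u$ --- is precisely the point that makes the parenthetical claim meaningful. So your write-up is not only correct but proves slightly more than the paper's own proof does.
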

This point is described in~\citep[sec.~4.2]{Dumas93a} (where an
insensitive to the leftmost zeroes representation is termed standard),
but we give a proof because it explains how most of the linear
representations in this paper are obtained.
\begin{proof}
  For the sake of simplicity, let us assume that the radix is
  $\base=2$. The hypothesis is that all subsequences $(u_{2^kn+r})$,
  $k\geq 0$, $0\leq r<2^k$, of the radix-rational sequence~$(u_n)$
  generate a finite dimensional vector space~$U$. Let~$d$ be the
  dimension of that vector space. (For the null sequence, we have
  $d=0$. We exclude this case.) We consider the subsequences $(u_n)$,
  $(u_{2n})$, $(u_{2n+1})$, $(u_{4n})$, $(u_{4n+1})$, $(u_{4n+2})$,
  $\ldots$ in that order and the dimension~$d_j$ of the vector space
  generated by the~$j$ first subsequences (with $d_{-1}=0$). We select
  the subsequences $v^1$, $v^2$, $\ldots$, $v^d$ such that the
  dimension increases by~$1$ when they encountered in the previous
  list. The family $(v^{\ell})_{1\leq \ell\leq d}$ is a basis
  of~$U$. The sequence~$(u_n)$ expresses as a linear combination of
  $v^1$, $v^2$, $\ldots$, $v^d$ (the sequence $(u_n)$ is nothing
  but~$v^1$), and this gives the column vector~$C$ of the linear
  representation we are building. Next we consider the action of~$0$
  and~$1$ over the sequences $v^{\ell}$ defined by
  $(0.v^{\ell})_n=v^{\ell}_{2n}$ and $(1.v^{\ell})_n=v^{\ell}_{2n+1}$
  and we express the images in the basis $(v^{\ell})_{1\leq \ell\leq
  d}$. This gives the square matrices~$A_0$ and~$A_1$ of the
  representation. At last the row vector~$L$ is the vector of initial
  values~$v^{\ell}_0$. In that way, we have a linear representation of
  the sequence~$(u_n)$. This may be readily verified by considering
  the sequence of row vectors $\Lambda_n=(v^1_n,\, v^2_n,\ \ldots\
  v^d_n)$ which satisfies $u_n=\Lambda_n C$, $\Lambda_{2n}=\Lambda_n
  A_0$, $\Lambda_{2n+1}=\Lambda_nA_1$, $L=\Lambda_0$. (Note that
  \citet{AlSh03} use a column vector for~$\Lambda_n$, hence the
  transposed matrices for~$A_0$, $A_1$.)  Moreover the equation
  $(0.v^{\ell})_0=v^{\ell}_{0}$, that is $\Lambda_0A_0=\Lambda_0$,
  renders evident the formula $LA_0=L$.  The proof would be more
  enlightening by considering a binary tree and a prefix part of the
  monoid of words, but we will not insist on this point.
\end{proof}
We do not have used this hypothesis of insensitivity in the
proofs because it is unsuitable for the vector-valued version we have
elaborated, but most of the examples we have given have this property
and this saves us to compute the first term of
Formula~\eqref{Du08vers09:eq:runningsumu}.

\subsection{Periodicity versus pseudo-periodicity}
The occurrence of periodic functions in logarithmic scale is common in
the asymptotic study of radix-rational
sequences~\citep{Coquet83,Delange75,DuSiTh99,FlGo94,FlGrKiPrTi94,OsSh89}. Nevertheless
the rising of periodic functions, if true in practical examples, is
wrong in full generality. The reason is the following: in common
examples the eigenvalues of the matrix~$Q$ have arguments which are
commensurable with~$\pi$. As we have already explained, a change of
radix permits to consider that the useful eigenvalues are all positive
numbers. As a consequence functions $\Phi$ are periodic, because
they depend only on the fractional part of the variable~$t$. But in
the general case they write something like $\Phi(t)=\omega^{\lfloor
t\rfloor}\rho^{1-\{t\}} F(\base^{\{t\}-1})$ and if~$\omega$ does not
write $\omega=e^{i\vartheta}$ with $\vartheta/\pi$ a rational number,
the function $\Phi$ is not periodic but only pseudo-periodic. Below we
exemplify this phenomenon.

\begin{figure}
\begin{center}
\includegraphics[width=0.5\linewidth]{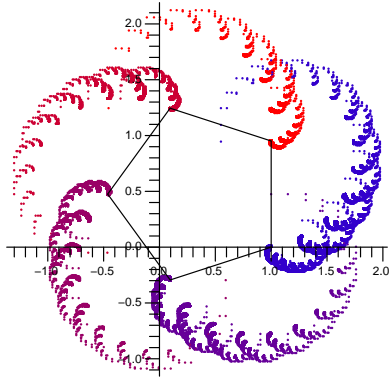}
\end{center}
\caption{\label{Du08vers09:fig:rosette5}The empirical rosette of
  Ex.~\ref{Du08vers09:ex:rosette} with $\vartheta=2\pi/5$. For each
  frond of the rosette, the index~$N$ goes from $2^K$
  to~$2^{K+1}-1$. Here $K$ goes from~$11$ to~$15$ and the color from
  red to blue. The vertices of the pentagon are the
  points~$\boldGamma(K)$ with~$K$ integer.}
\end{figure}

\begin{example}[Rosettes]\label{Du08vers09:ex:rosette}
Let us consider the linear representation for radix $\base=2$ and
dimension~$2$,
\[
A_0=\left(\begin{array}{cc}
\cos{\vartheta} & 0 \\ 0 & \cos{\vartheta}
	\end{array}\right),\qquad
A_1=\left(\begin{array}{cc}
0 & -\sin{\vartheta} \\ \sin{\vartheta} & 0
	\end{array}\right),
\]
where~$\vartheta$ is a real number restricted by the condition
$\vartheta\not\equiv0\bmod \pi/2$ to avoid degeneracy. We do not
define the row vector~$L$, because we are interested in the
vector-valued sequence
\[
\boldSigma_N=\sum_{0\leq n\leq N}A_wC
\]
($w$ is the binary expansion of the integer~$n$). With regard to the
column vector~$C$, we note that both matrices~$A_0$ and~$A_1$ write
$A_0=\cos\vartheta\,\operatorname{I}_2$,
$A_1=\sin\vartheta\,R_{\pi/2}$ (notation of
Ex.~\ref{Du08vers09:ex:geometricex}: $R_{\varphi}$ is the rotation
matrix with angle~$\varphi$).  As a consequence they commute with all
rotation matrices and if~$C$ is changed into $R_{\varphi}C$, then
$\boldSigma_N$ is changed into $R_{\varphi}\boldSigma_N$. Evidently we have a
similar formula with dilations and so we may take as vector~$C$ the
first vector~$E_1$ of the canonical basis.

The joint spectral radius is
$\lambdajsr=\max(|\cos{\vartheta}|,|\sin{\vartheta}|)$, because
$\norm{A_0}_1=|\cos{\vartheta}|$, $\norm{A_1}_1=|\sin{\vartheta}|$
and~$A_0^2$ and~$A_1^2$ admit respectively the
eigenvalues~$\cos^2{\vartheta}$ and~$-\sin^2{\vartheta}$.  The
matrix~$Q=A_0+A_1$ is the rotation matrix~$R_{\vartheta}$ and its
eigenvalues are $\rho\omega_{\pm}$ with $\rho=1$ and
$\omega_{\pm}=e^{\pm i\vartheta}$.  We take as eigenvectors
respectively for $\omega_+$ and $\omega_-$ the vectors
$V_+=\left(\begin{array}{cc} 1 & -i\end{array}\right)^\trpse$, $
V_-=\left(\begin{array}{cc} 1 & i
\end{array}\right)^\trpse$. The vector~$C$ expands as $C=(V_++V_-)/2$.
We have $\lambdajsr<\rho$ and
Theorem~\ref{Du08vers09:thm:sequenceasymptoticexpansion} applies. The
asymptotic expansion for the sum~$\boldSigma_{+,N}$ associated with $V_+$
writes
\[
\boldSigma_{+,N}=(\operatorname{I}_2-A_0)\sum_{k=0}^KQ^kV_++
e^{i(K+1)\vartheta}\boldF_+(2^{\{t\}-1})+O(\lambda_*^K),
\]
where~$\boldF_{\pm}$ is defined by the dilation equation
\begin{multline*}
\left\{\begin{array}{l}
F_{{\pm},1}(x)=\cos{\vartheta}e^{\mp i\vartheta}F_{{\pm},1}(2x),\\
F_{{\pm},2}(x)=\cos{\vartheta}e^{\mp i\vartheta}F_{{\pm},2}(2x),
\end{array}\right.\quad\text{for $0\leq x<\frac{1}{2}$;}\\
\left\{\begin{array}{l}
F_{{\pm},1}(x)=\cos{\vartheta}e^{\mp i\vartheta}-\sin{\vartheta}e^{\mp i\vartheta}F_{{\pm},2}(2x-1),\\
F_{{\pm},2}(x)=\mp i\cos{\vartheta}e^{\mp i\vartheta}+\sin{\vartheta}e^{\mp i\vartheta}F_{{\pm},1}(2x-1),
\end{array}\right.\quad\text{for $\frac{1}{2}\leq x<1$,}
\end{multline*}
with the boundary conditions $\boldF_\pm(0)=0$,
$\boldF_\pm(1)=V_\pm$. Both vector-valued functions $\boldF_\pm$ are
conjugate. Emphasizing their real and imaginary parts~$\boldU$
and~$\pm\boldV$ and carrying on with the computation we obtain for the
sum~$\boldSigma_N$ associated with $C=E_1$ the asymptotic expansion
\[
\boldSigma_N\mathop{=}_{N\to+\infty}
\boldGamma(t)+O(\lambda_*^K),
\]
with $\boldGamma:t\mapsto \Omega+\boldA(t)+\boldB(t)$,
\[
\Omega=
\sin\frac{\vartheta}{2}
\left(\begin{array}{c}
\sin\vartheta/2\\
\cos\vartheta/2
\end{array}\right),\qquad
\boldA(t)=
\sin\frac{\vartheta}{2}
\left(\begin{array}{c}
\cos((K+1/2)\vartheta-\pi/2)\\
\sin((K+1/2)\vartheta-\pi/2)
\end{array}\right),
\]
\[
\boldB(t)=
\cos ((K+1)\vartheta)\,\boldU(2^{\{t\}-1})
-
\sin ((K+1)\vartheta)\,\boldV(2^{\{t\}-1}).
\]
Obviously an increasing of~$t$ (and consequently of~$K$) by~$1$
rotates the vector~$\boldA(t)$ by~$\vartheta$. The change for~$\boldB$
is less obvious. From the real point of view, the dilation equation
for~$\boldF_+$ rewrites
\[
\boldPhi(x)=T_0\boldPhi(2x)\qquad\text{for $0\leq x<\frac{1}{2}$},\qquad
\boldPhi(x)=T_0W+T_1\boldPhi(2x-1)\qquad
\text{for $\frac{1}{2}\leq x<1$,}
\]
where~$\boldPhi$ is the $4$-dimensional vector-valued function
$\boldPhi=\left(\begin{array}{cc}\boldU & \boldV
\end{array}\right)^\trpse$. The matrices~$T_0$ and~$T_1$ have the
following expressions
\[
T_0=
\left(\begin{array}{cccc}
\cos^2{\vartheta} & 0 & \cos{\vartheta}\sin{\vartheta} & 0 \\
0 & \cos^2{\vartheta} & 0 & \cos{\vartheta}\sin{\vartheta} \\
-\cos{\vartheta}\sin{\vartheta} & 0 & \cos^2{\vartheta} & 0 \\
0 & -\cos{\vartheta}\sin{\vartheta} &  0 & \cos^2{\vartheta}
	  \end{array}\right)
=\cos{\vartheta}\,R_{-\vartheta}\otimes
\operatorname{I}_2,
\]
\[
T_1=\left(
\begin{array}{cccc}
0 & -\cos{\vartheta}\sin{\vartheta} & 0 & -\sin^2{\vartheta} \\
\cos{\vartheta}\sin{\vartheta} & 0 & \sin^2{\vartheta} & 0 \\
 0 & \sin^2{\vartheta} &0 & -\cos{\vartheta}\sin{\vartheta}\\
-\sin^2{\vartheta} & 0 &\cos{\vartheta}\sin{\vartheta} & 0
\end{array}
\right)=\sin{\vartheta}\,R_{-\vartheta}\otimes R_{\pi/2}.
\]
The boundary conditions are $\boldPhi(0)=0$ and
$\boldPhi(1)=W$ with $W=\left(\begin{array}{cccc} 1 & 0 & 0 & -1\end{array}\right)^\trpse$.
Let~$P$ be the matrix
\[
P=\left(\begin{array}{cccc}
0 & 0 & 0 & -1\\
0 & 0 & 1 & 0 \\
0 & 1 & 0 & 0\\
-1 & 0 & 0 & 0
	\end{array}\right).
\]
We verify $P^{-1}T_0P=T_0$, $P^{-1}T_1P=T_1$, $PW=W$. As a result
$P\boldPhi$ is a solution of the dilation equation. But the
dilation equation has a unique solution and we conclude that
$\boldPhi$ satisfies $P\boldPhi=\boldPhi$. This means
$-\boldV=R_{\pi/2}\boldU$. This shows the formula
$\boldB(t+1)=R_{\vartheta}\boldB(t)$. Eventually we see that
$\boldGamma(t+1)$ is the image of~$\boldGamma(t)$ by the rotation of
angle~$\vartheta$ about~$\Omega$.

Let us assume that~$\vartheta$ is commensurable with~$\pi$, and let us
write $\vartheta/\pi=p/q$ where $p/q$ is in lowest terms. We see
immediately that the arc $\boldGamma$ is $2q$-periodic, and even
$2^{1-\kappa}q$-periodic if~$\kappa$ is the dyadic valuation of~$p$. This
phenomenon is illustrated by
Fig.~\ref{Du08vers09:fig:rosette5}. Besides it is easy to verify
$\boldGamma(t+2^{-\kappa}q)+\boldGamma(t)=2\Omega$.

To the contrary when the ratio $\vartheta/\pi$ is irrational, the
rotation~$R_{\vartheta}$ leaves invariant the image of the arc
$\boldGamma$ but its order is infinite. The non-periodic character
of the arc~$\boldGamma$ is evident when we consider its value at integers,
\[
\boldGamma(K)=\Omega+
\cos\frac{\vartheta}{2}\left(\begin{array}{c}
\cos(K-1/2)\vartheta\\
\sin(K-1/2)\vartheta
\end{array}\right).
\]
Hence we see that the functions~$\Phi$ are not necessarily periodic,
even if they are frequently periodic in concrete examples.
\end{example}

\subsubsection*{\bf Acknowledgment} I am most grateful towards {\sc
  Fr\'ed\'eric Chyzak} ({\sc Algorithms Project, INRIA}) for his ever
patient ear during my effort to understand radix-rational sequences.

\footnotesize
\bibliographystyle{plainnat}
\bibliography{Du08}

\normalsize

\end{document}